\documentclass[UKenglish]{lipics-v2016}
\bibliographystyle{plainurl}

\setlength{\paperheight}{11in}
\setlength{\paperwidth}{8.5in}

\usepackage{amsmath}
\usepackage{listings}
\usepackage{array}
\usepackage{paralist}
\usepackage{url}
\usepackage{xspace}
\usepackage{smartref}
\usepackage{hyperref}

\lstset{mathescape=true, gobble=1, numbers=left, numberstyle=\tiny, basicstyle=\footnotesize, numberblanklines=false, escapeinside={//}{\^^M}, breaklines=true, keywordstyle=\bfseries, otherkeywords={for,each,in,true,false}, morekeywords={type,subtype,break,continue,if,else,end,loop,while,do,done,exit, when,then,return,read,and,or,not,for,each,boolean,procedure,invoke,next,iteration,until,goto}}

\newcommand{\func}{\textit} 

\newtheorem{Observation}{Observation}

\newcommand{\fakeparagraph}[1]{\noindent\textbf{\textit{#1}.\ }}

\usepackage{etoolbox}
\usepackage{environ}

\newtoggle{isfullver}
\togglefalse{isfullver}
\NewEnviron{fullver}{\iftoggle{isfullver}{\BODY\xspace}{}}
\NewEnviron{shortver}{\iftoggle{isfullver}{}{\BODY\xspace}}

\newtoggle{isthesis}
\togglefalse{isthesis}
\NewEnviron{thesisonly}{\iftoggle{isthesis}{\BODY\xspace}{}}
\NewEnviron{thesisnot}{\iftoggle{isthesis}{}{\BODY\xspace}}

\usepackage[usenames,dvipsnames,svgnames,table]{xcolor}
\definecolor{pdfbgcolor}{RGB}{180,180,180}


\usepackage{textcomp}

\usepackage{tabto}

\usepackage{framed}

\newcommand{\true}{\textsc{True}}
\newcommand{\false}{\textsc{False}}

\newcommand{\op}{\mbox{\sct -record}}
\newcommand{\rec}{\mbox{Data-record}}

\newcommand{\info}{\textit{des}}

\newcommand{\llt}{\func{LLX}}

\newcommand{\sct}{\func{SCX}}

\newcommand{\fail}{\textsc{Fail}}
\newcommand{\finalized}{\textsc{Finalized}}


\lstdefinestyle{nonumbers}{numbers=none}

\begin{document}


\title{Reuse, don't Recycle: Transforming Lock-free Algorithms that Throw Away Descriptors}

\author{Maya Arbel-Raviv}
\author{Trevor Brown}
\affil{Technion, Computer Science Department, Haifa, Israel\\
    \texttt{mayaarl@cs.technion.ac.il, me@tbrown.pro}}
\authorrunning{M. Arbel-Raviv and T. Brown} 

\Copyright{Maya Arbel-Raviv and Trevor Brown}

\subjclass{D.1.3 [Programming Techniques]: Concurrent Programming} 
\keywords{Concurrency, data structures, lock-free, synchronization, descriptors}

\maketitle

\begin{abstract}
In many lock-free algorithms, threads help one another, and each operation creates a \textit{descriptor} that describes how other threads should help it.
Allocating and reclaiming descriptors introduces significant space and time overhead.
We introduce the first \textit{descriptor} abstract data type (ADT), which captures the usage of descriptors by lock-free algorithms.
We then develop a \textit{weak descriptor} ADT which has weaker semantics, but can be implemented significantly more efficiently.
We show how a large class of lock-free algorithms can be transformed to use weak descriptors, and demonstrate our technique by transforming several algorithms, including the leading $k$-compare-and-swap ($k$-CAS) algorithm.
The original $k$-CAS algorithm allocates at least $k+1$ new descriptors \textit{per $k$-CAS}.
In contrast, our implementation allocates two descriptors \textit{per process}, and each process simply reuses its two descriptors.
Experiments on a variety of workloads show significant performance improvements over implementations that reclaim descriptors, and reductions of up to three orders of magnitude in peak memory usage.
\end{abstract}

\section{Introduction}

\begin{fullver}
As core counts continue to rise in modern processors, it is increasingly important for applications to be scalable.
Designing scalable concurrent software is notoriously difficult (even for experts), and programmers must rely on efficient concurrent library code to be effective.
Concurrent data structures represent some of the most fundamental building blocks in libraries, and are important in both theory and practice.
\end{fullver}

Many concurrent data structures use locks, but locks have downsides, such as susceptibility to convoying, deadlock and priority inversion.
Lock-free data structures avoid these downsides, and can be quite efficient.
They guarantee that some process will always makes progress, even if some processes halt unexpectedly.
This guarantee is typically achieved with \textit{helping}, which allows a process to harness any time that it would otherwise
spend waiting for another operation to complete. 
Specifically, whenever a process $p$ is prevented from making progress by another operation, it attempts to perform some (or all) of the work of the other operation, on behalf of the process that started it.
This way, even if the other process has crashed, its operation can be completed, so that it no longer blocks $p$. 

In simple lock-free data structures (e.g., \cite{Valois:1995,Harris:2001,Michael:2002,Natarajan:2014}), a process can determine how to help an operation that blocks it by inspecting a small part of the 
data structure.
\begin{thesisonly}
In more complex lock-free data structures (such as~\cite{Ellen:2010,Howley:2012,Shafiei:2013,Brown:2014}, and those implemented with \llt\ and \sct), processes publish \textit{descriptors} for their operations, and helpers look at these descriptors to determine how to help.
\end{thesisonly}
\begin{thesisnot}
In more complex lock-free data structures~\cite{Ellen:2010,Howley:2012,Shafiei:2013,Brown:2014}, processes publish \textit{descriptors} for their operations, and helpers look at these descriptors to determine how to help.
\end{thesisnot}
\xspace
A descriptor typically encodes a sequence of steps that a process should follow in order to complete the operation that created it.

Since lock-free algorithms cannot use mutual exclusion, many helpers can simultaneously help an operation, potentially long after the operation has terminated.
\begin{fullver}
Thus, to avoid situations where helpers read inconsistent data in a descriptor and corrupt the data structure, or try to access a descriptor that has been freed to the operating system and crash, each descriptor must remain consistent and accessible until it can be determined that no helper will ever access it again.
\end{fullver}
\begin{shortver}
Thus, to avoid situations where helpers read inconsistent data in a descriptor and corrupt the data structure, each descriptor must remain consistent and accessible until no helper will ever access it again.
\end{shortver}
This leads to \textit{wasteful algorithms} which allocate a new descriptor for each operation. 

\begin{thesisnot}
In this work, 
\end{thesisnot}
\begin{thesisonly}
In this chapter, 
\end{thesisonly}
we introduce two simple abstract data types (ADTs) that capture the way descriptors are used by wasteful algorithms (in Section~\ref{sec-wasteful}).
The \textit{immutable descriptor} ADT 
provides two operations, \func{CreateNew} and \func{ReadField}, which respectively create and initialize a new descriptor, and read one of its fields.
The \textit{mutable descriptor} ADT 
extends the immutable descriptor ADT by adding two operations: \func{WriteField} and \func{CASField}.
These allow a helper to modify fields of the descriptor (e.g., to indicate that the operation has been partially or fully completed).
We also give examples of wasteful algorithms whose usage of descriptors is captured by these ADTs.

The natural way to implement the immutable and mutable descriptor ADTs is to have \func{CreateNew} allocate memory and initialize it, and to have \func{ReadField}, \func{WriteField} and \func{CASField} perform a read, write and CAS, respectively.
%
Every implementation of one of these ADTs must eventually reclaim the descriptors it allocates.
\begin{fullver}
Otherwise, the algorithm would eventually exhaust memory, and either cause the system to crash, or block while waiting for more memory to become available, violating lock-free progress.
Usually, a lock-free memory reclamation algorithm is used for the reclamation of descriptors.
We briefly explain why reclaiming descriptors this way is expensive.
\end{fullver}
\begin{shortver}
Otherwise, the algorithm would eventually exhaust memory.
We briefly explain why reclaiming descriptors is expensive.
\end{shortver}

In order to safely free a descriptor, a process must know that the descriptor is no longer \textit{reachable}.
This means no other process can reach the descriptor by following pointers in shared memory \textit{or} in its private memory.
State of the art lock-free memory reclamation algorithms such as hazard pointers~\cite{Michael:2004} and DEBRA+~\cite{Brown:2015} can determine when no process has a pointer in its \textit{private} memory to a given object, but they typically require the underlying algorithm to identify a time $t$ after which the object is no longer reachable from \textit{shared} memory. 
%
In an algorithm where each operation removes all pointers to its descriptor from shared memory, $t$ is when $O$ completes.
However, in some algorithms (e.g.,~\cite{Brown:2013}), pointers to descriptors are ``lazily'' cleaned up by subsequent operations, so $t$ may be difficult to identify.
The overhead of reclaiming descriptors comes both from identifying $t$, and from actually running a lock-free memory reclamation algorithm.

Additionally, in some applications, such as embedded systems, it is important to have a small, predictable number of descriptors in the system. 
In such cases, one must use memory reclamation algorithms that aggressively reclaim memory to minimize the number of objects that are waiting to be reclaimed at any point in time. 
Such algorithms incur high overhead.
For example, hazard pointers can be used to maintain a small memory footprint, but a process must perform costly memory fences \textit{every} time it tries to access a new descriptor.


To circumvent the aforementioned problems, we introduce a \textit{weak descriptor} ADT (in Section~\ref{sec-weak-descriptors}) that has slightly \textit{weaker semantics} than the mutable descriptor ADT, but can be implemented \textit{without memory reclamation}. 
The crucial difference 
is that each time a process invokes \func{CreateNew} to create a new descriptor, it \textit{invalidates} all of its previous descriptors.
An invocation of \func{ReadField} on an invalid descriptor \textit{fails} and returns a special value $\bot$.
Invocations of \func{WriteField} and \func{CASField} on invalid descriptors have no effect.
We believe the weak descriptor ADT can be useful in designing new lock-free algorithms, since an invocation of \func{ReadField} that returns $\bot$ can be used to inform a helper that it no longer needs to continue helping (making further accesses to the descriptor unnecessary).

We also identify a class of lock-free algorithms that use the descriptor ADT, and which can be \textit{transformed} to use the weak descriptor ADT (in Section~\ref{sec-weak-transformation}). 
At a high level, these are algorithms in which (1) each operation creates a descriptor and invokes a \func{Help} function on it, and (2) \func{ReadField}, \func{WriteField} and \func{CASField} operations occur only inside invocations of \func{Help}.
Intuitively, the fact that these operations occur only in \func{Help} makes it easy to determine how the transformed algorithm should proceed when it performs an invalid operation: the operation being helped must have already terminated, so it no longer needs help.
We prove correctness for our transformation, and demonstrate
its use by transforming a wasteful implementation of a double-compare-single-swap (DCSS) primitive~\cite{Harris:2002}.

We then present an extension to our weak descriptor ADT, and show how algorithms that perform \func{ReadField} operations \textit{outside} of \func{Help} can be transformed to use this extension (in Section~\ref{sec-adt-extended}).
We prove correctness for the transformation, and demonstrate its use by transforming wasteful implementations of a $k$-compare-and-swap ($k$-CAS) primitive~\cite{Harris:2002} and the LLX and SCX primitives of Brown et~al.~\cite{Brown:2013}.
These primitives can be used to implement a wide variety of advanced lock-free data structures.
For example, LLX and SCX have been used to implement lists, chromatic trees, relaxed AVL trees, relaxed $(a,b)$-trees, relaxed $b$-slack trees and weak AVL trees~\cite{Brown:2014,BrownPhD,He:2016}.

We use mostly known techniques to produce an efficient, provably correct implementation of our extended weak descriptor ADT (in Section~\ref{sec-extended-impl}).
The high level idea is to (1) store a sequence number in each descriptor, (2) replace pointers to descriptors with \textit{tagged sequence numbers}, which contain a process name and a sequence number, and (3) increment the sequence number in a descriptor each time it is reused.

With this implementation, the transformed algorithms for $k$-CAS, and LLX and SCX, have some desirable properties.
In the original $k$-CAS algorithm, \textit{each operation attempt} allocates at least $k+1$ new descriptors.
In contrast, the transformed algorithm allocates only two descriptors \textit{per process, once, at the beginning of the execution}, and these descriptors are reused.
Similarly, in the original algorithm for LLX and SCX, each SCX operation creates a new descriptor, but the transformed algorithm allocates only one descriptor per process, at the beginning of the execution.
%
This entirely eliminates dynamic allocation \textit{and} memory reclamation for descriptors (significantly reducing overhead), and results in an extremely small descriptor footprint.

We present extensive experiments on a 64-thread AMD system and a 48-thread Intel system (in Section~\ref{sec-exp}).
Our results show that transformed implementations always perform at least as well as their wasteful counterparts, and \textit{significantly} outperform them in some workloads.
In a $k$-CAS microbenchmark, our implementation outperformed wasteful implementations using fast distributed epoch-based reclamation~\cite{Brown:2015}, hazard pointers~\cite{Michael:2004} and read-copy-update (RCU)~\cite{Desnoyers:2012} by up to 2.3x, 3.3x and 5.0x, respectively.
In a microbenchmark using a binary search tree (BST) implemented with LLX and SCX, our transformed implementation is up to 57\% faster than the next best wasteful implementation.

The crucial observation in this work is that, in algorithms where descriptors are used only to facilitate helping, a descriptor is no longer needed once its operation has terminated.
This allows a process to reuse a descriptor as soon as its operation finishes, instead of allocating a new descriptor for each operation, and waiting considerably longer (and incurring much higher overhead) to reclaim it using standard memory reclamation techniques.
The challenge in this work is to characterize the set of algorithms that can benefit from this observation, and to design and prove the correctness of a transformation that takes such algorithms and produces new algorithms that simply reuse a small number of descriptors.
As a result of developing this transformation, we also produce significantly faster implementations of 
$k$-CAS, and LLX and SCX.

\begin{thesisnot}
\begin{fullver}
\section{Model} \label{sec-model}


We study a shared memory system with $n$ processes numbered $1..n$.
Each process has a private memory, and there is a shared memory accessible by all processes.
Shared memory consists of base objects (such as read/write registers, compare-and-swap objects, or descriptors), each of which offers a set of operations.
%


A \textit{descriptor} object is \textit{valid} when it is first created, and becomes \textit{invalid} when it is invalidated by a \func{CreateNew} operation.
All other base objects are \textit{always valid}.
Any operation on a valid object (resp., an invalid object) is a valid operation (resp., an invalid operation).
In the following, the terms \textit{operation} and \textit{valid operation} refer to low-level atomic operations on base objects, and the term \textit{high-level operation} refers to an ADT operation that is being implemented \textit{using} base objects as atomic building blocks.

The remainder of the model is relevant mainly when we prove the correctness of our transformations.
	
A \emph{configuration} describes the state of all processes and base objects.
The \textit{state} of a process consists of \textit{private memory}, which is accessible only to the process, and a \textit{program counter}, which is a pointer to the step of the algorithm that the process will execute when it next takes a step.
In the \emph{initial} configuration, each process and base object is in its initial state.
An \emph{execution} $\pi$ is an alternating sequence of configurations
and steps, $C_0 \cdot s_1 \cdots s_i \cdot C_i\cdots$,
where $C_0$ is the initial configuration,
and each configuration $C_i$ is the result of executing step $s_i$
in configuration $C_{i-1}$.


An execution is \emph{linearizable}~\cite{Herlihy:1990} if it is possible to identify, for each high-level operation, a \emph{linearization point} that occurs during the high-level operation, so that the response of each high-level operation is the same as if it were performed atomically at its linearization point.
An algorithm is \emph{linearizable} if all possible executions are linearizable.
\end{fullver}

\end{thesisnot}

\section{Wasteful Algorithms} \label{sec-wasteful}
In this section, we describe two classes of lock-free wasteful algorithms, and give descriptor ADTs that capture their behaviour.
First, we consider algorithms with \textit{immutable} descriptors, which are not changed after they are initialized.
We then discuss algorithms with \textit{mutable} descriptors, which are modified by helpers.

For the sake of illustration, we start by describing one common way that lock-free wasteful algorithms are implemented. 
Consider a lock-free algorithm that implements a set of \textit{high-level} operations.
Each high-level operation consists of one or more \textit{attempts}, which either succeed, or fail due to contention.
Each high-level operation attempt accesses a set of objects (e.g., individual memory locations or nodes of a tree).
Conceptually, a high-level operation attempt locks a subset of these objects and then possibly modifies some of them. 
These locks are special: instead of providing exclusive access to a \textit{process}, they provide exclusive access to a \textit{high-level operation attempt}.
Whenever a high-level operation attempt by a process $p$ is unable to lock an object because it is already locked by another high-level operation attempt $O$, $p$ first \textit{helps} $O$ to complete, before continuing its own attempt or starting a new one.
By helping $O$ complete, $p$ effectively removes the locks that prevent it from making progress.
Note that $p$ is able to access objects locked for a different high-level operation attempt (which is not possible in traditional lock-based algorithms), but only for the purpose of helping the other high-level operation attempt complete.

We now discuss how helping is implemented.
Each high-level operation or operation attempt allocates a new \textit{descriptor} object, and fills it with information that describes any modifications it will perform.
This information will be used by any processes that help the high-level operation attempt.
For example, if the lock-free algorithm performs its modifications with a sequence of CAS steps, then the descriptor might contain the addresses, expected values and new values for the CAS steps.

A high-level operation attempt locks each object it would like to access by publishing pointers to its descriptor, typically using CAS.
Each pointer may be published in a dedicated field for descriptor pointers, or in a memory location that is also used to store application values.
For example, in the BST of Ellen~et~al., nodes have a separate field for descriptor pointers~\cite{Ellen:2010}, but in Harris' implementation of multi-word CAS from single-word CAS, high-level operations temporarily replace application values with pointers to descriptors~\cite{Harris:2002}.

When a process encounters a pointer $ptr$ to a descriptor (for a high-level operation attempt that is not its own), it may decide to help the other high-level operation attempt by invoking a function \textit{Help}$(ptr)$.
Typically, \textit{Help}$(ptr)$ is also invoked by the process that started the high-level operation.
That is, the mechanism used to help is the same one used by a process to perform its own high-level operation attempt.

\begin{fullver}
Wasteful algorithms typically assume that, whenever an operation attempt allocates a new descriptor, it uses fresh memory that has never previously been allocated.
In many algorithms, if this assumption is violated, then an ABA problem can occur.
For example, suppose a process $p$ reads an address $x$ and sees $A$, then performs a CAS that changes $x$ from $A$ to $C$, and interprets the success of this CAS to mean that $x$ contained $A$ at all times between the read and CAS.
If other processes change $x$ from $A$ to $B$ and from $B$ back to $A$ between $p$'s read and CAS, then $p$'s interpretation is invalid, and we say an ABA problem has occurred.
Of course, if memory used to store a descriptor is reclaimed, and later reallocated and used to store another descriptor, this will violate the assumption that each descriptor is allocated new memory which has never previously been allocated.
Note, however, that safe memory reclamation algorithms will reclaim a descriptor only if no process has, or can obtain, a pointer to it.
Thus, no process can tell whether a descriptor is allocated fresh memory or memory that was previously reclaimed, and the ABA problem cannot occur.
\end{fullver}

\begin{shortver}
Wasteful algorithms typically assume that, whenever an operation attempt allocates a new descriptor, it uses fresh memory that has never previously been allocated.
If this assumption is violated, then an \textit{ABA problem} may occur.
Suppose a process $p$ reads an address $x$ and sees $A$, then performs a CAS to change $x$ from $A$ to $C$, and interprets the success of the CAS to mean that $x$ contained $A$ at all times between the read and CAS.
If another process changes $x$ from $A$ to $B$ and back to $A$ between $p$'s read and CAS, then $p$'s interpretation is invalid, and an ABA problem has occurred.
Note that safe memory reclamation algorithms will reclaim a descriptor only if no process has, or can obtain, a pointer to it.
Thus, no process can tell whether a descriptor is allocated fresh or reclaimed memory. 
So, safe memory reclamation will not introduce ABA problems.
\end{shortver}


\subsection{Immutable descriptors} \label{sec-adt-immutable}

%
%

We give a straightforward \textit{immutable descriptor} ADT that captures the way that descriptors are used by the class of wasteful algorithms we just described.
A \textit{descriptor} has a set of fields, and each field contains a value. 
The ADT offers two operations: \func{CreateNew} and \func{ReadField}. 
\func{CreateNew} takes, as its arguments, a descriptor type and a sequence of values, one for each field of the descriptor. 
It returns a unique descriptor pointer $des$ that has never previously been returned by \func{CreateNew}.
Every descriptor pointer returned by \func{CreateNew} represents a new immutable descriptor object.
\func{ReadField} takes, as its arguments, a descriptor pointer $des$ and a field $f$, and returns the value of $f$ in $des$.

In wasteful algorithms, whenever a process wants to create a new descriptor, it simply invokes \func{CreateNew}.
Whenever a helper wants to access a descriptor, it invokes \func{ReadField}.

\subsubsection{Progress}
If the immutable descriptor ADT is implemented so that \func{CreateNew} allocates and initializes a new descriptor, and \func{ReadField} reads and returns a field of a descriptor, then its operations will be \textit{wait-free} (i.e., each operation will terminate after a finite number of its own steps).
However, wait-free descriptor operations are not necessary to guarantee lock-freedom for high-level operations that use descriptors. 
Instead, we simply require descriptor operations to be lock-free.
We now explain why this is sufficient to implement lock-free data structures.

Consider a lock-free algorithm that 
uses a wait-free implementation of the immutable descriptor ADT.
Suppose we transform this algorithm by replacing the wait-free implementation of the descriptor ADT with a \textit{lock-free} implementation.
We argue that the transformed algorithm remains lock-free.
In other words, we show that, if processes take infinitely many steps in the transformed algorithm, then infinitely many high-level operations complete.

In the original algorithm, if processes take infinitely many steps, then infinitely many high-level operations will complete.
The only steps we change to obtain the transformed algorithm are invocations of \func{CreateNew} and \func{ReadField}, some of which might no longer terminate.
Therefore, the only way the transformed algorithm can \textit{fail} to satisfy lock-freedom is if, eventually, all processes take steps only in non-terminating invocations of \func{CreateNew} and \func{ReadField}.
(Otherwise, processes take infinitely many steps of the original algorithm, so infinitely many high-level operations will succeed.)
In this case, only finitely many invocations of \func{CreateNew} and \func{ReadField} will terminate.
However, since \func{CreateNew} and \func{ReadField} are lock-free, infinitely many invocations of \func{CreateNew} and/or \func{ReadField} must terminate. 
Thus, a lock-free implementation of the immutable descriptor ADT is sufficient to implement lock-free algorithms.

\subsubsection{Example Algorithm: DCSS} \label{sec-dcss}
We use the double-compare single-swap (\func{DCSS}) algorithm of Harris et al.~\cite{Harris:2002} as an example of a lock-free algorithm that fits the preceding description.
Its usage of descriptors is easily captured by the immutable descriptor ADT.
A \func{DCSS}($a_1,e_1,a_2,e_2,n_2$) operation does the following \textit{atomically}.
It checks whether the values in addresses $a_1$ and $a_2$ are equal to a pair of expected values, $e_1$ and $e_2$.
If so, it stores the value $n_2$ in $a_2$ and returns $e_2$.
Otherwise it returns the current value of $a_2$.

\begin{figure}[tb]
\begin{minipage}{0.4825\textwidth}
\begin{lstlisting}[name=dcss,frame=single]
 $\func{DCSS}(a_1, e_1, a_2, e_2, n_2):$
   $des := \func{CreateNew}(\func{DCSSdes}, a_1, e_1, a_2, e_2, n_2)$//\label{code-dcss-throw-allocate}
   $fdes := flag(des)$// \label{code-dcss-throw-flag}
   loop //\label{code-dcss-throw-do}
     $r := \func{CAS}(a_2, e_2, fdes)$ //\label{code-dcss-throw-publish-cas}
     if $r\ \mbox{is flagged}$ then $\func{DCSSHelp}(r)$ //\label{code-dcss-throw-dcss-help}
     else exit loop //\label{code-dcss-throw-while}
   if $r = e_2$ then $\func{DCSSHelp}(fdes)$ //\label{code-dcss-throw-dcss-finish}
   return $r$ //\label{code-dcss-throw-dcss-return}

 $\func{DCSSRead}(addr):$
   loop
     $r := *addr$ //\label{code-dcss-throw-read-read}
     if $r\ \mbox{is flagged}$ then $\func{DCSSHelp}(r)$ //\label{code-dcss-throw-read-help}
     else exit loop
   return $r$ //\label{code-dcss-throw-read-return}
\end{lstlisting}
\end{minipage}
\hspace{0.05\linewidth}
\begin{minipage}{0.46\linewidth}
\begin{lstlisting}[name=dcss,frame=single]
 type $\func{DCSSdes}:$ $\{\vaddr_1, \vexp_1, \vaddr_2, \vexp_2, \vnew_2\}$



 $\func{DCSSHelp}(fdes):$
   $des := unflag(fdes)$
   $a_1 := \func{ReadField}(des, \vaddr_1)$ //\label{code-dcss-throw-help-read-a1}
   $a_2 := \func{ReadField}(des, \vaddr_2)$
   $e_1 := \func{ReadField}(des, \vexp_1)$
   if $*a_1 = e_1$ then //\label{code-dcss-throw-help-compare-a1}
     $n_2 := \func{ReadField}(des, \vnew_2)$
     $\func{CAS}(a_2, fdes, n_2)$ //\label{code-dcss-throw-help-cas-new}
   else
     $e_2 := \func{ReadField}(des, \vexp_2)$
     $\func{CAS}(a_2, fdes, e_2)$ //\label{code-dcss-throw-help-cas-old}
\end{lstlisting}
\end{minipage}
    \vspace{-4.5mm}
	\caption{Code for the \func{DCSS} algorithm of Harris et~al.~\cite{Harris:2002} using the \textit{immutable descriptor} ADT.}
    \label{code-throw-DCSS}
\end{figure}

Pseudocode for the \func{DCSS} algorithm appears in 
Figure~\ref{code-throw-DCSS}.
At a high level, \func{DCSS} creates a descriptor, and then attempts to lock $a_2$ by using CAS to replace the value in $a_2$ with a pointer to its descriptor.
Since the \func{DCSS} algorithm replaces values with descriptor pointers, it needs a way to distinguish between values and descriptor pointers (in order to determine when helping is needed).
So, it steals a bit from each memory location and uses this bit to \textit{flag} descriptor pointers. 

We now give a more detailed description.
\func{DCSS} starts by creating and initializing a new descriptor $des$ at line~\ref{code-dcss-throw-allocate}.
It then flags $des$ 
at line~\ref{code-dcss-throw-flag}.
We call the result \textit{fdes} a \textit{flagged pointer}.
\func{DCSS} then attempts to lock $a_2$ in the loop at lines~\ref{code-dcss-throw-do}-\ref{code-dcss-throw-while}.
In each iteration, it tries to store its flagged pointer in $a_2$ using CAS. 
If the CAS is successful, then the operation attempt invokes \func{DCSSHelp} to complete the operation (at line~\ref{code-dcss-throw-dcss-finish}).
Now, suppose the CAS fails.
Then, the \func{DCSS} checks whether its CAS failed because $a_2$ contained another \func{DCSS} operation's flagged pointer (at line~\ref{code-dcss-throw-dcss-help}).
If so, it invokes \func{DCSSHelp} to help the other \func{DCSS} complete, and then retries its CAS.
\func{DCSS} repeatedly performs its CAS (and helping) until the \func{DCSS} either succeeds, or fails because $a_2$ did not contain~$e_2$.

\func{DCSSHelp} takes a flagged pointer $fdes$ as its argument, and begins by unflagging $fdes$ (to obtain the actual descriptor pointer for the operation).
Then, it reads $a_1$ and checks whether it contains $e_1$ (at line~\ref{code-dcss-throw-help-compare-a1}).
If so, it uses CAS to change $a_2$ from $fdes$ to $n_2$, completing the \func{DCSS} (at line~\ref{code-dcss-throw-help-cas-new}).
Otherwise, it uses CAS to change $a_2$ from $fdes$ to $e_2$, effectively aborting the \func{DCSS} (at line~\ref{code-dcss-throw-help-cas-old}).
Note that this code is executed by the process that created the descriptor, and also possibly by several helpers.
Some of these helpers may perform a CAS at line~\ref{code-dcss-throw-help-compare-a1} and some may perform a CAS at line~\ref{code-dcss-throw-help-cas-new}, but only the first of these CAS steps can succeed.

When a program uses DCSS, some addresses can contain either values or descriptor pointers. 
So, each read of such an address must be replaced with an invocation of a function called \func{DCSSRead}.
\func{DCSSRead} takes an address $addr$ as its argument, and begins by reading $addr$ (at line~\ref{code-dcss-throw-read-read}).
It then checks whether it read a descriptor pointer (at line~\ref{code-dcss-throw-read-help}) and, if so, invokes \func{DCSSHelp} to help that \func{DCSS} complete.
\func{DCSSRead} repeatedly reads and performs helping until it sees a value, which it returns (at line~\ref{code-dcss-throw-read-return}).

\subsection{Mutable descriptors} \label{sec-adt-mutable}
In some more advanced lock-free algorithms, each descriptor also contains information about the \textit{status} of its high-level operation attempt, and this status information is used to coordinate helping efforts between processes.
Intuitively, the status information gives helpers some idea of what work has already been done, and what work remains to be done.
Helpers use this information to direct their efforts, and update it as they make progress.
For example, the state information might simply be a bit that is set (by the process that started the high-level operation, or a helper) once the high-level operation succeeds.

As another example, in an algorithm where high-level operation attempts proceed in several phases, the descriptor might store the current phase, which would be updated by helpers as they successfully complete phases.
Observe that, since lock-free algorithms cannot use mutual exclusion, helpers often use CAS to avoid making conflicting changes to status information, which is quite expensive.
Updating status information may introduce contention.
Even when there is no contention, it adds overhead.
Lock-free algorithms typically try to minimize updates to status information.
Moreover, status information is usually simplistic, and is encoded using a small number of bits.

Status information might be represented as a single field in a descriptor, or it might be distributed across several fields.
Any fields of a descriptor that contain status information are said to be \textit{mutable}.
All other fields are called \textit{immutable}, because they do not change during an operation.

\fakeparagraph{Mutable descriptor ADT}
%
%
We now extend the immutable descriptor ADT to provide operations for changing (mutable) fields of descriptors.
The \textit{mutable descriptor} ADT offers four operations: \func{CreateNew}, \func{WriteField}, \func{CASField} and \func{ReadField}.
The semantics for \func{CreateNew} and \func{ReadField} are the same as in the immutable descriptor ADT.
%
\func{WriteField} takes, as its arguments, a descriptor pointer $des$, a field $f$ and a value $v$.
It stores $v$ in field $f$ of $des$. 
\func{CASField} takes, as its arguments, a descriptor pointer $des$, a field $f$, an expected value $exp$ and a new value $v$.
    Let $v_f$ be the value of $f$ in $des$ just before the \func{CASField}.
    If $v_f = exp$, then \func{CASField} stores $v$ in $f$.
    \func{CASField} returns $v_f$.
%
As in the immutable descriptor ADT, we require the operations of the mutable descriptor ADT to be lock-free.

\subsubsection{Example Algorithm: $k$-CAS} \label{sec-kcas}
\begin{figure}[th!]
\begin{lstlisting}[name=kcas,frame=single]
 type $\func{k-CASdes}: \{\vstate, \vaddr_1, \vexp_1, \vnew_1,$ $\vaddr_2, \vexp_2, \vnew_2, \ldots, \vaddr_k, \vexp_k, \vnew_k \}$
  
 //\com \textbf{k-CAS ADT operations}
 $\func{k-CAS}(a_1, e_1, n_1, a_2, e_2, n_2, \ldots, a_k, e_k, n_k):$
   $des := \func{CreateNew}(\func{k-CASdes}, Undecided, a_1, e_1, n_1,\ldots )$ //\label{line-kcasthrow-kcas-createnew}
   $fdes :=$// flagged version of $des$\label{line-kcasthrow-kcas-flag}
   return $\func{k-CASHelp}(fdes)$//\label{line-kcasthrow-kcas-help}
  
 $\func{k-CASRead}(addr):$
   loop
     $r := \func{DCSSRead}(addr)$//\label{line-kcasthrow-kcasread-dcssread}
     if $r \mbox{ is flagged}$ then $\func{k-CASHelp}(r)$ //\label{line-kcasthrow-kcasread-help}
     else exit loop
   return $r$ 
 
 //\com \textbf{Private procedures}
 $\func{k-CASHelp}(fdes):$
   $des :=$// remove the flag from $fdes$ \label{line-kcasthrow-kcashelp-unflag}
   //\com Use DCSS to store $fdes$ in each of $a_1, a_2, \ldots, a_k$ 
   //\com \textit{only} if $des$ has $\vstate$ $\textit{Undecided}$ and $a_i = e_i$ for all $i$
   if $\func{ReadField}(des,\vstate) = \textit{Undecided}$ then//\label{line-kcasthrow-kcashelp-phase1-ifundecided}
     $state := Succeeded$//\label{line-kcasthrow-kcashelp-phase1-setstatesucceeded}
     for $i=1 ... k$ do//\label{line-kcasthrow-kcashelp-phase1-start}
 $retry\_entry$: //\label{line-kcasthrow-kcashelp-phase1-labelretry}
       $a_1 := \func{ReadField}(des,\vstate)$ //\label{line-kcasthrow-kcashelp-phase1-readstate}
       $a_2 := \func{ReadField}(des,\vaddr_i)$ //\label{line-kcasthrow-kcashelp-phase1-readaddr}
       $e_2 := \func{ReadField}(des,\vexp_i)$ //\label{line-kcasthrow-kcashelp-phase1-readexp}
       $val := \func{DCSS}(\langle des,\vstate \rangle,\textit{Undecided},a_2,e_2,fdes)$ //\label{line-kcasthrow-kcashelp-phase1-dcss}
       if $val \mbox{ is flagged}$ then //\label{line-kcasthrow-kcashelp-phase1-iskcas}
         if $val \neq fdes$ then //\label{line-kcasthrow-kcashelp-phase1-ifdifferentkcas}
           $\func{k-CASHelp}(val)$//\label{line-kcasthrow-kcashelp-phase1-help}
           goto $retry\_entry$//\label{line-kcasthrow-kcashelp-phase1-gotoretry}
       else
         if $val \neq e_2$ then //\label{line-kcasthrow-kcashelp-phase1-ifnotexpected}
           $state := Failed$ //\label{line-kcasthrow-kcashelp-phase1-setstatefailed}
           break //\label{line-kcasthrow-kcashelp-phase1-break}
     $\func{CASField}(des,\vstate,\textit{Undecided},state)$ //\label{line-kcasthrow-kcashelp-phase1-casstate}\label{line-kcasthrow-kcashelp-phase1-end}

   //\com Replace $fdes$ in $a_1, ..., a_k$ with $n_1, ..., n_k$ or $e_1, ..., e_k$
   $state : = \func{ReadField}(des,\vstate)$ //\label{line-kcasthrow-kcashelp-phase2-readstate}
   for $i=1 ... k$ do //\label{line-kcasthrow-kcashelp-phase2-start}
     $a = \func{ReadField}(des,\vaddr_i)$ //\label{line-kcasthrow-kcashelp-phase2-readaddr}
     if $state = Succeeded$ then //\label{line-kcasthrow-kcashelp-ifsucceeded}
       $new := \func{ReadField}(des,\vnew_i)$ //\label{line-kcasthrow-kcashelp-phase2-readnew}
     else
       $new := \func{ReadField}(des,\vexp_i)$ //\label{line-kcasthrow-kcashelp-phase2-readexp}
     $\func{CAS}(a,fdes,new)$ //\label{line-kcasthrow-kcashelp-phase2-casnew}\label{line-kcasthrow-kcashelp-phase2-end}
   return $(state = Succeeded)$ //\label{line-kcasthrow-kcashelp-return}
\end{lstlisting}
\vspace{-4.5mm}
\caption{Code for the \func{k-CAS} algorithm of Harris et~al.~\cite{Harris:2002} using the \textit{mutable descriptor} ADT.}
\label{code-throw-kCAS}
\vspace{-3mm}
\end{figure}
%
A \textit{$k$-CAS}($a_1, ..., a_k,$ $e_1, ..., e_k,$ $n_1, ..., n_k$) operation atomically does the following.
First, it checks if each address $a_i$ contains its expected value $e_i$.
If so, it writes a new value $n_i$ to $a_i$ for all $i$ and returns true.
Otherwise it returns false.

The $k$-CAS algorithm of Harris~et~al.~\cite{Harris:2002} is an example of a lock-free algorithm that has descriptors with mutable fields.
At a high level, 
a $k$-CAS operation $O$ in this algorithm 
starts by creating a descriptor that contains its arguments. 
It then tries to lock each location $a_i$ \textit{for the operation} $O$ by changing the contents of $a_i$ from $e_i$ to $des$, where $des$ is a pointer to $O$'s descriptor.
If it successfully locks each location $a_i$, then it changes each $a_i$ from $des$ to $n_i$, and returns true.
If it fails because $a_i$ is locked for another operation, then it helps the other operation to complete (and unlock its addresses), and then tries again.
If it fails because $a_i$ contains an application value different from $e_i$, then the $k$-CAS fails, and unlocks each location $a_j$ that it locked by changing it from $des$ back to $e_j$, and returns false.
(The same thing happens if $O$ fails to lock $a_i$ because the operation has already terminated.)

We now give a more detailed description of the algorithm.
Pseudocode appears in Figure~\ref{code-throw-kCAS}.
A $k$-CAS operation creates its descriptor at line~\ref{line-kcasthrow-kcas-createnew}, and then 
invokes a function \func{k-CASHelp} to complete the operation.
%
In addition to the arguments to its $k$-CAS operation, a $k$-CAS descriptor contains a 2-bit \textit{state} field that initially contains \textit{Undecided} and is changed to \textit{Succeeded} or \textit{Failed} depending on how the operation progresses.
This \textit{state} field is used to coordinate helpers.

Let $p$ be a process performing (or helping) a $k$-CAS operation $O$ that created a descriptor $d$.
If $p$ fails to lock some address $a_i$ in $d$, then $p$ attempts to change the \textit{state} of $d$ using CAS from \textit{Undecided} to \textit{Failed}.
On the other hand, if $p$ successfully locks each address in $d$, then $p$ attempts to change the \textit{state} of $d$ using CAS from \textit{Undecided} to \textit{Succeeded}.
Since the \textit{state} field changes only from \textit{Undecided} to either \textit{Failed} or \textit{Succeeded}, only the first CAS on the state field of $d$ will succeed.
The $k$-CAS implementation then uses a lock-free DCSS primitive (the one presented in Section~\ref{sec-dcss}) to ensure that $p$ can lock addresses for $O$ \textit{only} while $d$'s \textit{state} is \textit{Undecided}.
This prevents helpers from erroneously performing successful CAS steps after the $k$-CAS operation is already over.

Recall that the DCSS algorithm allocates a descriptor for each DCSS operation.
A $k$-CAS operation performs potentially \textit{many} DCSS operations (at least $k$ for a successful $k$-CAS), and also allocates its own $k$-CAS descriptor.
The $k$-CAS algorithm need not be aware of DCSS descriptors (or of the bit reserved in each memory location by the DCSS algorithm to flag values as DCSS descriptor pointers), since it can simply use the \func{DCSSRead} procedure described above whenever it accesses a memory location that might contain a DCSS descriptor.
However, 
the $k$-CAS algorithm performs DCSS on the \textit{state} field of a $k$-CAS descriptor, which is 
accessed using the $k$-CAS descriptor's \func{ReadField} operation.
To allow DCSS to access the \textit{state} field, we must modify DCSS slightly.
First, instead of passing an address $a_1$ to DCSS, we pass a pointer to the $k$-CAS descriptor and the name of the \textit{state} field (at line~\ref{line-kcasthrow-kcashelp-phase1-dcss} of Figure~\ref{code-throw-kCAS}).
Second, we replace the read of $addr_1$ in DCSS (at line~\ref{code-dcss-throw-help-compare-a1} of Figure~\ref{code-throw-DCSS}) 
with an invocation of \func{ReadField}.

\newpage

Since $k$-CAS descriptor pointers are temporarily stored in memory locations that normally contain application values, the $k$-CAS algorithm needs a way to determine whether a value in a memory location is an application value or a $k$-CAS descriptor pointer.
In the DCSS algorithm, the solution was to reserve a bit in each memory location, and use this bit to \textit{flag} the value contained in the location as a pointer to a DCSS descriptor.
Similarly, the $k$-CAS algorithm reserves a bit in each memory location to flag a value as a $k$-CAS descriptor pointer.
The $k$-CAS and DCSS algorithms need not be aware of each other's reserved bits, but they should not reserve the same bit (or else, for example, a DCSS operation could encounter a $k$-CAS descriptor pointer, and interpret it as a DCSS descriptor pointer).

When the $k$-CAS algorithm is used, some memory addresses may contain either values or descriptor pointers, so reads of such addresses must be replaced by a \func{k-CASRead} operation.
This operation reads an address, and checks whether it contains a $k$-CAS descriptor pointer.
If so, it helps the $k$-CAS operation to complete, and tries again.
Otherwise, it returns the value it read.
For further details, on the $k$-CAS algorithm refer to~\cite{Harris:2002}.

\section{Weak descriptors} \label{sec-weak-descriptors}
In this section we present a \textit{weak descriptor} ADT that has weaker semantics than the mutable descriptor ADT, but can be implemented more efficiently (in particular, without requiring any memory reclamation for descriptors).
We identify a class of algorithms that use the mutable descriptor ADT, and which can be transformed to use the weak descriptor ADT, instead.

We first discuss a restricted case where operation attempts only create a single descriptor, and we give an ADT, transformation and proof for that restricted case.
(In the next section, we describe how the ADT and transformation can be modified slightly to support operation attempts that create multiple descriptors.)


\subsection{Weak descriptor ADT} \label{sec-adt-weak}

The weak descriptor ADT is a variant of the mutable descriptor ADT that allows some operations to \textit{fail}. 
%
To facilitate the discussion, we introduce the concept of descriptor validity.
Let $des$ be a pointer returned by a \func{CreateNew} operation $O$ by a process $p$, and $d$ be the descriptor pointed to by $des$.
In each configuration, $d$ is either \textbf{valid} or \textbf{invalid}.
Initially, $d$ is valid.
If $p$ performs another \func{CreateNew} operation $O'$ \textit{after} $O$, then $d$ becomes invalid immediately after $O'$ (and will never be valid again).

We say that a \func{ReadField}$(des, ...)$, \func{WriteField}$(des, ...)$ or \func{CASField}$(des, ...)$ operation is performed \textbf{on a descriptor} $d$, where $des$ is a pointer to $d$.
An operation on a valid (resp., invalid) descriptor is said to be valid (resp., invalid).
Invalid operations have no effect on any base object, and return a special value $\bot$ (which is never contained in a field of any descriptor) instead of their usual return value.
We say that a \func{CreateNew} operation $O$ is performed \textbf{on a descriptor} $d$ if $O$ returns a pointer to $d$.
Observe that a \func{CreateNew} operation is always valid.
We say that a process $p$ \textbf{owns} a descriptor $d$ if it performed a \func{CreateNew} operation that returned a pointer $des$ to $d$.

The semantics for \func{CreateNew} are the same as in the mutable descriptor ADT.
The semantics for the other three operations are the same as in the mutable descriptor ADT, except that they can be invalid.
%
As in the previous ADTs, these operations must be lock-free.

\subsection{Transforming a class of algorithms to use the weak descriptor ADT} \label{sec-weak-transformation}

We now formally define a class of lock-free algorithms that use the mutable descriptor ADT, and can easily be transformed so that they use the weak descriptor ADT, instead.
We say that a step $s$ of an execution is \textit{nontrivial} if it changes the state of an object $o$ in shared memory, and \textit{trivial} otherwise.
In particular, all invalid operations are trivial, and an unsuccessful CAS or a CAS whose expected and new values are the same are both trivial.
In the following, we abuse notation slightly by referring interchangeably to a descriptor and a pointer to it.



\begin{definition} \label{def:WCA}
	Weak-compatible algorithms (WCA) are lock-free wasteful algorithms that use the mutable descriptor ADT, and have the following properties:  
	
	\begin{enumerate}
		\item Each high-level operation attempt $O$ by a process $p$ may create (and initialize) a single descriptor $d$.
        Inside $O$, $p$ may perform at most one invocation of a function \func{Help}$(d)$ (and $p$ may not invoke \func{Help}$(d)$ outside of $O$).
        \label{def:WCA:prop:alg-steps}
        \item A process may help any operation attempt $O'$ by another process by invoking \func{Help}$(d')$ where $d'$ is the descriptor that was created by $O'$.
        \label{def:WCA:prop:helping}
		\item If $O$ terminates at time $t$, 
        then any steps taken in an invocation of \func{Help}$(d)$ after time $t$ are \textit{trivial} (i.e., do not \textbf{change} the state of \textbf{any} shared object, incl. $d$).\label{def:WCA:prop:not-change-mem}
		\item While a process $q \neq p$ is performing \func{Help}$(d)$, $q$ cannot change any variables in its private memory that are still defined once \func{Help}$(d)$ terminates (i.e.,~variables that are local to the process $q$, but are not local to \func{Help}).
        \label{def:WCA:prop:return-help}
		\item All accesses (read, write or CAS) to a field of $d$ occur inside either \func{Help}$(d)$ or $O$.
        \label{def:WCA:prop:desc-ops-only-in-help}
	\end{enumerate}
\end{definition}

At a high level, properties~\ref{def:WCA:prop:alg-steps} and~\ref{def:WCA:prop:helping} of WCA describe how descriptors are created and helped.
Property~\ref{def:WCA:prop:return-help} intuitively states that, whenever a process $q$ finishes helping another process perform its operation attempt, $q$ knows only that it finished helping, and does not remember anything about what it did while helping the other process.
In particular, this means that $q$ cannot pay attention to the return value of \func{Help}.
We explain why this behaviour makes sense.
If $q$ creates a descriptor $d$ as part of a high-level operation attempt $O$ and invokes \func{Help}$(d)$, then $q$ might care about the return value of \func{Help}, since it needs to compute the response of $O$.
However, if $q$ is just helping another process $p$'s high-level operation attempt $O$, then it does not care about the response of \func{Help}, since it does not need to compute the response of $O$.
The remaining properties, \ref{def:WCA:prop:not-change-mem} and~\ref{def:WCA:prop:desc-ops-only-in-help}, allow us to argue that the contents of a descriptor are no longer needed once the operation that created it has terminated (and, hence, it makes sense for the descriptor to become invalid).
%
%
In Section~\ref{sec-adt-extended}, we will study a larger class of algorithms with a weaker version of property~\ref{def:WCA:prop:desc-ops-only-in-help}. 

\fakeparagraph{The transformation}
Each algorithm in WCA can be transformed in a straightforward way into an algorithm that uses the weak descriptor ADT as follows.
Consider any \func{ReadField} or \func{CASField} operation $op$ performed by a high-level operation attempt $O$ in an invocation of \func{Help}$(d)$, where $d$ was created by a \textit{different} high-level operation attempt $O'$.
Note that $op$ is performed while $O$ is \textit{helping} $O'$.
After $op$, a check is added to determine whether $op$ was invalid, in which case $p$ returns from \func{Help} immediately.
(In this case, \func{Help} does not need to continue, since $op$ will be invalid only if $O'$ has already been completed by the process that owns $d$ or a helper.)

\fakeparagraph{Example Algorithm: DCSS}
\begin{figure}[t]
\begin{lstlisting}[name=dcsstransform,frame=single]
 $\func{DCSSHelp}(fdes):$
   $des := \func{Unflag}(fdes)$
   $a_1 := \func{ReadField}(des, \vaddr_1)$ //\label{code-dcss-weak-ugly-help-read-a1}
   if $a_1 = \bot$ then return //\label{code-dcss-weak-ugly-help-checkreturn-v}
   $a_2 := \func{ReadField}(des, \vaddr_2)$
   if $a_2 = \bot$ then return //\label{code-dcss-weak-ugly-help-checkreturn-addr2}
   $e_1 := \func{ReadField}(des, \vexp_1)$ //\label{code-dcss-weak-ugly-help-read-exp1}
   if $e_1 = \bot$ then return //\label{code-dcss-weak-ugly-help-checkreturn-exp1}
   if $*a_1 = e_1$ then //\label{code-dcss-weak-ugly-help-compare-a1}
     $n_2 := \func{ReadField}(des, \vnew_2)$ //\label{code-dcss-weak-ugly-help-read-new2}
     if $n_2 = \bot$ then return //\label{code-dcss-weak-ugly-help-checkreturn-new2}
     $\func{CAS}(a_2, fdes, n_2)$ //\label{code-dcss-weak-ugly-help-cas-new}
   else
     $e_2 := \func{ReadField}(des, \vexp_2)$ //\label{code-dcss-weak-ugly-help-read-exp2}
     if $e_2 = \bot$ then return //\label{code-dcss-weak-ugly-help-checkreturn-exp2}
     $\func{CAS}(a_2, fdes, e_2)$ //\label{code-dcss-weak-ugly-help-cas-old}
\end{lstlisting}
\vspace{-4.5mm}
\caption{Applying the transformation to \func{DCSS}.} 
\label{code-weak-ugly-DCSS}
\end{figure}
Figure~\ref{code-weak-ugly-DCSS} shows code for the \func{DCSS} algorithm in Figure~\ref{code-throw-DCSS} 
that has been \textit{transformed} to use the weak descriptor ADT. 
There, we include only the \func{DCSSHelp} procedure, since it is the only one that differs from Figure~\ref{code-throw-DCSS}.
The transformation adds lines~\ref{code-dcss-weak-ugly-help-checkreturn-v}, \ref{code-dcss-weak-ugly-help-checkreturn-addr2}, \ref{code-dcss-weak-ugly-help-checkreturn-exp1}, \ref{code-dcss-weak-ugly-help-checkreturn-new2} and \ref{code-dcss-weak-ugly-help-checkreturn-exp2} to \textit{check} whether the preceding invocations of \func{ReadField} are invalid.

\subsection{Correctness}
We argue that our transformation takes a linearizable algorithm $\mathcal{A} \in$ WCA that uses mutable descriptors and produces a linearizable algorithm $\mathcal{A}'$ that uses weak descriptors.
Consider any execution $e'$ of the transformed algorithm $\mathcal{A'}$.
We prove there exists an execution $e$ of the original algorithm $\mathcal{A}$ that performs the \textit{same} high-level operations, in the same order, and with the same responses, as in $e'$.
We explain how this helps.
Since $e$ is a correct execution of the original algorithm $\mathcal{A}$, the high-level operations performed in $e$ must respect the sequential specification(s) of the object(s) implemented in $\mathcal{A}$.
Furthermore, since $e'$ performs the same high-level operations, in the same order, and with the same responses, the high-level operations in $e'$ must also respect the sequential specification(s) of the same object(s).
Therefore, the transformed algorithm $\mathcal{A'}$ is correct.

We construct 
$e$ 
as follows.
By Property~\ref{def:WCA:prop:desc-ops-only-in-help} of WCA, all \func{ReadField}, \func{WriteField} and \func{CASField} operations occur in \func{Help}.
Whenever a check by a process $p$ follows a \func{ReadField} or \func{CASField} in $e'$ that returns $\bot$ (because the operation attempt $O$ being helped by $p$ has already terminated), we replace that check by a consecutive sequence of steps in which $p$ finishes its invocation of \func{Help}.
All other checks immediately following \func{ReadField} or \func{CASField} are simply removed.

By Property~\ref{def:WCA:prop:not-change-mem} of WCA, none of the steps added to $e$ change the state of any shared object.
So, these steps will not change the behaviour of any other process.
We also argue that none of these steps make any changes to $p$'s private memory that persist after $p$ finishes its invocation of \func{Help}.
(I.e., any changes these steps make to $p$'s private memory are \textit{reverted} by the time $p$ finishes its invocation of \func{Help}, so $p$'s private memory is the same just after the invocation of \func{Help} as it was just before the invocation of \func{Help}.)
So, these steps will not change the behaviour of $p$ after it finishes its invocation of \func{Help}.
Observe that, whenever a process performs a \func{ReadField} or \func{CASField} operation on a descriptor that it created, this operation will return a value different from $\bot$.
This is due to Property~\ref{def:WCA:prop:alg-steps} of WCA, and the definition of the weak descriptor ADT, which states that $d$ becomes invalid only after $O$ has terminated.
Since $p$'s invocation of \func{ReadField} or \func{CASField} returns $\bot$, $p$ must therefore be performing \func{Help}$(d)$ where $d$ was created by a \textit{different} process.
Thus, Property~\ref{def:WCA:prop:return-help} of WCA implies that, after $p$ performs the sequence of steps to finish its invocation of \func{Help}$(d)$, its private memory has the same state as it did just before it invoked \func{Help}. 


\subsection{Reading immutable fields efficiently}
If an invocation of \func{Help}$(des)$ accesses many immutable fields of a descriptor, then we can optimize it by replacing many \func{ReadField} operations with a single, more efficient operation called \func{ReadImmutables}.
This operation reads and returns \textit{all} of a descriptor's immutable fields, unless the descriptor is invalid, in which case it returns $\bot$.
%


To use \func{ReadImmutables} in \func{Help}$(des)$, one can simply perform, at the beginning of \func{Help}, a \func{ReadImmutables} operation, followed by an \textit{if}-statement that checks whether it the operation invalid, and, if so, returns immediately.
Then, in the body of \func{Help}$(des)$, each invocation of \func{ReadField}$(des, f)$, where $f$ is immutable, is replaced with a direct read from the set of values returned by \func{ReadImmutables}.
\begin{figure}[tb]
\begin{lstlisting}[name=dcssreadimmutables,frame=single]
 $\func{DCSSHelp}(fdes):$
   $des := \func{Unflag}(fdes)$
   $values := \func{ReadImmutables}(des)$ //\label{code-weak-nice-DCSS-readimmutables}
   if $values = \bot$ then return
   $\langle a_1, e_1, a_2, e_2, n_2 \rangle := values$

   if $*a_1 = e_1$ then
     $\func{CAS}(a_2, fdes, n_2)$
   else
     $\func{CAS}(a_2, fdes, e_2)$
\end{lstlisting}
\vspace{-4.5mm}
\caption{Using \func{ReadImmutables} to optimize and streamline the transformed DCSS algorithm.}
\label{code-weak-nice-DCSS}
\end{figure}
We demonstrate this approach on the transformed pseudocode for DCSS in Figure~\ref{code-weak-ugly-DCSS}.
Figure~\ref{code-weak-nice-DCSS} shows the result. 
Since all fields of a DCSS descriptor are immutable, \textit{every} invocation of \func{ReadField} can be replaced with a direct read from the result of the \func{ReadImmutables} operation performed at line~\ref{code-weak-nice-DCSS-readimmutables}.
(This will not be the case in an algorithm where the \func{Help} procedure reads mutable fields.)
Since \func{ReadImmutables} replaces several invocations of \func{ReadField}, it has the added benefit of making code simpler and shorter.

\section{Extended Weak Descriptors} \label{sec-adt-extended}

In this section, we describe an extended version of the weak descriptor ADT, and an extended version of the transformation in Section~\ref{sec-weak-transformation}.
This extended transformation weakens property~\ref{def:WCA:prop:desc-ops-only-in-help} of WCA so that \func{ReadField} operations on a descriptor $d$ can also be performed \textit{outside} of \func{Help}$(d)$.
%
At a high level, we handle \func{ReadField} operations performed outside of \func{Help} as follows. 
For \func{ReadField}s performed inside \func{Help}, we have seen that we can simply stop helping when $\bot$ is returned. 
However, for \func{ReadField}s performed outside of \func{Help}, it is not clear, in general, how we should respond if $\bot$ is returned. 
Intuitively, the goal is to find a value that \func{ReadField} can return so that the algorithm will behave the same way as it would if the descriptor were still valid. 
In some algorithms, just knowing that an operation has been completed gives us enough information to determine what a \func{ReadField} operation should return (as we will see below). 

\fakeparagraph{Extended weak descriptor ADT}
This ADT is the same as the weak descriptor ADT, except that \func{ReadField} is extended to take, as an additional argument, a default value $dv$ that is returned instead of $\bot$ when the operation is invalid.
Observe that the weak descriptor ADT is a special case of the extended weak descriptor ADT where each argument $dv$ to an invocation of \func{ReadField} is $\bot$.


\fakeparagraph{The extended transformation}
\func{CASField} and \func{WriteField} operations are handled the same way as in the WCA transformation.
However, an invocation of \func{ReadField}$(des, f)$ is handled differently depending on whether it occurs inside an invocation of \func{Help}$(des)$.
If it does, it is replaced with an invocation of \func{ReadField}$(des, f, \bot)$ followed by the check, as in the WCA transformation.
If not, it is replaced with an invocation of \func{ReadField}$(des, f, dv)$, where the choice of $dv$ is specific to the algorithm being transformed.

Let $\mathcal{A}$ be any algorithm that uses mutable descriptors, and satisfies properties \ref{def:WCA:prop:alg-steps}-\ref{def:WCA:prop:return-help} of WCA algorithms (see Definition~\ref{def:WCA}), as well as a weaker version of property~\ref{def:WCA:prop:desc-ops-only-in-help}, called property~\ref{def:WCA:prop:desc-ops-only-in-help}$'$, which states: every write or CAS to a field of a descriptor $d$ must occur in an invocation of \func{Help}$(d)$.
Let $e$ be an execution of $\mathcal{A}$ and let $e'$ be an execution that is the same as $e$, except that one (arbitrary) descriptor $d$ becomes invalid at some point $t$ after the high-level operation attempt $O$ that created $d$ terminates.
(When we say that $d$ becomes invalid at time $t$, we mean that after $t$, each invocation of \func{ReadField}$(d, f, dv)$ that is performed outside of \func{Help}$(d)$ returns its default value $dv$.)

Let $O'$ be any high-level operation attempt in $e'$ which, after $t$, performs \func{ReadField} on $d$ outside of \func{Help}$(d)$.
We say that an extended transformation is \textit{correct for} $\mathcal{A}$ if, for all choices of $e$, $e'$, $d$, $t$, and $O'$, the exact same changes are performed by $O'$ in $e$ and $e'$ to any variables that are still defined once $O'$ terminates (i.e.,~variables that are local to the process performing $O'$, but are not local to $O'$, and variables in shared memory), and $O'$ returns the same response in both executions.
An algorithm $\mathcal{A}$ is an \textit{extended weak-compatible algorithm} (and is in the class \textit{EWCA}) if there is an extended transformation that is correct for $\mathcal{A}$.

\subsection{Correctness}
Consider any extended transformation which is correct for a linearizable algorithm $\mathcal{A}$ that uses mutable descriptors.
We prove the result of applying this transformation to $\mathcal{A}$ is a linearizable algorithm $\mathcal{A}'$ that uses extended weak descriptors.
Specifically, let $e'$ be any execution of $\mathcal{A}'$.
We prove there is an execution $e$ of $\mathcal{A}$ that performs the same high-level operations, in the same order, with the same responses, as in $e'$.

%

First, we define an execution $e_0$.
Whenever a check in $e'$ by a process $p$ in \func{Help}$(d)$ determines that the preceding \func{ReadField} or \func{CASField} on a descriptor $d$ is \textit{invalid} (which means that the operation attempt being helped by $p$ has already terminated), we replace that check by a consecutive sequence of steps in which $p$ finishes its invocation of \func{Help}$(d)$.
By Property~\ref{def:WCA:prop:not-change-mem} of WCA, none of these added steps change the state of any shared variable.
Moreover, by Property~\ref{def:WCA:prop:return-help} of WCA, $p$ does not change any variable that is still defined after its invocation of \func{Help}, so $p$ has the same local state after \func{Help} in $e_0$ and $e'$.
Whenever such a check determines that the preceding \func{ReadField} or \func{CASField} is \textit{valid}, we simply remove this check.
Observe that each invalid operation in $e_0$ is an invalid \func{ReadField} operation on some descriptor $d$ performed outside of \func{Help}$(d)$.

Let $d_1, d_2, ...$ be the sequence of descriptors created in $e_0$.
We inductively construct a sequence $e_1, e_2, ...$ of executions such that $e_i$ differs from $e_{i-1}$ only in that descriptor $d_i$ never becomes invalid in $e_i$.
Specifically, for each high-level operation attempt $O'$ that performs an invalid \func{ReadField} operation on descriptor $d_i$ outside of \func{Help}$(d_i)$, consider the first such \func{ReadField} operation $R$.
All of the steps of $O'$ prior to $R$ are the same in $e_i$ as in $e_{i-1}$.
After $R$, $O'$ continues to take steps in $e_i$, but each \func{ReadField} operation that $O'$ performs on a field $f$ of $d_i$ returns the contents of $f$ (instead of a default value).
This may result in $O'$ executing completely different code paths in $e_{i-1}$ and $e_i$.
However, by the definition of an extended transformation that is correct for $\mathcal{A}$, $O'$ returns the same response in $e_i$ and $e_{i-1}$ and performs the \textit{exact same changes} to any variables that are still defined once $O'$ terminates.
Thus, for each variable $v$ that is still defined once $O'$ terminates, we can schedule the sequence of changes to $v$ in the exact same way in $e_i$ and $e_{i-1}$ (which implies that any reads in $e_{i-1}$ which see these changes can be scheduled appropriately in $e_i$). 

Since the claim holds for all $i$, there is an execution $e$ in which no descriptor becomes invalid (so $e$ is an execution of $\mathcal{A}$), and the same high-level operation attempts are performed, in the same order, and with the same responses.

\subsection{Multiple descriptors per operation attempt}
In some lock-free algorithms, a high-level operation attempt can create several different descriptors, and potentially invoke a different \func{Help} procedure for each descriptor.
We describe how to adjust the definitions above to support these kinds of algorithms.
For simplicity, we think of there being a single \func{Help} procedure that checks the type of the descriptor passed to it, and behaves differently for different types.

In order to allow a high-level operation attempt to create multiple descriptors without simply invalidating the ones it previously created, we update the definition of valid and invalid descriptors.
Let $des$ be a pointer to a descriptor $d$ of type $T$ returned by a \func{CreateNew} operation $C$ performed by process $p$.
Initially, $d$ is valid.
If $p$ performs another \func{CreateNew} operation $C'$ with the \textit{same descriptor type} $T$ after $C$, then $d$ becomes invalid immediately after $C'$ (and will never be valid again).

With this definition of valid and invalid descriptors, it might initially seem like an operation cannot create multiple descriptors of the same type $T$.
However, this turns out not to be a problem.
If an operation should create multiple descriptors of type $T$, we can simply imagine creating multiple \textit{clone} types $T_1, T_2, ...$ that have the exact same fields as $T$.
To create $k$ descriptors of type $T$, one would then create $k$ clone types, and have an operation invoke \func{CreateNew} once for each clone type.
(However, we are unaware of any algorithms in which a high-level operation attempt creates multiple descriptors of the same type.)

We also slightly modify Property~\ref{def:WCA:prop:alg-steps} of (extended) weak-compatible algorithms, as follows, to accommodate the use of multiple descriptors.
Each high-level operation attempt $O$ by a process $p$ may create (and initialize) a sequence $D$ of descriptors, each with a \textbf{unique type}.
Inside $O$, $p$ may perform at most one invocation of a function \func{Help}$(d)$ for each $d \in D$ (and $p$ may not invoke \func{Help}$(d)$ outside of $O$).
Note that the proof for the extended weak transformation goes through unchanged.

\subsection{Example Algorithm: k-CAS}
In this section, we explain how the extended transformation is applied to the $k$-CAS algorithm presented in Section~\ref{sec-kcas}.
Note that no invocations of \func{ReadField} on a DCSS descriptor $des$ are performed outside of \func{HelpDCSS}$(des)$.
\begin{fullver}
There is only one place in the code where an invocation $I$ of \func{ReadField} on a $k$-CAS descriptor $des$ is performed \textit{outside} of \func{Help}$(des)$ (the \func{Help} procedure for $k$-CAS).
\end{fullver}
\begin{shortver}
There is only one place in the algorithm where an invocation $I$ of \func{ReadField} on a $k$-CAS descriptor $des$ is performed \textit{outside} of \func{Help}$(des)$ (the \func{Help} procedure for $k$-CAS).
\end{shortver}
Specifically, $I$ reads the \textit{state} field of a $k$-CAS descriptor inside the modified version of 
\func{HelpDCSS}.
Recall that the $k$-CAS algorithm passes a $k$-CAS descriptor pointer and the name of the \textit{state} field as the first argument to DCSS at line~\ref{line-kcasthrow-kcashelp-phase1-dcss} of Figure~\ref{code-throw-kCAS}, and the DCSS algorithm is modified to use \func{ReadField} at line~\ref{code-dcss-throw-help-compare-a1} of Figure~\ref{code-throw-DCSS} to read this \textit{state} field.
We choose the default value $dv = \textit{Succeeded}$ for this invocation of \func{ReadField}.
We explain why this extended transformation of the $k$-CAS algorithm is correct.

When $I$ is performed at line~\ref{code-dcss-throw-help-compare-a1} of \func{DCSSHelp} (in Figure~\ref{code-throw-DCSS}), its response is compared with $e_1$, which contains \textit{Undecided}.
If $I$ returns \textit{Undecided}, then the CAS at line~\ref{code-dcss-throw-help-cas-new} is performed, and the process $p$ performing $I$ returns from \func{HelpDCSS}.
Otherwise, the CAS at line~\ref{code-dcss-throw-help-cas-old} is performed, and $p$ returns from \func{HelpDCSS}.

Suppose $I$ is invalid.
Then, we know the $k$-CAS operation attempt that created $des$ has been completed.
We use the following algorithm specific knowledge.
After a $k$-CAS operation attempt has completed, its $k$-CAS descriptor has \textit{state} \textit{Succeeded} or \textit{Failed} (and is never changed back to \textit{Undecided}).
(This can be determined by inspection of the code.)
Thus, if $I$ were valid, its response would \textit{not} be \textit{Undecided}, and $p$ would perform the CAS at line~\ref{code-dcss-throw-help-cas-old} and return from \func{HelpDCSS}.
Since $dv = Succeeded$, $p$ does exactly the same thing when $I$ is invalid.
(Note that the exact value of \textit{state} is unimportant.
It is only important that it is not \textit{Undecided}.)

\smallskip
\subsection{Example Algorithm: LLX and SCX}
\begin{figure*}[ph]
    \vspace{-32mm}\hspace{-35mm}\includegraphics[scale=1]{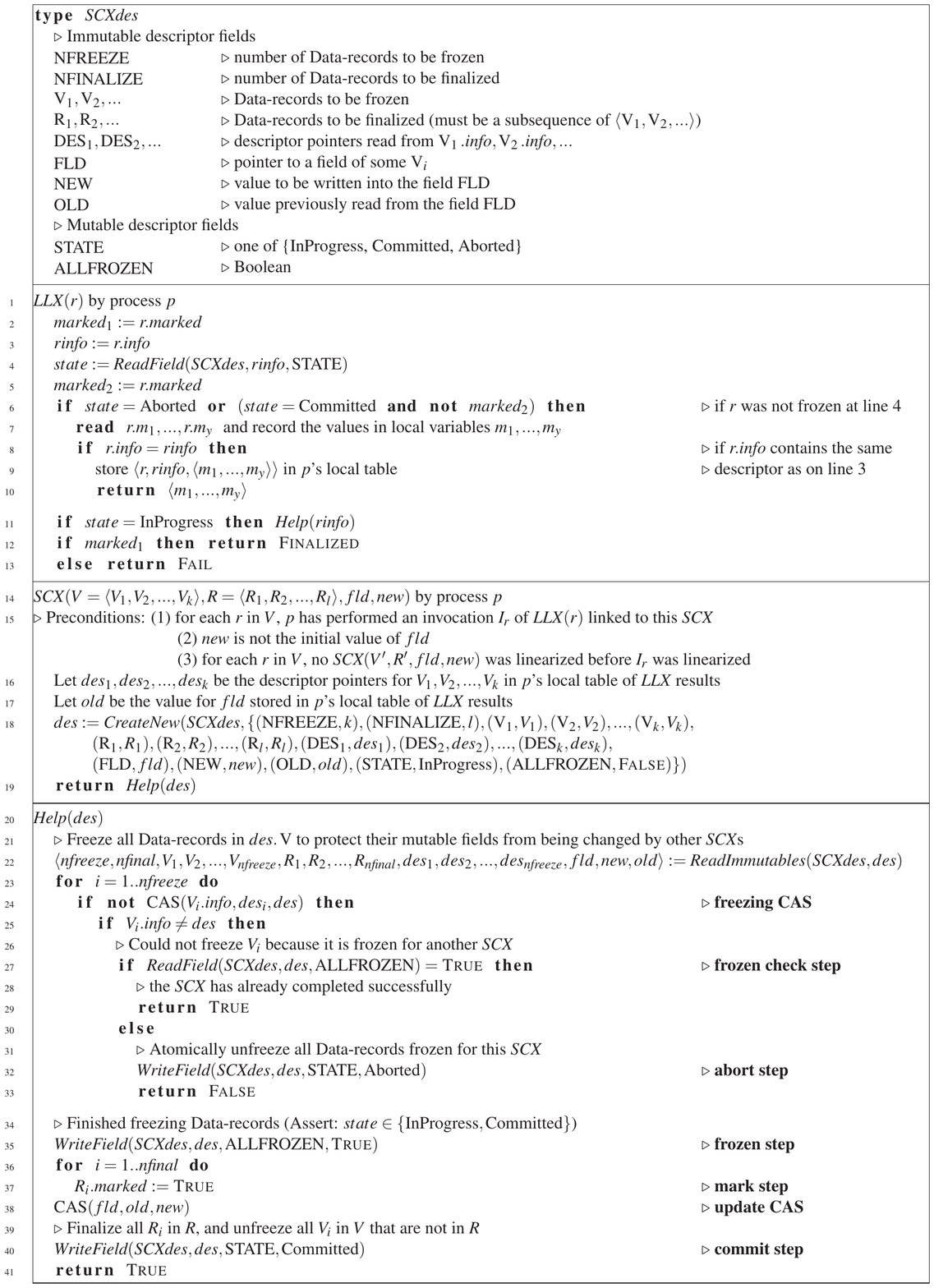}
    \vspace{-24mm}\caption{Code for the \llt\ and \sct\ algorithm 
        using the \textit{mutable descriptor} ADT.}
    \label{code-desc-scx}
\end{figure*}

%

\begin{thesisnot}
In this section, we explain how the extended transformation is applied to the multiword synchronization primitives load-linked-extended (\llt) and store-conditional-extended (\sct) of Brown et~al.~\cite{Brown:2013}.
Note that Brown et~al.~\cite{Brown:2014} also used these primitives to design a tree update template that can be followed to produce a fast lock-free implementation of any data structure based on a down-tree (a directed acyclic graph where each node has indegree one).
Thus, by optimizing \llt\ and \sct, we also optimize the tree update template, and all of the data structures that have been implemented with it.
\end{thesisnot}
\begin{thesisonly}
In this section, we explain how the extended transformation is applied to the \llt\ and \sct\ implementation in Chapter~\ref{chap-scx}.
\end{thesisonly}
 Pseudocode for \llt\ and \sct\ using mutable descriptors is presented in Figure~\ref{code-desc-scx}.
\begin{thesisonly}
Recall that each \sct\ creates a new descriptor called an \op, which has two mutable fields: a 2-bit \textit{state} field and an \textit{allFrozen} bit.
The \textit{state} field contains one of three values: \textit{InProgress}, \textit{Committed} and \textit{Aborted}.
\end{thesisonly}

\begin{thesisnot}
\llt\ and \sct\ operate on multi-field \textit{data records}, which can be used to represent, e.g., nodes in a tree, or records in a table.
Like descriptors, \textit{data records} contain mutable and immutable fields.
However, whereas descriptors are used only to facilitate helping, and are not part of a sequential data structure, data records are.

\llt$(r)$ attempts to take a snapshot of the mutable fields of a \rec\ $r$.
If it is concurrent with an \sct\ involving~$r$, it may return \fail, instead.
Individual fields of a \rec\ can also be read directly.
An \sct$(V,R,fld,new)$ takes as its arguments a sequence $V$ of \rec s, a subsequence $R$ of $V$, a pointer $fld$ to a mutable field of one \rec\ in~$V$, and a new value $new$ for that field.
The \sct\ tries to atomically: store the value $new$ in the field that $fld$ points to and {\it finalize} each \rec\ in $R$.
Once a \rec\ is finalized, its mutable fields cannot be changed by any subsequent \sct, and any \llt\ of the \rec\ will return \finalized\ instead of a snapshot.

Before a process invokes \sct, it must perform an \llt$(r)$ on each \rec\ $r$ in $V$.
The last such \llt\ by the process is said to be {\it linked} to the \sct, and the linked \llt\ must return a snapshot of $r$ (not \fail\ or \finalized).
An \sct($V, R, fld, new$) by a process modifies the data structure and returns \true\ only if no \rec\ $r$ in $V$ has changed since its linked \llt($r$); otherwise the \sct\ fails and returns \false.
Although \llt\ and \sct\ can fail, their failures are limited in such a way that they can be used to build data structures with lock-free progress.
See \cite{Brown:2013} for a more formal specification of these primitives.

Each \sct\ operation creates a new descriptor called an \op.
\llt\ and \sct\ requires each \rec\ $r$ to have a dedicated field $r.\info$ that stores a pointer to an \op, and this field is only ever accessed by \llt\ and \sct\ operations.
Each \rec\ also has a \textit{marked} bit which is accessed only by \llt\ and \sct.
This field is used by \sct\ to finalize \rec s.
We say that a \rec\ is \textit{marked} if its \textit{marked} bit is set.
\op s have two mutable fields: a 2-bit \textit{state} field and an \textit{allFrozen} bit.
The \textit{state} field contains one of three values: \textit{InProgress}, \textit{Committed} and \textit{Aborted}.
\end{thesisnot}

The following properties of the \llt\ and \sct\ algorithm are relevant for our purposes.
\begin{compactenum}[P1.]
	\item Before the first invocation of \func{Help}$(des)$ for an \sct\ $O$ (performed by $O$ or a helper) has been completed, the \op\ $des$ created by $O$ has its \textit{state} field set to \textit{Committed} or \textit{Aborted}, and, after this, the \textit{state} field of $des$ is never changed again.
    \item A marked \rec\ remains marked forever.
    \item A marked \rec\ cannot point to an \op\ with $\textit{state} = \textit{Aborted}$.
    \item Each time the $\info$ field of a \rec\ changes, it changes to a new value that has never previously been stored there (to avoid the ABA problem).
\end{compactenum}

There is only one place in the code where an invocation $I$ of $\func{ReadField}(\op, d,$ $f, dv)$ can occur outside of \func{Help}$(des)$: at line~4
of \llt\ in Figure~\ref{code-desc-scx}.
$I$ reads the \textit{state} field of $d$.
We choose the default value $dv = \textit{Committed}$ for $I$.
We give a rigorous, but straightforward, proof that this extended transformation of \llt\ and \sct\ is correct.

Let $e$ be an execution of the original \llt\ and \sct\ algorithm $\mathcal{A}$, and let $e'$ be an execution that is the same as $e$, except that one arbitrary \op\ $d$ becomes invalid at some point $t$ after the \sct\ operation attempt $O$ that created $d$ terminates.
Let $O'$ be any \llt\ in $e$ which, after $t$, performs an invocation $I$ of \func{ReadField} on $d$ outside of \func{Help}$(d)$.
We must prove that $O'$ performs the exact same changes in $e$ and $e'$ to any variables that are still defined after $O'$ terminates, and returns the same response in both executions.

Since $I$ is invalid in $e'$, by definition, the \sct\ $O$ that created $d$ must have terminated before $I$.
Thus, by P1, $I$ must return \textit{Committed} or \textit{Aborted} in $e$.
If $I$ returns \textit{Committed} in $e$, then $I$ returns the same response in $e$ and $e'$, so $O'$ is exactly the same in both executions.
Now, suppose $I$ returns \textit{Aborted} in $e$.
We consider three cases, depending on where $O'$ returns in $e$.

\textit{Case 1:} $O'$ returns at line~10 
in $e$.
If $marked_2 = \false$, then $O'$ behaves exactly the same way in $e$ and $e'$.
So, suppose $marked_2 = \true$.
Then, $O'$ will enter the if-statement at line~6 
 in $e$, but not in $e'$.
In this case, $O'$ saw that the \rec\ $r$ pointed to an \op\ with $\textit{state} = \textit{Aborted}$ when it performed line~3, 
and that $r$ was marked when it performed line~5. 
By P3, $r$ cannot simultaneously be marked and point to an \op\ with $\textit{state} = \textit{Aborted}$, so $r.\info$ must change between these two lines.
By P4, it must change to a value different from $r\info$, so the if-block at line~8 
will not be executed in $e$.
However, this contradicts our assumption that $O'$ returns at line~10. 

\textit{Case 2:} $O'$ returns \finalized\ at line~12 
in $e$.
Observe that $O'$ does not execute line~9 
in $e$ (since it would then return at the following line). 
We first prove that $O'$ does not execute line~9 
in $e'$.
Since $O'$ sees $marked_1 = \true$ just before returning at line~12 
in $e$, P2 implies that $marked_2 = \true$ (in both $e$ and $e'$).
Since $I$ returns \textit{Committed} in $e'$, $O'$ will not enter the if-block at line~6 
in $e'$.
Thus, $O'$ reaches line~11 
in both $e$ and $e'$.

Since $I$ returns \textit{Committed} in $e'$, and we have assumed $I$ returns \textit{Aborted} in $e$, $O'$ will not invoke \func{Help} at line~11 
in $e$ or $e'$.
Therefore, $O'$ does not change any variable that is still defined after it terminates.
So, it suffices to prove that $O'$ returns \finalized\ (at line~12) 
in $e'$.
However, this is immediate from the fact that $marked_1 = \true$ in $O'$ in $e$ (and, hence, in $e'$).

\textit{Case 3:} $O'$ returns \fail\ at line~13 
in $e$.
The proof is similar to the previous case, except $marked_1 = \false$ in $O'$ in $e$, so when $O'$ reaches line~12, 
it will enter the \textit{else}-block and return \fail\ in both $e$ and $e'$.

%
%
%

\section{Implementing the extended weak descriptor ADT} \label{sec-extended-impl}
%
%
%
We give a brief high-level overview of our implementation. 
It uses largely known techniques (similar to~\cite{Marathe:2008}), and is not the main contribution of this work.
Each process $p$ uses a \textit{single} descriptor object $D_{T,p}$ in shared memory to represent \textit{all} descriptors of type $T$ that it ever creates.
The descriptor object $D_{T,p}$ conceptually represents $p$'s \textit{current} descriptor of type $T$.
At different times in an execution, $D_{T,p}$ represents different \textit{abstract descriptors} created by $p$.
We store a sequence number in $D_{T,p}$ that is incremented every time $p$ performs \func{CreateNew}$(T, -)$. 
Instead of using traditional descriptor pointers, we represent each descriptor pointer as a pair of fields stored in a single word.
These fields contain the name of the process who owns the descriptor, and a sequence number that indicates which invocation of \func{CreateNew} conceptually created this descriptor.
When a descriptor pointer is passed to an operation $O$ on the abstract descriptor, $O$ compares the sequence number in $des$ with the current sequence number in $D_{T,p}$ to determine whether the operation is valid or invalid.
Thus, incrementing the sequence number in $D_{T,p}$ effectively makes all abstract descriptors of type $T$ that were previously created by $p$ \textit{invalid}.
Mutable fields are stored in a single word alongside a sequence number, so they can be updated with CAS, preventing invalid operations from making changes.
(If the mutable fields and a sequence number cannot fit in one word, then one can use multiple words and attach the sequence number to each word.)

\begin{figure}[ph!]
\begin{lstlisting}[name=extendedweak,frame=single]
 //\com \textbf{Data types}
 //Descriptor of type $T:$
   $mutables = \langle seq, mut_1, mut_2, ... \rangle$ //\hfill\com Mutable fields
   $imm_1, imm_2, ... $ //\hfill\com Immutable fields

 //\com \textbf{Shared variables}
   $D_{T,p}$ //for each descriptor type $T$ and process $p$

 //\com \textbf{ADT operations}
 //$\func{CreateNew}(T, v_1, v_2, ...)$ by process $p:$
   $oldseq$ := $D_{T,p}.mutables.seq$
   $D_{T,p}.mutables.seq$ := $oldseq+1$ //\label{code-impl-create-new-lin}
   for each //field $f$ \textbf{in} $D_{T,p}$
     let $value$ //be the corresponding value in $\{v_1, v_2, ...\}$
     if $f \mbox{ is immutable}$ then
       $D_{T,p}.f := value$
     else 
       $D_{T,p}.mutables.f := value$
   $D_{T,p}.mutables.seq$ := $oldseq+2$ //\label{code-extended-weak-before-fence}
   return $\langle p, oldseq+2 \rangle$ //\label{code-extended-weak-after-fence}
   
 //$\func{ReadField}(des, f, dv):$
   $\langle q, seq \rangle := des$
   if $f \mbox{ is immutable}$ then
     $result := D_{T,q}.f$ //\label{code-impl-read-field-immutable-lin}
   else
     $result := D_{T,q}.mutables.f$ //\label{code-impl-read-field-mutable-lin}
   if $seq \neq D_{T,q}.mutables.seq$ then return $dv$ //\label{code-impl-read-field-invalid-lin}
   return $result$ //\label{code-impl-read-field-valid-return}

 //$\func{ReadImmutables}(des):$
   $\langle q, seq \rangle := des$
   for each $f$ in $des$
     if $f \mbox{ is immutable}$ then //add $D_{T,q}.f$ to $result$
   if $seq \neq D_{T,q}.mutables.seq$ then return $\bot$ //\label{code-impl-read-immutables-lin}
   return $result$//\label{code-impl-read-immutables-valid-return}

 //$\func{WriteField}(des, f, value):$
   $\langle q, seq \rangle := des$
   loop
     $exp := D_{T,q}.mutables$ //\label{code-impl-write-field-read}
     if $exp.seq \neq seq$ then return //\label{code-impl-write-field-invalid-lin}
     $new := exp$
     $new.f := value$
     if $\func{CAS}(\&D_{T,q}.mutables, exp, new)$ then return//\label{code-impl-write-field-success-lin}

 //$\func{CASField}(des, f, \textit{fexp}, \textit{fnew}):$
   $\langle q, seq \rangle := des$
   loop
     $exp := D_{T,q}.mutables$ //\label{code-impl-cas-field-read}
     if $exp.seq \neq seq$ then return $\bot$ //\label{code-impl-cas-field-invalid-lin}
     if $exp.f \neq \textit{fexp}$ then return $exp.f$ //\label{code-impl-cas-field-failed-lin}
     $new := exp$
     $new.f := \textit{fnew}$
     if $\func{CAS}(\&D_{T,q}.mutables, exp, new)$ then //\label{code-impl-cas-field-success-lin}
       return $fnew$ //\label{code-impl-cas-field-valid-return}
\end{lstlisting}
\vspace{-3mm}
\caption{Pseudocode for the \textit{extended weak descriptor} ADT implementation.}
\label{code-reuse-impl}
\end{figure}

\subsection{Detailed description}
Complete pseudocode 
appears in Figure~\ref{code-reuse-impl}.
We start by describing the data types and shared variables.
Each descriptor contains zero or more immutable fields, and zero or more mutable fields (which are determined by the \textit{descriptor type}), as well as a sequence number field \textit{seq}.
Recall that $D_{T,p}$ represents different abstract descriptors at different times. 
Note that the immutable fields of $D_{T,p}$ are only immutable for as long as $D_{T,p}$ represents the same abstract descriptor.
When $D_{T,p}$ is reused, so that it represents a different abstract descriptor, its immutable fields can be reinitialized.
Usually very few bits are required for the mutable fields, since they exist solely to capture the state of an ongoing operation (and it is inefficient to frequently change the state of a descriptor).
(Every lock-free algorithm we are aware of uses at most a small constant number of bits for its mutable fields.)
Consequently, we think of the sequence field and the mutable fields of a descriptor $d$ as being packed together in a single word $mutables$ of $d$ (with subfields for the sequence field and each mutable field).
(Note that, if more space is needed for mutable fields in some future algorithm, we can eliminate this assumption about the size of $mutables$, as we explain below.)
We use $d.f$ to denote an immutable field $f$ of $d$, $d.mutables$ to denote the field $mutables$ of $d$, and $d.mutables.f$ to denote a mutable field $f$ of $d$.

Since $mutables$ fits in a single word, it can be modified atomically using CAS.
By having CAS atomically operate on a mutable field and the sequence number, 
we can ensure that a descriptor changes only if its sequence number has not changed.

We now describe the operations.
An invocation of \func{CreateNew}$(T, ...)$ by process $p$ first increments the sequence number of $D_{T,p}$, then initializes all of its fields, then increments the sequence number again and returns a new descriptor pointer (with the up-to-date sequence number).
Observe that the descriptor pointers returned by \func{CreateNew} always have even sequence numbers, and the sequence number of a descriptor is odd while it is being initialized by \func{CreateNew}.
Consequently, while a descriptor is being initialized, its sequence number does not match any descriptor pointer in the system, so no process can read or modify the descriptor's fields.

Note that this approach of incrementing a sequence number twice has been used in different contexts such as in transactional memory, where the least significant bit represents whether the sequence number is locked or unlocked.
Here, the idea is slightly different, since the least significant bit represents whether the descriptor is currently being reused and initialized, or is safe to access.
(Nevertheless, in some sense, one can think of the bit indicating whether the descriptor is currently being initialized as a sort of lock.
It does not prevent other processes from making progress (since operations on the descriptor will terminate, but will simply be invalid), but it does prevent them from accessing fields of the descriptor as they are being changed.)

An invocation of \func{ReadField}$(des, f, \textit{default})$ by $p$ 
reads the value $v$ of the mutable or immutable field $f$ from $D_{T,p}$ followed by its sequence number $s$.
If $s$ matches the sequence number in the descriptor pointer $des$, then $v$ is returned.
Otherwise, \textit{default} is returned.

\func{ReadImmutables} is similar to \func{ReadField}, except it reads all immutable fields, instead of a single field, and it returns $\bot$ instead of \textit{default}.

An invocation $I$ of \func{WriteField}$(des, f, value)$ by $p$ performs a sequence of one or more \textit{attempts}.
In each attempt, it reads the contents $old$ of $mutables$, including the sequence number $s$, from $D_{T,p}$, then checks whether $s$ matches the sequence number in the descriptor pointer $des$.
If the sequence numbers do not match, then the abstract descriptor represented by $des$ is invalid, so $I$ returns without changing $f$.
Otherwise, $I$ uses CAS to try to change $D_{T,p}.mutables$ from $old$ to $new$, which is a copy of $old$ in which the contents of field $f$ have been changed (locally) to contain $value$.
Observe that this CAS will succeed only if the sequence number in $D_{T,p}.mutables$ matches the sequence number in $des$.
If the CAS succeeds, then $I$ returns.
Otherwise, $I$ performs another attempt. 

Note that \func{WriteField} is less efficient than performing a direct write to memory.
However, since mutable fields are used merely to encode the status of an ongoing operation, there are usually very few changes to a descriptor.


\func{CASField} is quite similar to \func{WriteField}. 
The only differences are (1) \func{CASField} has different return values and, (2) in each attempt, it performs an additional check to determine whether $old.f$ is equal to $\textit{fexp}$, and, if not, returns $old.f$.

\subsection{Practical considerations}
One might wonder, in an algorithm with multiple types of descriptors, why the type of a descriptor is not also encoded in descriptor pointers.
In algorithms that use multiple descriptor types, any time the original algorithm accesses a field of a descriptor, it typically must know what kind of descriptor it is accessing (if, for no other reason, to compute the address of the desired field within the descriptor).
In such algorithms, it would not be necessary for descriptor pointers to carry this extra information.
For algorithms that access descriptors without knowing their exact types, one can include the descriptor type in descriptor pointers.

	Some lock-free algorithms ``steal'' up to three bits from pointers to encode additional information, typically to distinguish between application values and (potentially, various types of) descriptors.~
To accommodate such algorithms, one can slightly shrink the sequence number in our descriptor pointers, and reserve the three lowest-order bits for use by other algorithms. 

One obvious way to store the descriptors for each thread is to create an array for each descriptor type, with a slot containing a descriptor for each process.
In this kind of implementation, it is extremely important to pad each slot to avoid false sharing~\cite{scott1993false}.
We suggest allocating at least two cache lines for each descriptor (128 bytes on modern Intel and AMD machines).

To improve efficiency, modern Intel and AMD processors implement a relaxed memory model called total store order (TSO) that allows certain steps in a program to be executed out of order.
Specifically, a read that occurs after a write in a program can actually be executed \textit{before} the write, as long as the read and write are not accessing the same address.
This can render a concurrent algorithm incorrect if it requires a write by a process $p$ to be visible to other processes \textit{before} $p$ performs a subsequent read.
One can prevent this reordering by placing a memory fence (or barrier) between the write and read.
CAS instructions also act as memory fences.
Our implementation does not require any memory fences (beyond those implied by CAS instructions).
This is an attractive property, since memory fences incur high overhead.


Our implementation uses unbounded sequence numbers.
However, in practice, sequence numbers are bounded, and they may wrap around.
If wraparound occurs, then two invocations of \func{CreateNew} might return the same descriptor pointer.
This can cause an \textit{ABA problem} if the high-level algorithm that uses descriptors relies on the uniqueness of descriptor pointers returned by \func{CreateNew}.

We argue that the sequence number can be made sufficiently large on modern systems for this to be a non-issue.
Consider a system with a 64-bit word size.
Recall that a sequence number appears both in each descriptor pointer, and also in the $mutables$ field of each descriptor.
A descriptor pointer contains only a process name and a sequence number, so if $n$ bits are reserved for the process name, then $64-n$ bits remain for the sequence number.
The $mutables$ field contains the descriptor's mutable fields and a sequence number, so if $m$ bits are reserved for mutable fields, then $64-m$ bits remain for the sequence number.
Thus, if we use 14-bit process names (as the Linux kernel does), and the mutable fields of each descriptor fit in at most 14 bits, then 50 bits remain for the sequence number.
We are unaware of any algorithm that requires more than three bits for mutable fields in its descriptors, so this is realistic.
In this case, a single process must perform $2^{50}$ operations to trigger even a single wraparound.
If we assume that a single process can perform one million operations per second, this will take 35 years of continuous execution.
If this is still a concern, then one can use double-wide CAS (DWCAS), which is implemented on modern Intel and AMD systems, instead of CAS, to atomically operate on two adjacent words (containing a much larger sequence number).

Although we are unaware of any current lock-free algorithms that use more than three bits for mutable fields in descriptors, some future algorithm may use more.
If the mutable fields of a descriptor cannot fit in the same word as a sequence number, then our approach must be modified.
If the mutable fields and a sequence number can fit in two adjacent words, then one can simply use DWCAS instead of CAS.
Otherwise, one can store mutable fields in their own separate words, and \textit{replicate} the sequence number, storing a copy in the word adjacent to each mutable field.
To change a mutable field, one would then perform DWCAS on the word containing the mutable field, and its adjacent sequence number.
When the descriptor is reused, instead of incrementing a single sequence number, one would increment all sequence numbers.


In order to choose how many bits should be devoted to the process name in descriptor pointers, one must know an upper bound on the number of processes.
We stress that this is not an onerous constraint, because the upper bound does not need to be tight.
Note that one need not initially allocate descriptors for all processes that \textit{could} be running in the system.
It is straightforward to allocate a descriptor for a process the first time it invokes \func{CreateNew} (potentially even in batches, to amortize the cost and improve control over memory layout).

\newpage

\subsection{Correctness}
We now prove that our implementation 
is linearizable. 
We first give the linearization points for all operations. 
\begin{itemize}
\item Each invocation of \func{CreateNew} is linearized at the increment of the sequence number at line~\ref{code-impl-create-new-lin}.
\item If an invocation $I$ of \func{ReadField}$(des, f, dv)$ returns at line~\ref{code-impl-read-field-invalid-lin}, then it is linearized at the read of the sequence number at the same line.
If $I$ returns at line~\ref{code-impl-read-field-valid-return}, then it is linearized at the preceding read of the field $f$: for immutable fields this is line~\ref{code-impl-read-field-immutable-lin}, and for mutable fields this is line~\ref{code-impl-read-field-mutable-lin}.
\item Each invocation of \func{ReadImmutables} is linearized at the read of the sequence number at line~\ref{code-impl-read-immutables-lin}.
\item If an invocation $I$ of \func{WriteField}$(des, f, value)$ returns at line~\ref{code-impl-write-field-invalid-lin}, then it is linearized at the last read of the sequence number at the same line.
If $I$ returns at line~\ref{code-impl-write-field-success-lin}, then it
is linearized at the successful CAS at the same line.
\item If an invocation $I$ of \func{CASField}$(des, f, \textit{fexp}, \textit{fnew})$ returns at line~\ref{code-impl-cas-field-invalid-lin}, then it is linearized at the last read of the sequence number at the same line.
If $I$ returns at line~\ref{code-impl-cas-field-valid-return}, then it is linearized at the successful CAS at the previous line.
If $I$ returns at line~\ref{code-impl-cas-field-failed-lin}, then it
is linearized at the last read at the same line.
\end{itemize}

\begin{Observation}
	The sequence number of $D_{T,p}$ (also denoted $D_{T,p}.mutables.seq$) is written only by $p$ in invocations of \func{CreateNew}($T,-$).\label{impl-observ-write-to-seq}
\end{Observation}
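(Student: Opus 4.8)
The plan is to prove the Observation by a direct case analysis over every line of the implementation in Figure~\ref{code-reuse-impl} that could possibly modify the word $D_{T,p}.mutables$ (and hence its $seq$ subfield), showing that the only steps that actually \emph{change the value} of $seq$ are the two increments inside \func{CreateNew}, both executed by $p$. First I would dispense with \func{ReadField} and \func{ReadImmutables}, which never write to any shared word. The remaining candidates are the two direct assignments to $D_{T,p}.mutables.seq$ in \func{CreateNew} (lines~\ref{code-impl-create-new-lin} and~\ref{code-extended-weak-before-fence}), the field initializations inside \func{CreateNew}, and the successful CAS steps in \func{WriteField} (line~\ref{code-impl-write-field-success-lin}) and \func{CASField} (line~\ref{code-impl-cas-field-success-lin}).

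For the direct assignments, I would invoke the indexing convention that an invocation of \func{CreateNew}$(T,-)$ performed by a process $r$ accesses only that process's own descriptor object $D_{T,r}$. Hence a direct write to $D_{T,p}.mutables.seq$ through lines~\ref{code-impl-create-new-lin} or~\ref{code-extended-weak-before-fence} can occur only when $r = p$, i.e.\ only inside an invocation of \func{CreateNew}$(T,-)$ by $p$, exactly as claimed. The remaining writes inside \func{CreateNew} initialize immutable fields or a single mutable subfield $f$, each distinct from the $seq$ subfield, so they leave the value of $seq$ unchanged (and in any case are performed by $p$).

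The main obstacle is the successful CAS in \func{WriteField} and \func{CASField}, since each such CAS physically overwrites the \emph{entire} $mutables$ word --- including the $seq$ subfield --- and may be executed by \emph{any} process (a helper), not just the owner $p$. The key point I would establish is that this CAS does not change the \emph{value} of $seq$: the value $new$ installed by the CAS is built as a copy of the previously read expected value $exp := D_{T,q}.mutables$ (lines~\ref{code-impl-write-field-read} and~\ref{code-impl-cas-field-read}) in which only the mutable field $f$ is modified, and $f$ is always distinct from the $seq$ subfield, so $new.seq = exp.seq$. Since a CAS succeeds only when the current contents of $mutables$ equal $exp$, a successful CAS replaces $mutables$ by $new$ while leaving the $seq$ subfield at its existing value. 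Thus these steps never change $seq$, and combining the cases yields the Observation: every step that changes the value of $D_{T,p}.mutables.seq$ is one of the two increments inside a \func{CreateNew}$(T,-)$ performed by $p$.
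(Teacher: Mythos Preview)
Your proposal is correct. The paper actually offers no explicit proof of this Observation at all---it is stated as self-evident from inspection of the pseudocode in Figure~\ref{code-reuse-impl}---so your case analysis is more detailed than anything the paper provides. In particular, your treatment of the successful CAS in \func{WriteField} and \func{CASField} (showing that $new.seq = exp.seq$ because only the subfield $f \neq seq$ is modified, and that a successful CAS therefore leaves $seq$ unchanged) is exactly the point one would need to articulate if pressed, and the paper simply leaves it implicit.
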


\begin{Observation}
	Every descriptor pointer returned by \func{CreateNew} has an even sequence number, and the linearization point of \func{CreateNew} always changes the sequence number of the descriptor to an odd number. \label{impl-observ-even-vs-odd}
\end{Observation}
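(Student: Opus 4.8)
The plan is to prove both assertions from a single invariant: for each process $p$ and descriptor type $T$, the value of $D_{T,p}.mutables.seq$ is even in every configuration except those occurring strictly between the two writes of a single in-progress invocation of $\func{CreateNew}(T,-)$. Granting this invariant, the two claims follow immediately. The linearization point (line~\ref{code-impl-create-new-lin}) writes $oldseq+1$, and the pointer returned at line~\ref{code-extended-weak-after-fence} carries $oldseq+2$; so it suffices to show that the value $oldseq$ read at the start of each invocation is even, which makes $oldseq+1$ odd and $oldseq+2$ even.

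First I would argue that a process never interleaves its own invocations of $\func{CreateNew}$. Each process executes its operations sequentially, one step at a time, so the writes at lines~\ref{code-impl-create-new-lin} and~\ref{code-extended-weak-before-fence} of one invocation of $\func{CreateNew}(T,-)$ by $p$ both complete before $p$ begins its next such invocation. Combined with Observation~\ref{impl-observ-write-to-seq}, which states that $D_{T,p}.mutables.seq$ is written only by $p$ and only inside invocations of $\func{CreateNew}(T,-)$, this shows the sequence number is entirely under the control of $p$'s sequential stream of $\func{CreateNew}(T,-)$ calls; no concurrent step by any other process can alter it.

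Next I would proceed by induction on the sequence of $\func{CreateNew}(T,-)$ invocations performed by $p$, proving that the value $oldseq$ read at the start of each invocation is even. For the base case, the initial value of $D_{T,p}.mutables.seq$ is $0$, which is even, so the first invocation reads $oldseq = 0$. For the inductive step, suppose the $i$-th invocation reads an even $oldseq$. By the code it writes $oldseq+1$ at line~\ref{code-impl-create-new-lin} and then $oldseq+2$ at line~\ref{code-extended-weak-before-fence}, leaving $D_{T,p}.mutables.seq = oldseq+2$ when it returns. By the sequentiality argument above, this value is unchanged until the $(i{+}1)$-th invocation reads it as its own $oldseq$; since $oldseq+2$ is even, the induction goes through.

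Finally I would assemble the two claims. For every invocation of $\func{CreateNew}(T,-)$, the read $oldseq$ is even, so the pointer returned at line~\ref{code-extended-weak-after-fence}, namely $\langle p, oldseq+2 \rangle$, has an even sequence number, and the write at the linearization point (line~\ref{code-impl-create-new-lin}) sets the sequence number to the odd value $oldseq+1$. I do not anticipate a serious obstacle; the only point requiring care is the sequentiality argument that rules out any interleaving which could corrupt the alternating even/odd pattern, and this follows directly from Observation~\ref{impl-observ-write-to-seq} together with the sequential execution of a single process.
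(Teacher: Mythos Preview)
Your argument is correct. The paper states this as an \emph{Observation} and does not supply a proof at all; it is meant to be evident from inspection of the pseudocode in Figure~\ref{code-reuse-impl}. Your inductive argument on the sequence of \func{CreateNew}$(T,-)$ invocations by a fixed process, anchored by Observation~\ref{impl-observ-write-to-seq}, is exactly the natural way to make that inspection rigorous, and nothing in it is superfluous. The only unstated assumption you rely on is that $D_{T,p}.mutables.seq$ is initialized to an even value (you take it to be~$0$); the paper leaves this implicit as well, so you are not assuming anything beyond what the paper does.
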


\begin{Observation}
	The sequence number returned by a \\
	\func{CreateNew}($T,-$) operation by p is $2+v$ where $v$ is the sequence number returned by $p$'s previous \func{CreateNew}($T,-$) operation, or $v = 0$ if $p$ has not performed \func{CreateNew}($T,-$).\label{impl-observ-unique-des}
\end{Observation}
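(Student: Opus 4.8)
The plan is to prove the claim by reading off the behaviour of \func{CreateNew} from the pseudocode in Figure~\ref{code-reuse-impl} and combining it with Observation~\ref{impl-observ-write-to-seq} and the sequentiality of a single process. First I would note directly from the code that an invocation $C$ of \func{CreateNew}($T,-$) by $p$ reads the current value of $D_{T,p}.mutables.seq$ into a local variable $oldseq$ on its first line, and then returns exactly $oldseq+2$ at line~\ref{code-extended-weak-after-fence}. Hence it suffices to show that the value $oldseq$ read by $C$ equals $v$, the sequence number returned by $p$'s \emph{previous} \func{CreateNew}($T,-$) operation (or $0$ when $C$ is $p$'s first such operation).

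Next I would use the fact that a single process executes sequentially, so $p$'s \func{CreateNew}($T,-$) invocations are totally ordered and non-overlapping. Let $C'$ be the \func{CreateNew}($T,-$) invocation by $p$ immediately preceding $C$, if one exists. By inspection of the code, the last write to the sequence number performed by $C'$ (at line~\ref{code-extended-weak-before-fence}) sets $D_{T,p}.mutables.seq$ to the value $v$ that $C'$ returns. I would then invoke Observation~\ref{impl-observ-write-to-seq}, which guarantees that $D_{T,p}.mutables.seq$ is written \emph{only} by $p$ and \emph{only} inside \func{CreateNew}($T,-$) invocations. Since $C'$ and $C$ are consecutive, $p$ performs no \func{CreateNew}($T,-$) between them, and \func{CreateNew} invocations of other types touch only $D_{T',p}$ for $T' \neq T$; thus no write to $D_{T,p}.mutables.seq$ occurs between $C'$'s final write and $C$'s initial read, and $C$ reads $oldseq = v$ and returns $v+2 = 2+v$. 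For the base case where $C$ is $p$'s first \func{CreateNew}($T,-$), the same observation shows $D_{T,p}.mutables.seq$ has never been written before $C$, so the read returns its initial value, which I would take to be $0$; hence $C$ returns $0+2 = 2+0$, matching the claim with $v=0$.

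The main obstacle is the middle step: establishing that the value $oldseq$ read by $C$ is exactly the value returned by $C'$, i.e., that nothing perturbs $D_{T,p}.mutables.seq$ in the gap between the two consecutive invocations. This rests entirely on Observation~\ref{impl-observ-write-to-seq} (to rule out writes by other processes, by other-typed \func{CreateNew} invocations, and, implicitly, by \func{WriteField}/\func{CASField}, which copy the $mutables$ word and change only a mutable field) together with the non-overlapping, sequential nature of $p$'s own operations. Everything else is a routine reading of the pseudocode.
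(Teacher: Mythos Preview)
Your argument is correct. The paper does not supply a proof for this statement at all: it is stated as an \emph{Observation} and treated as immediate from inspection of the \func{CreateNew} pseudocode in Figure~\ref{code-reuse-impl}, so your write-up is a spelled-out version of what the paper leaves implicit, relying on exactly the same ingredients (the code, Observation~\ref{impl-observ-write-to-seq}, and the sequentiality of a single process).
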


We now prove that the above linearization points are correct. 
Let $e$ be an execution of our implementation of extended weak descriptors.
Let $O_1,O_2\cdots O_k$ be the extended weak descriptor operations executed in $e$ in the order they are linearized. 
\begin{shortver}
Note that we prove correctness assuming unbounded sequence numbers.
The implications of bounded sequence numbers were considered above. 
\end{shortver}

\begin{theorem}
	The responses of $O_1,O_2\cdots O_k$ respect the semantics of the extended weak descriptor ADT.
\end{theorem}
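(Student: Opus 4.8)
The plan is to prove linearizability by a direct case analysis over the operation types, verifying that each $O_i$, evaluated at its stated linearization point, produces exactly the response (and, for mutating operations, the state change) demanded by the sequential specification of the extended weak descriptor ADT, applied in the linearization order. Everything hinges on a single bridging lemma that translates the \emph{abstract} notion of descriptor validity (defined in terms of future \func{CreateNew} operations) into the \emph{concrete} sequence-number comparison performed by the implementation, so I would prove that lemma first and then treat the five operations as routine corollaries.

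The first ingredient is a monotonicity lemma: for every descriptor type $T$ and process $q$, the field $D_{T,q}.mutables.seq$ is non-decreasing throughout $e$. This follows immediately from Observation~\ref{impl-observ-write-to-seq} (the sequence number is written only by $q$, and only inside \func{CreateNew}), since \func{CreateNew} only increments it, while \func{WriteField} and \func{CASField} copy $exp$ into $new$ and modify a single mutable subfield, leaving $seq$ untouched. Building on this, I would prove the central \emph{validity-correspondence lemma}: in any configuration $C$, a pointer $des = \langle q, seq\rangle$ references a valid abstract descriptor iff $D_{T,q}.mutables.seq = seq$ at $C$. The forward direction uses Observation~\ref{impl-observ-unique-des} (each \func{CreateNew} by $q$ returns a strictly larger even value) together with monotonicity: once a later same-type \func{CreateNew} linearizes, $D_{T,q}.mutables.seq$ exceeds $seq$ and can never return to it, matching the ADT rule that a descriptor becomes invalid forever at the next same-type \func{CreateNew}. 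The reverse direction uses Observation~\ref{impl-observ-even-vs-odd} and the double increment: while a descriptor is being reinitialized its sequence number is odd and therefore distinct from every pointer in the system, so a half-initialized descriptor is never seen as valid, and $D_{T,q}.mutables.seq = seq$ can hold only when $des$ is the current descriptor.

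With this lemma in hand I would dispatch the five operations. For \func{CreateNew}, Observation~\ref{impl-observ-unique-des} gives uniqueness of the returned pointer, and the increment at line~\ref{code-impl-create-new-lin} is exactly the invalidation of the previous same-type descriptor. For \func{ReadField} I would split on the check at line~\ref{code-impl-read-field-invalid-lin}: if it fails, the operation linearizes at that read where, by the correspondence lemma, $des$ is invalid, so returning $dv$ is correct; if it passes, the operation linearizes at the preceding field read, and monotonicity guarantees validity already held there, so the value read is the field's value at the linearization point. \func{ReadImmutables} is handled the same way, with the extra observation that a passing sequence check certifies that none of the separately read immutable fields became stale, since any change to an immutable field happens only during a \func{CreateNew} reinitialization, which would have bumped $seq$ and failed the check. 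For \func{WriteField} and \func{CASField}, a failed sequence check linearizes on an invalid descriptor and has no effect (returning $\bot$ or the current field value as specified), while a successful CAS linearizes on a valid descriptor (the CAS can succeed only if $mutables$, hence $seq$, is unchanged since it was read) and changes exactly the single targeted mutable field.

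I expect the main obstacle to be the validity-correspondence lemma, and in particular reconciling the time-dependent abstract definition of validity with the implementation's point comparisons in the presence of the odd intermediate sequence number and of multi-read operations. The delicate points are (i) ruling out a half-initialized descriptor ever being observed as valid, which is precisely what the odd sequence number guarantees, and (ii) justifying, for the read operations whose field reads precede their validity check, that a passing check retroactively certifies the earlier reads---which reduces, via monotonicity and the fact that immutable fields and $seq$ can only change together in \func{CreateNew}, to showing that validity is a monotone (prefix-closed) property of the execution. Once these are pinned down, every per-operation case is a short calculation against the listed linearization points.
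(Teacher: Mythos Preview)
Your approach is correct and essentially mirrors the paper's proof: both do a case analysis over operation types and return points, using the sequence-number observations to establish validity or invalidity at the linearization point and then verifying the returned value (and, for mutators, the effect). Your organization is slightly cleaner---you extract the validity-correspondence lemma and monotonicity up front, whereas the paper proves these facts inline within a strong-induction framing whose hypothesis is not really used---but the content and key ideas are the same.
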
  
  
\begin{proof}
By strong induction on the sequence of extended weak descriptor operations that terminate in $e$.
Base case: the claim vacuously holds when no operations have returned.
%
%
%
Induction step: assume the return values of $O_{1}, O_{2}\cdots O_{i-1}$ follow the semantics of the extended weak descriptor ADT. 
Let $p$ be the process that performs $O_i$, and $T$ be the type of descriptor on which $O_i$ is performed.

Suppose $O_i$ is a \func{CreateNew}($T,-$) operation.
By Observation~\ref{impl-observ-unique-des}, 
$O_i$ returns a unique descriptor pointer.


In each of the following cases, $O_i$ takes a descriptor pointer $des$ as one of its arguments.
Let $q$ and $seq$ be the process name sequence number in $des$, respectively.
Let $O_{init}$ be the \func{CreateNew}($T,-$) by $q$ that returned $des$. 
Since $des$ is returned by $O_{init}$ before it is passed to any operation, $O_{init}$ is linearized before $O_i$. 

Suppose $O_i$ is a \func{ReadField} that returns the default value at line~\ref{code-impl-read-field-invalid-lin}, a \func{CASField} that returns $\bot$ at line~\ref{code-impl-cas-field-invalid-lin} or a \func{ReadImmutables} that returns $\bot$ at line~\ref{code-impl-read-immutables-lin}.
We argue that $des$ is invalid when $O_i$ is linearized.
In each case, $O_i$ returns after seeing that the sequence number of $D_{T,q}$ no longer contains $seq$.
Thus, this sequence number must change after $des$ is returned by $O_{init}$, and before $O_i$ is linearized. 
By Observation~\ref{impl-observ-write-to-seq}, this change to the sequence number of $D_{T,q}$ must be performed by a \func{CreateNew}$(T, -)$ operation $O_{change}$ by $q$ (which occurs after $O_{init}$, and before $O_i$ is linearized).
$O_{change}$ changes the sequence number twice, and is linearized at the first change.
Thus, $O_{change}$ is linearized after $O_{init}$ and before $O_i$, which means that $des$ is \textit{invalid} when $O_i$ is linearized.
%
%

Now suppose $O_i$ is a \func{ReadField} that returns at line~\ref{code-impl-read-field-valid-return}, a \func{CASField} that returns at line~\ref{code-impl-cas-field-valid-return}, or a \func{ReadImmutables} that returns at line~\ref{code-impl-read-immutables-valid-return}.
We first argue that $des$ is valid when $O_i$ is linearized.
In each case, $O_i$ sees that the sequence number of $D_{T,q}$ is $seq$ at some time $t$, which is either when $O_i$ is linearized, or is after $O_i$ is linearized.
By Observation~\ref{impl-observ-write-to-seq} and Observation~\ref{impl-observ-unique-des}, whenever the sequence number of $D_{T,q}$ is changed, it is changed to a new value that it never previously contained.
Thus, since the sequence number of $D_{T,q}$ contains $seq$ when $O_{init}$ terminates, and it contains $seq$ at time $t$, it contains $seq$ at all times after $O_{init}$ terminates and before $t$.
Hence, the sequence number of $D_{T,q}$ contains $seq$ when $O_i$ is linearized.
By Observation~\ref{impl-observ-write-to-seq}, $q$ does not perform any \func{CreateNew}$(T, -)$ after $O_{init}$ and before $t$, so $des$ is \textit{valid} when $O_i$ is linearized.

We now argue that the response of $O_i$ is correct if it is a \func{ReadField}$(des, f, dv)$.
The proof is similar when $O_i$ is a \func{CASField} or \func{ReadImmutables}.

%

If $f$ is immutable, then it is changed only by \func{CreateNew}$(T, -)$ operations by $q$.
Since $des$ is valid when $O_i$ is linearized, $O_{init}$ performs the last change to $f$ before $O_i$ is linearized.
Recall that $O_i$ start after $O_{init}$ terminates.
Thus, the write of $f$ in $O_{init}$ happens before the invocation of $O_i$, and $O_i$ will return the value written to $f$ by $O_{init}$. 

If $f$ is mutable, then let $O_{change}$ be the operation that performs the last change to $f$ before $O_i$ is linearized, and $v$ be the value that it stores in $f$.
Observe that $O_i$ returns $v$.
We show that $O_{change}$ is the last operation that changes $f$ and is linearized before $O_i$.
If $O_{change}$ is the same as $O_{init}$, then we are done.
Otherwise, since we have argued that $q$ does not perform any \func{CreateNew}$(T, -)$ after $O_{init}$ and before $t$, $O_{change}$ must be a \func{WriteField} or \func{CASField}.
In each case, $O_{change}$ can change $f$ only once, with a successful CAS (at line~\ref{code-impl-cas-field-success-lin} or line~\ref{code-impl-write-field-success-lin}).
Since $O_{change}$ is linearized at this CAS, it is linearized before $O_i$.
Moreover, since we have assumed that $O_{change}$ is the last operation to change $f$ before $O_i$, no other operation that changes $f$ linearized after $O_{change}$ and before $O_i$.
\end{proof} 

\subsection{Progress}
%
Suppose, to obtain a contradiction, that there is an execution in which processes take infinitely many steps, but only finitely many (extended weak descriptor) operations terminate.
Then, after some time $t$, no operation terminates, which means there is at least one operation $O$ in which a process takes infinitely many steps.
By inspection of Figure~\ref{code-reuse-impl}, $O$ must be a \func{WriteField} or \func{CASField} operation.
Suppose $O$ is a \func{WriteField} operation.
Then, each time $O$ executes line~\ref{code-impl-write-field-invalid-lin}, it sees $old.seq = seq$, and each time it executes line~\ref{code-impl-write-field-success-lin}, its CAS fails and returns $old$ without changing $D_{T,q}.mutables$.
Observe that the CAS will fail only if $D_{T,q}.mutables$ changes after it is read at line~\ref{code-impl-write-field-read} and before the CAS at line~\ref{code-impl-write-field-success-lin}.
Thus, $D_{T,q}.mutables$ changes infinitely many times in the execution.
Since $D_{T,q}.mutables$ can be changed only by \func{WriteField} or \func{CASField} operations, and any operation that changes $D_{T,q}.mutables$ terminates, there must be infinitely many terminating \func{WriteField} or \func{CASField} operations, which is a contradiction.
The proof is similar when $O$ is a \func{CASField}.

\smallskip
\section{Experiments} \label{sec-exp}


Our experiments were run on two large-scale systems.
The first is a 2-socket Intel E7-4830 v3 with 12 cores per socket and 2 hyperthreads (HTs) per core, for a total of 48 threads.
Each core has a private 32KB L1 cache and 256KB L2 cache (which is shared between HTs on a core).
All cores on a socket share a 30MB L3 cache.
The second is a 4-socket AMD Opteron 6380 with 8 cores per socket and 2 HTs per core, for a total of 64 threads.
Each core has a private 16KB L1 data cache and 2MB L2 cache (which is shared between HTs on a core).
All cores on a socket share a 6MB L3 cache.

Since both machines have multiple sockets and a non-uniform memory architecture (NUMA), in all of our experiments, we pinned threads to cores so that the first socket is filled first, then the second socket is filled, and so on.
Furthermore, within each socket, each core has one thread pinned to it before hyperthreading is engaged.
Consequently, our graphs clearly show the effects of hyperthreading and NUMA.

For example, on the Intel machine, from thread counts 1 to 12 all threads are running on a single socket and at most one thread is pinned to each core. \textbf{(socket 1: no HTs; socket 2: empty)}.
From 13 to 24, all threads are running on a single socket and cores either have one or two threads pinned to them \textbf{(socket 1: HTs; socket 2: empty)}.
From 25 to 36, each core on the first socket has two threads pinned to it, and the remaining threads are each pinned to unique cores on the second socket \textbf{(socket 1: HTs; socket 2: no HTs)}.
Finally, from 37 to 48, each core on the first socket has two threads pinned to it, and cores on the second socket have one or two threads pinned to them \textbf{(socket 1: HTs; socket 2: HTs)}.

Both machines have 128GB of RAM.
Each runs Ubuntu 14.04 LTS.
All code was compiled with the GNU C++ compiler (G++) 4.8.4 with build target x86\_64-linux-gnu and compilation options \texttt{-std=c++0x -mcx16 -O3}.
Thread support was provided by the POSIX Threads library.
~We used the Performance Application Programming Interface (PAPI) library~\cite{Browne:2000} to collect statistics from hardware performance counters to determine cache miss rates, stall times, instructions retired, and so on.

The system (glibc) allocator was found to have poor scaling and overall performance.
Instead, we used jemalloc 4.2.1, a fast user-space allocator designed to minimize contention and improve scalability~\cite{Evans:2006}.
The library was dynamically linked with \texttt{LD\_PRELOAD}, which is the recommended method.
This allocator was found to yield vastly superior performance for all algorithms, in all benchmarks.
We also tried the tcmalloc allocator from Google's Perftools library, which is another common choice for concurrency-friendly allocation.
However, performance with tcmalloc was substantially worse for all algorithms than with jemalloc.

On the AMD machine, transparent huge-pages were disabled manually in the jemalloc implementation by changing the default allocation chunk size from $2^{21}$ to $2^{19}$ using the environment parameter setting \texttt{MALLOC\_CONF=lg\_chunk:19}.
This maintained or improved the performance for all algorithms in all workloads, and did not change the performance relationship between any pair of algorithms.
The same change did not improve performance on the Intel machine (for any algorithm or workload), so the original chunk size was used.

For read-heavy workloads, it was necessary to force distribution of pages across NUMA nodes to get consistently high performance.
To achieve this, we used \texttt{numactl --interleave=all} for all workloads.
(Doing this did not negatively impact the performance of any workload, but its benefit was less noticeable for write-heavy workloads.)


\subsection{$k$-CAS microbenchmark} \label{sec-exp-kcas}
In order to compare our reusable descriptor technique with algorithms that reclaim descriptors, we implemented $k$-CAS with several memory reclamation schemes.
Specifically, we implemented a lock-free memory reclamation scheme that aggressively frees memory called \textit{hazard pointers}~\cite{Michael:2004}, a (blocking) epoch-based reclamation scheme called \textit{DEBRA}~\cite{Brown:2015}, and reclamation using the read-copy-update (RCU) primitives~\cite{Desnoyers:2012} (also blocking).
We use \textit{Reuse} as shorthand for our reusable descriptor based algorithm, and \textit{DEBRA}, \textit{HP} and \textit{RCU} to denote the 
\begin{shortver}
other algorithms.
\end{shortver}
\begin{fullver}
algorithms that use DEBRA, hazard pointers and RCU, respectively.
\end{fullver}

The paper by Harris~et~al. also describes an optimization to reduce the number of DCSS descriptors that are allocated by embedding them in the $k$-CAS descriptor.
We applied this optimization, and found that it did not significantly improve performance.
Furthermore, it complicated reclamation with hazard pointers.
Thus, we did not use this optimization.

\fakeparagraph{Methodology}
We compared our implementations of $k$-CAS using a simple array-based microbenchmark.
For each algorithm $A \in \{$\textit{Reuse}, \textit{DEBRA}, \textit{HP}, \textit{RCU}$\}$, array size $S \in \{2^{14}, 2^{20}, 2^{26}\}$ and $k$-CAS parameter $k \in \{2, 16\}$, we run ten timed \textit{trials} for several thread counts $n$.
In each trial, an array of a fixed size $S$ is allocated and each entry is initialized to zero.
Then, $n$ concurrent threads run for one second, during which each thread repeatedly chooses $k$ uniformly random locations in the array, reads those locations, and then performs a $k$-CAS (using algorithm $A$) to increment each location by one.

As a way of validating correctness in each trial, each thread keeps track of how many successful $k$-CAS operations it performs.
At the end of the trial, the sum of entries in the array must be $k$ times the total number of successful $k$-CAS operations over all threads.

\begin{shortver}
\begin{figure}[t]
    \centering
    \setlength\tabcolsep{0pt}
    \begin{minipage}{0.49\linewidth}
    \begin{tabular}{m{0.05\linewidth}m{0.465\linewidth}m{0.465\linewidth}}
        &
        \multicolumn{2}{c}{
            \fcolorbox{black!80}{black!40}{\parbox{\dimexpr 0.93\linewidth-2\fboxsep-2\fboxrule}{\centering\textbf{2x 24-thread Intel E7-4830 v3}}}
        }
        \\
        &
        \fcolorbox{black!50}{black!20}{\parbox{\dimexpr \linewidth-2\fboxsep-2\fboxrule}{\centering {\footnotesize 2-CAS}}} &
        \fcolorbox{black!50}{black!20}{\parbox{\dimexpr \linewidth-2\fboxsep-2\fboxrule}{\centering {\footnotesize 16-CAS}}}
        \\
        \rotatebox{90}{Array size $2^{26}$} &
        \includegraphics[width=\linewidth]{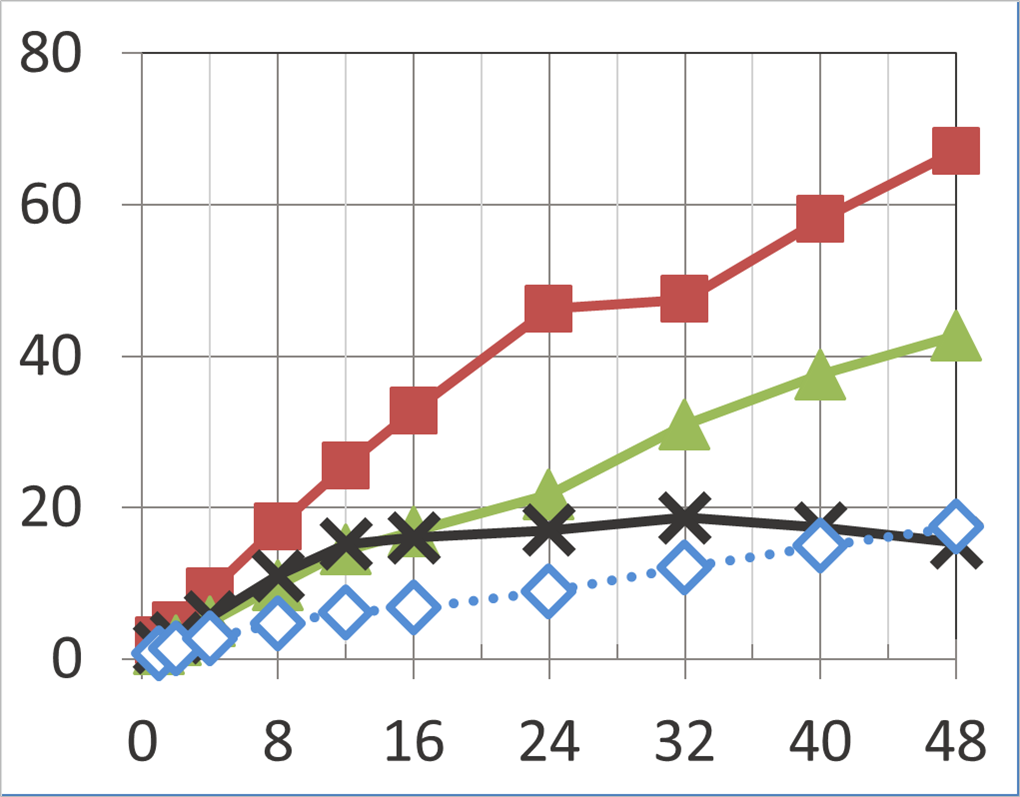} &
        \includegraphics[width=\linewidth]{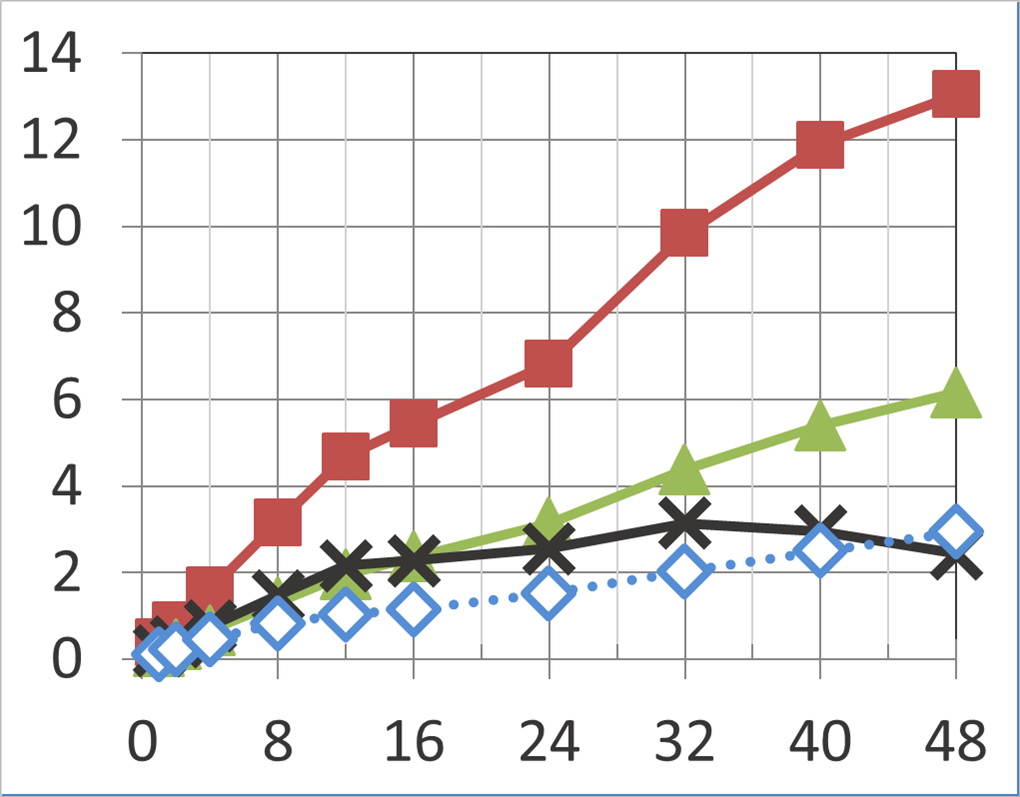}
        \\
        \vspace{-1.5mm}\rotatebox{90}{Array size $2^{20}$} &
        \vspace{-1.5mm}\includegraphics[width=\linewidth]{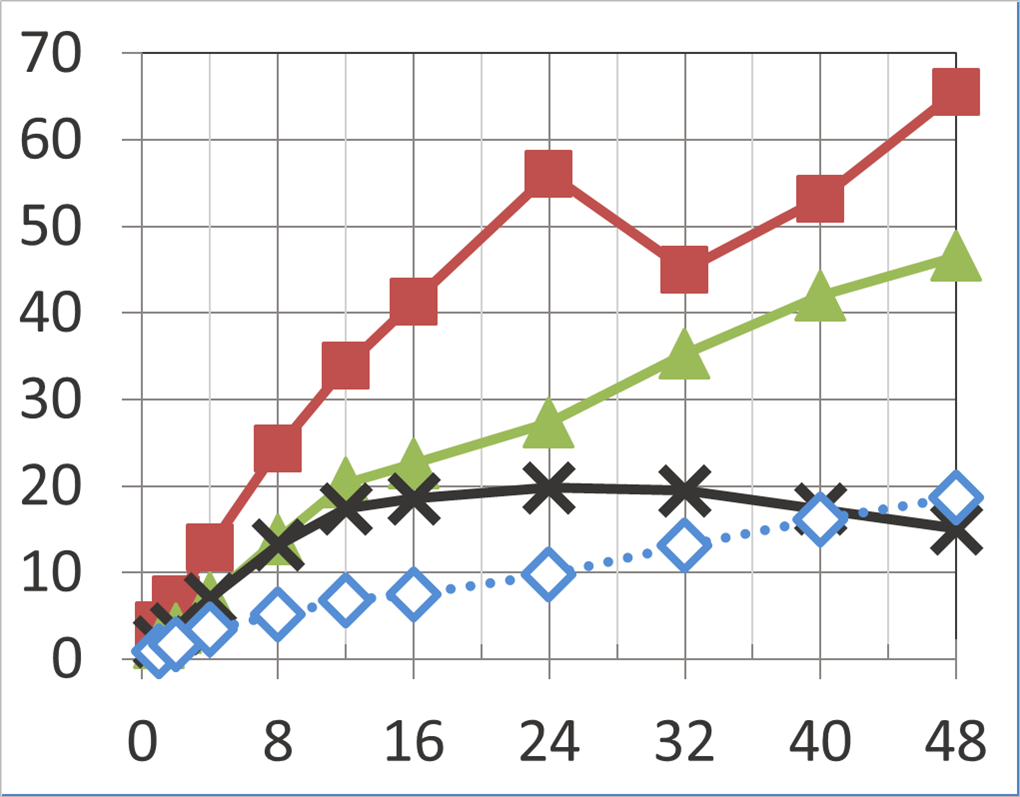} &
        \vspace{-1.5mm}\includegraphics[width=\linewidth]{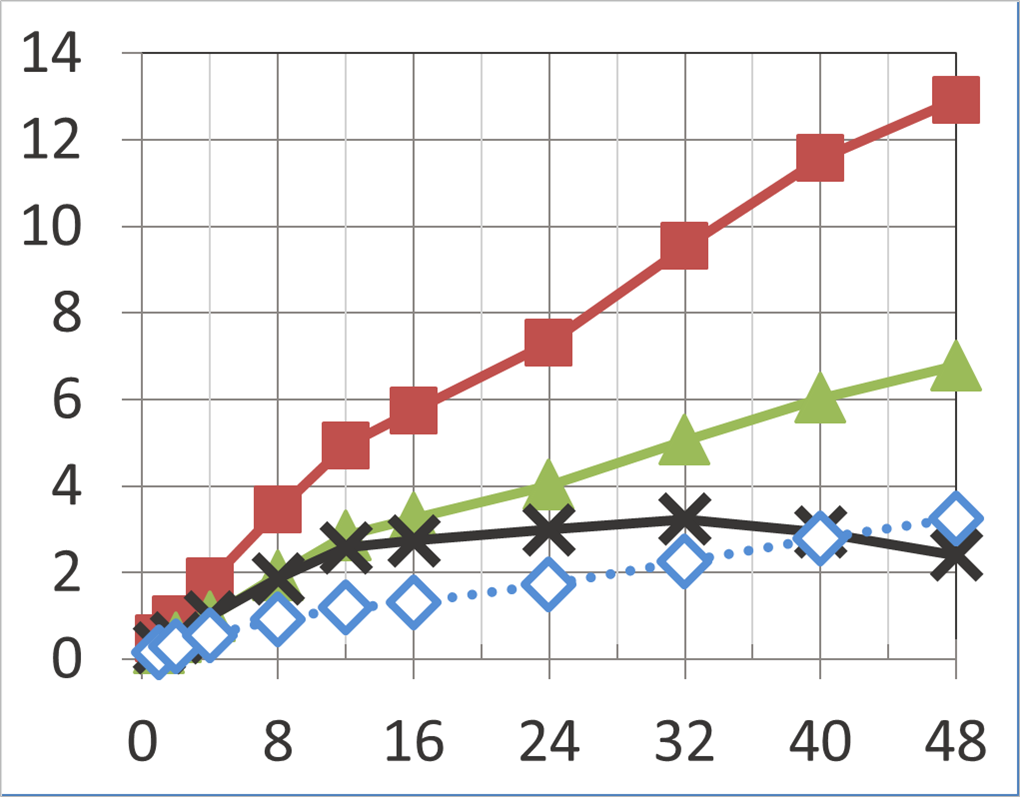}
        \\
        \vspace{-1.5mm}\rotatebox{90}{Array size $2^{14}$} &
        \vspace{-1.5mm}\includegraphics[width=\linewidth]{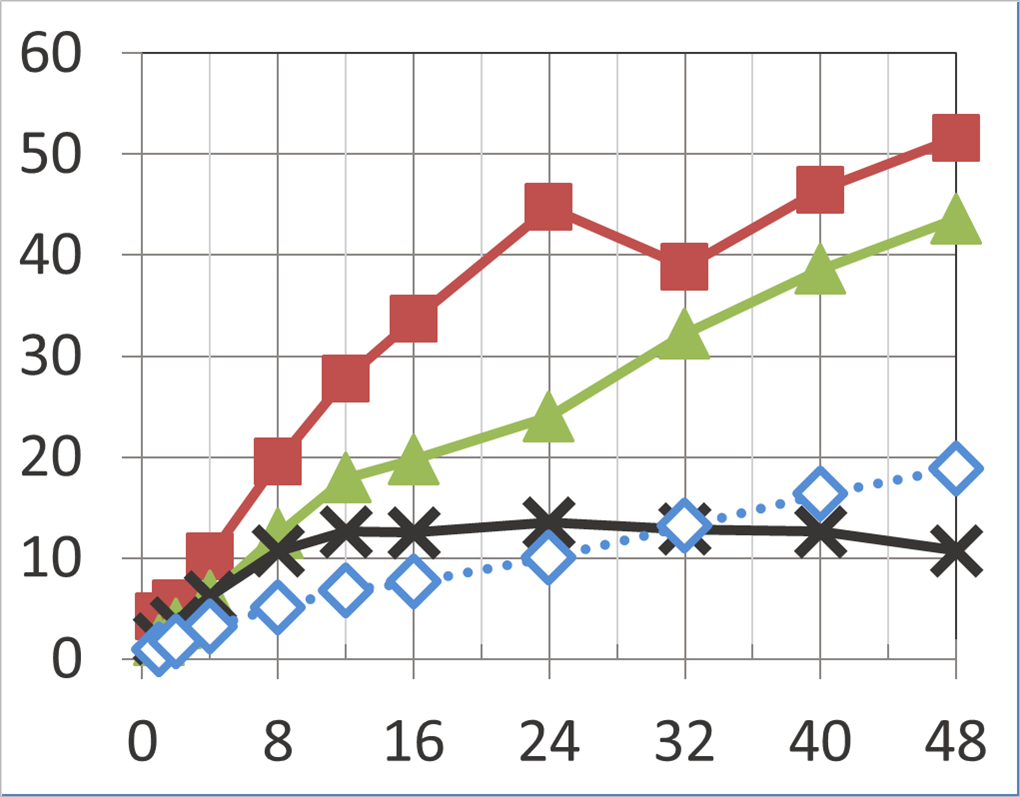} &
        \vspace{-1.5mm}\includegraphics[width=\linewidth]{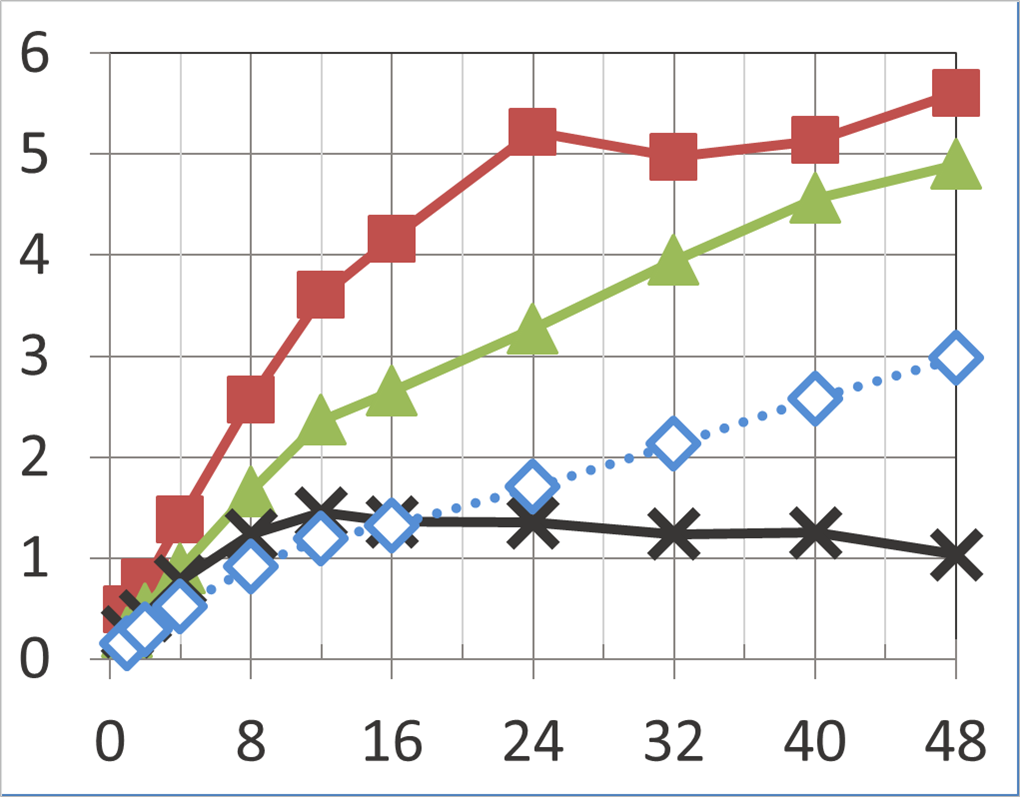}
        \\
    \end{tabular}
    \end{minipage}
    \begin{minipage}{0.49\linewidth}
    \begin{tabular}{m{0.05\linewidth}m{0.465\linewidth}m{0.465\linewidth}}
        &
        \multicolumn{2}{c}{
            \fcolorbox{black!80}{black!40}{\parbox{\dimexpr 0.93\linewidth-2\fboxsep-2\fboxrule}{\centering\textbf{4x 16-thread AMD Opteron 6380}}}
        }
        \\
        &
        \fcolorbox{black!50}{black!20}{\parbox{\dimexpr \linewidth-2\fboxsep-2\fboxrule}{\centering {\footnotesize 2-CAS}}} &
        \fcolorbox{black!50}{black!20}{\parbox{\dimexpr \linewidth-2\fboxsep-2\fboxrule}{\centering {\footnotesize 16-CAS}}}
        \\
        \rotatebox{90}{Array size $2^{26}$} &
        \includegraphics[width=\linewidth]{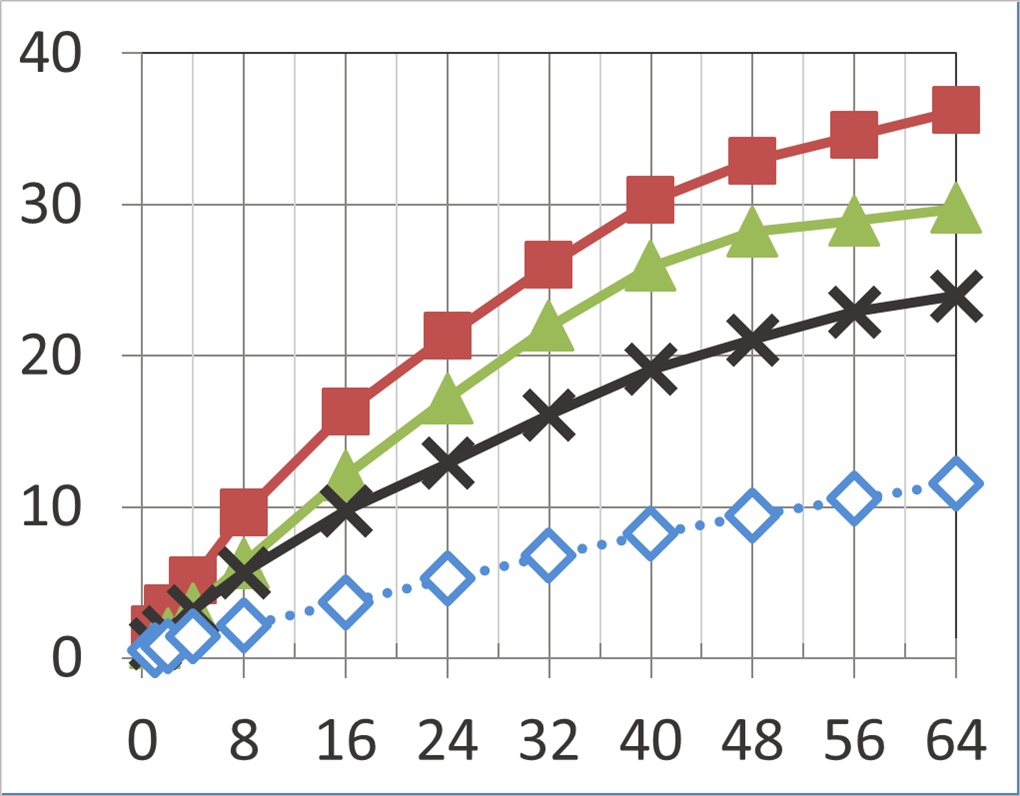} &
        \includegraphics[width=\linewidth]{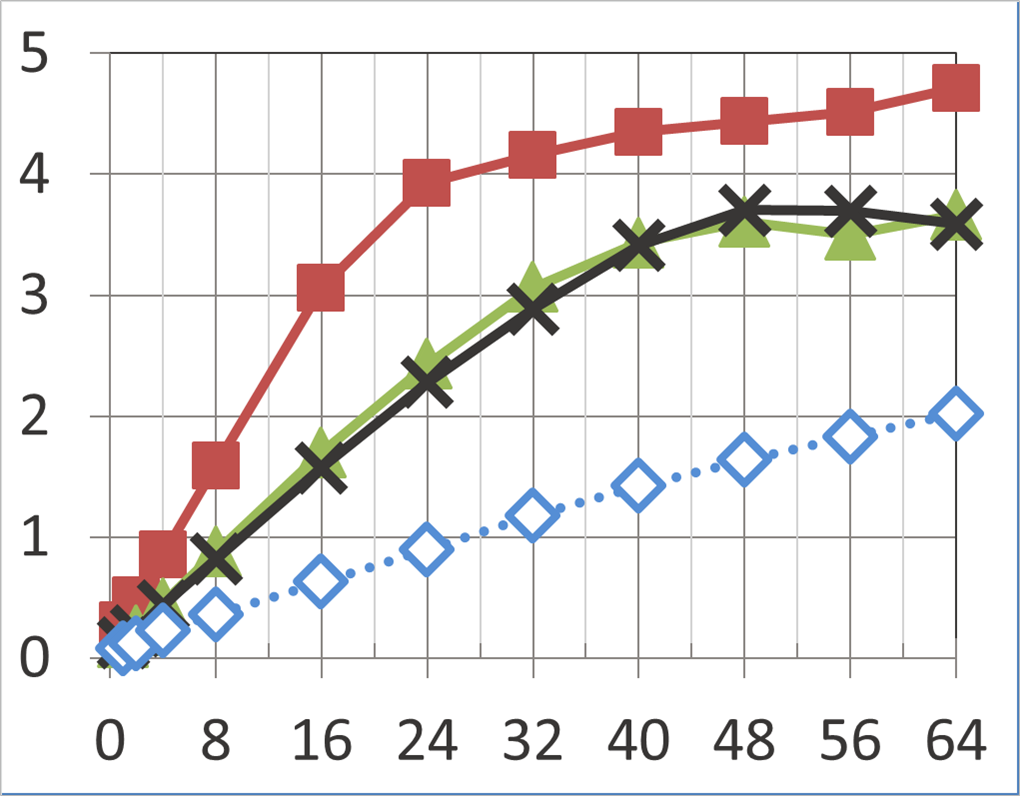}
        \\
        \vspace{-1.5mm}\rotatebox{90}{Array size $2^{20}$} &
        \vspace{-1.5mm}\includegraphics[width=\linewidth]{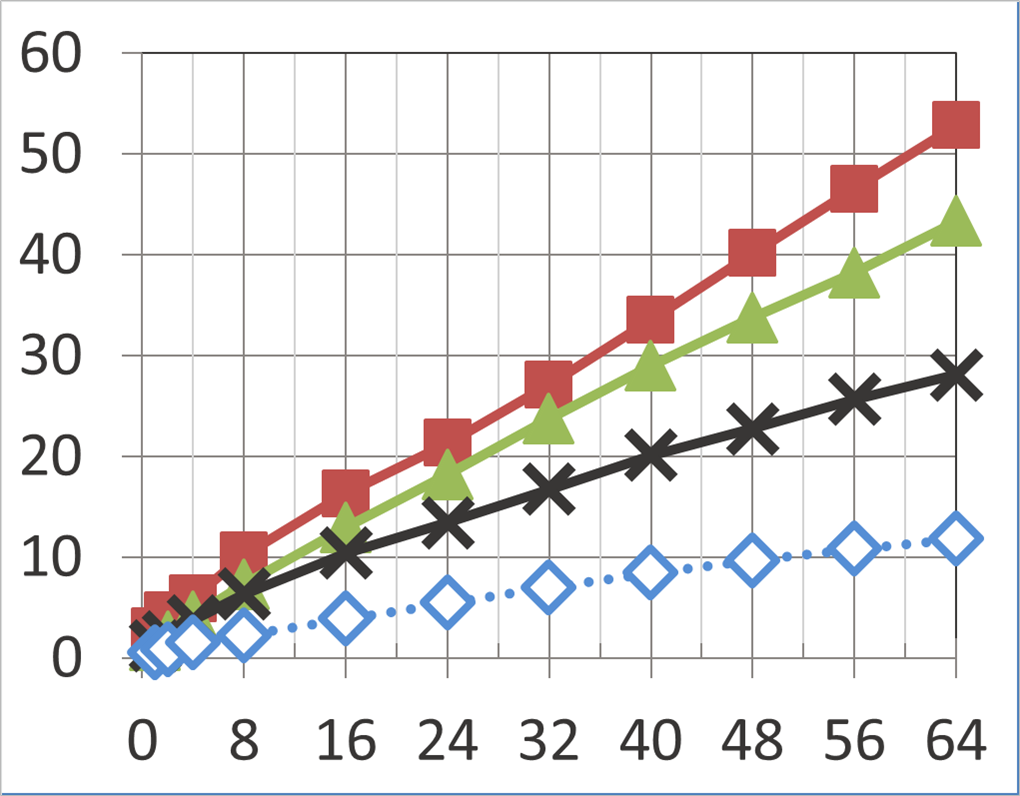} &
        \vspace{-1.5mm}\includegraphics[width=\linewidth]{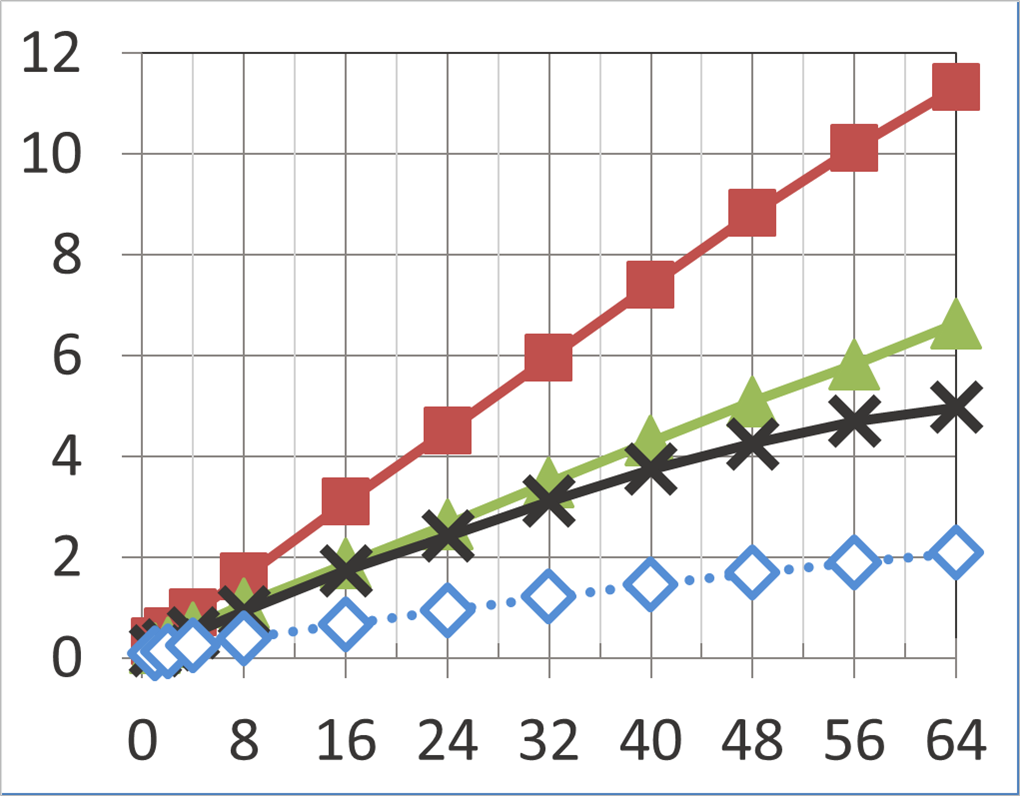}
        \\
        \vspace{-1.5mm}\rotatebox{90}{Array size $2^{14}$} &
        \vspace{-1.5mm}\includegraphics[width=\linewidth]{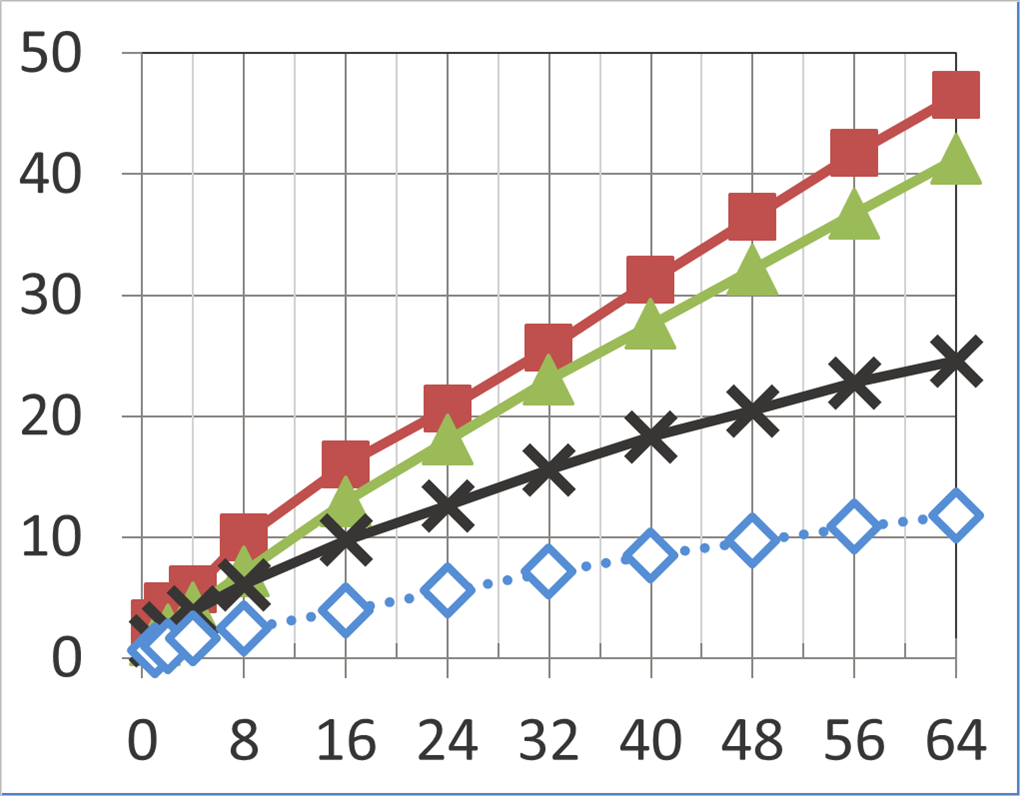} &
        \vspace{-1.5mm}\includegraphics[width=\linewidth]{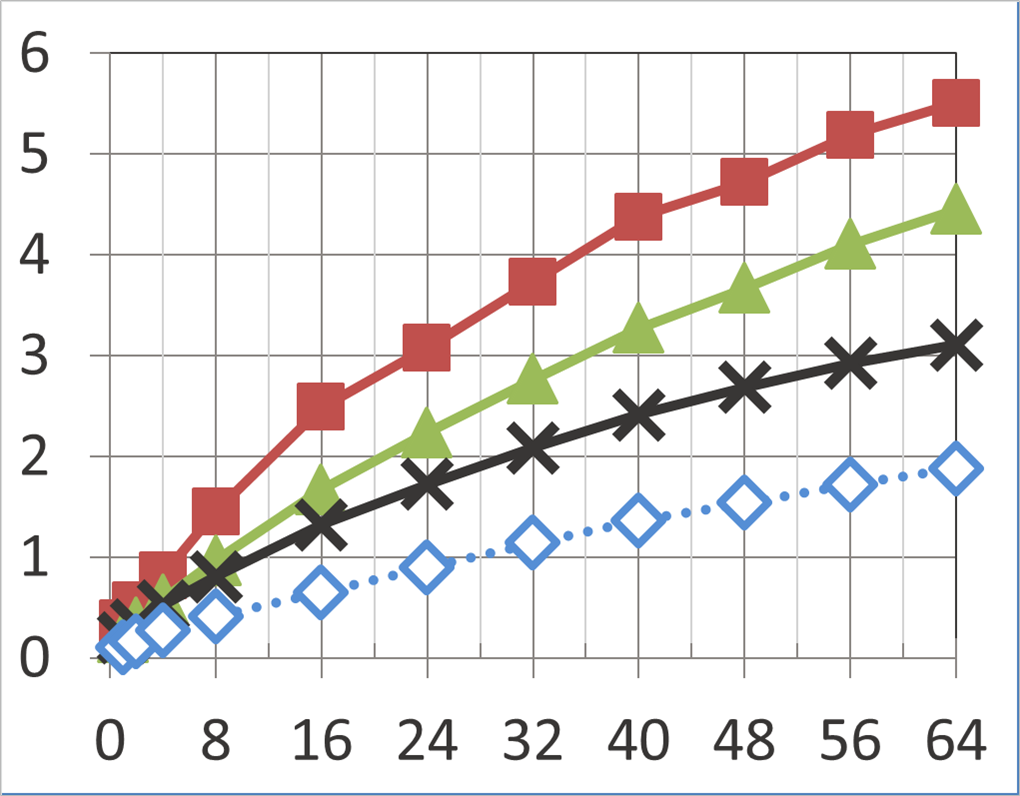}
        \\
    \end{tabular}
    \end{minipage}
\includegraphics[width=0.5\linewidth]{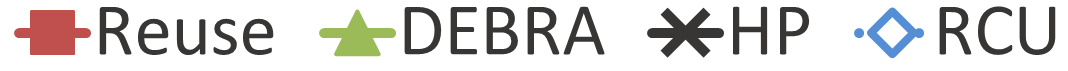}
\vspace{-2mm}
\caption{Results for a \textbf{$k$-CAS microbenchmark}.
The x-axis represents the number of concurrent threads.
The y-axis represents operations per microsecond.}
\label{fig-exp-kcas}
\end{figure}
\end{shortver}

\begin{fullver}
\begin{figure}[t]
    \centering
    \setlength\tabcolsep{0pt}
    \begin{tabular}{m{0.05\linewidth}m{0.31\linewidth}m{0.31\linewidth}m{0.31\linewidth}}
        &
        \multicolumn{3}{c}{
            \fcolorbox{black!80}{black!40}{\parbox{\dimexpr 0.93\linewidth-2\fboxsep-2\fboxrule}{\centering\textbf{2x 24-thread Intel E7-4830 v3}}}
        }
        \\
        &
        \fcolorbox{black!50}{black!20}{\parbox{\dimexpr \linewidth-2\fboxsep-2\fboxrule}{\centering {\footnotesize Array size $2^{26}$}}} &
        \fcolorbox{black!50}{black!20}{\parbox{\dimexpr \linewidth-2\fboxsep-2\fboxrule}{\centering {\footnotesize Array size $2^{20}$}}} &
        \fcolorbox{black!50}{black!20}{\parbox{\dimexpr \linewidth-2\fboxsep-2\fboxrule}{\centering {\footnotesize Array size $2^{14}$}}}
        \\
        \rotatebox{90}{2-CAS} &
        \includegraphics[width=\linewidth]{figures/graphs/kcas-reuse-vs-throw8_48_tapuz40_lgchunk21_67108864-k2.png} &
        \includegraphics[width=\linewidth]{figures/graphs/kcas-reuse-vs-throw8_48_tapuz40_lgchunk21_1048576-k2.png} &
        \includegraphics[width=\linewidth]{figures/graphs/kcas-reuse-vs-throw8_48_tapuz40_lgchunk21_16384-k2.png}
        \\
        \rotatebox{90}{16-CAS} &
        \includegraphics[width=\linewidth]{figures/graphs/kcas-reuse-vs-throw8_48_tapuz40_lgchunk21_67108864-k16.png} &
        \includegraphics[width=\linewidth]{figures/graphs/kcas-reuse-vs-throw8_48_tapuz40_lgchunk21_1048576-k16.png} &
        \includegraphics[width=\linewidth]{figures/graphs/kcas-reuse-vs-throw8_48_tapuz40_lgchunk21_16384-k16.png}
        \\
    \end{tabular}
    \begin{tabular}{m{0.05\linewidth}m{0.31\linewidth}m{0.31\linewidth}m{0.31\linewidth}}
        &
        \multicolumn{3}{c}{
            \fcolorbox{black!80}{black!40}{\parbox{\dimexpr 0.93\linewidth-2\fboxsep-2\fboxrule}{\centering\textbf{4x 16-thread AMD Opteron 6380}}}
        }
        \\
        &
        \fcolorbox{black!50}{black!20}{\parbox{\dimexpr \linewidth-2\fboxsep-2\fboxrule}{\centering {\footnotesize Array size $2^{26}$}}} &
        \fcolorbox{black!50}{black!20}{\parbox{\dimexpr \linewidth-2\fboxsep-2\fboxrule}{\centering {\footnotesize Array size $2^{20}$}}} &
        \fcolorbox{black!50}{black!20}{\parbox{\dimexpr \linewidth-2\fboxsep-2\fboxrule}{\centering {\footnotesize Array size $2^{14}$}}}
        \\
        \rotatebox{90}{2-CAS} &
        \includegraphics[width=\linewidth]{figures/graphs/kcas-reuse-vs-throw8_64_csl-pomela6_lgchunk19_67108864-k2.png} &
        \includegraphics[width=\linewidth]{figures/graphs/kcas-reuse-vs-throw8_64_csl-pomela6_lgchunk19_1048576-k2.png} &
        \includegraphics[width=\linewidth]{figures/graphs/kcas-reuse-vs-throw8_64_csl-pomela6_lgchunk19_16384-k2.png}
        \\
        \rotatebox{90}{16-CAS} &
        \includegraphics[width=\linewidth]{figures/graphs/kcas-reuse-vs-throw8_64_csl-pomela6_lgchunk19_67108864-k16.png} &
        \includegraphics[width=\linewidth]{figures/graphs/kcas-reuse-vs-throw8_64_csl-pomela6_lgchunk19_1048576-k16.png} &
        \includegraphics[width=\linewidth]{figures/graphs/kcas-reuse-vs-throw8_64_csl-pomela6_lgchunk19_16384-k16.png}
        \\
    \end{tabular}
	\includegraphics[width=0.8\linewidth]{figures/graphs/legend-small.png}
\caption{Results for a \textbf{$k$-CAS microbenchmark}.
The x-axis represents the number of concurrent threads.
The y-axis represents operations per microsecond.}
\label{fig-exp-kcas}
\end{figure}
\end{fullver}

\fakeparagraph{Results}
The results for this benchmark appear in Figure~\ref{fig-exp-kcas}.
Error bars are not drawn on the graphs, since more than 97\% of the data points have a standard deviation that is less than 5\% of the mean (making them essentially too small to see).

Overall, \textit{Reuse} outperforms every other algorithm, in every workload, on both machines.
Notably, on the Intel machine, its throughput is \textit{2.2 times} that of the next best algorithm at 48 threads with $k=16$ and array size $2^{26}$.
On the AMD machine, its throughput is 1.7 times that of the next best algorithm at 64 threads with $k=16$ and array size $2^{20}$.




On the Intel machine, with $k=2$, NUMA effects are quite noticeable for \textit{Reuse} in the jump from 24 to 32 threads, as threads begin running on the second socket.
According the statistics we collected with PAPI, this decrease in performance corresponds to an increase in cache misses.
For example, with $k=2$ and an array of size $2^{26}$ in the Intel machine, jumping from 24 threads to 25 increases the number of L3 cache misses per operation from 0.7 to 1.6 (with similar increases in L1 and L2 cache misses and pipeline stalls).
We believe this is due to cross-socket cache invalidations.

From the three graphs for $k=2$ on Intel, we can see that the effect is more severe with larger absolute throughput (since the additive overhead of a cache miss is more significant).
Conversely, the effect is masked by the much smaller throughput of the slower algorithms, and by the substantially lower throughputs in the $k=16$ case, except when the array is of size $2^{14}$.
In the array of size $2^{14}$, contention is extremely high, since each of the 48 threads are accessing 16 $k$-CAS addresses, each of which causes contention on the entire cache line of 8 words, for a total of 6144 array entries contended at any given time.
Thus, cache misses become a dominating factor in the performance on two sockets.
These effects were not observed on the AMD machine.
There, the number of cache misses is not significantly different when crossing socket boundaries, which suggests a robustness to NUMA effects that is not seen on the Intel machine.

Interestingly, absolute throughputs on the AMD machine are larger with array size $2^{20}$ than with sizes $2^{14}$ and $2^{26}$.
This is because the $2^{20}$ array size represents a sweet spot with less contention than the $2^{14}$ size and better cache utilization than the $2^{26}$ size.
For example, with 64 threads and $k=16$, \textit{Reuse} incurred approximately 50\% more cache misses with size $2^{26}$ than with size $2^{20}$, and approximately 50\% of operations helped one another with size $2^{14}$, whereas less than 1\% of operations helped one another with size $2^{20}$.

Note, however, that this is not true on the Intel machine.
There, $2^{26}$ is almost always as fast as $2^{20}$, because of the very large shared L3 cache (which is 5x larger than on the AMD machine).
This is reflected in the increased number of cycles where the processor is stalled (e.g., waiting for cache misses to be served) when moving from size $2^{20}$ to $2^{26}$.
On the Intel machine, stalled cycles increase by $85\%$ per operation, whereas on the AMD machine they increase by a whopping $450\%$ per operation.

\begin{figure}[t]
	\centering
	\setlength\tabcolsep{0pt}
	\begin{tabular}{m{0.05\linewidth}m{0.46\linewidth}} 
		&
		\multicolumn{1}{c}{
			\fcolorbox{black!80}{black!40}{\parbox{\dimexpr 0.46\linewidth-2\fboxsep-2\fboxrule}{\centering\textbf{2x 24-thread Intel E7-4830 v3}}}
		}
		\\
		&
		\fcolorbox{black!50}{black!20}{\parbox{\dimexpr \linewidth-2\fboxsep-2\fboxrule}{\centering {\footnotesize Descriptor footprint (in bytes)}}}
		\\
		\rotatebox{90}{$S=2^{26}, k=16$} &
		\includegraphics[width=\linewidth]{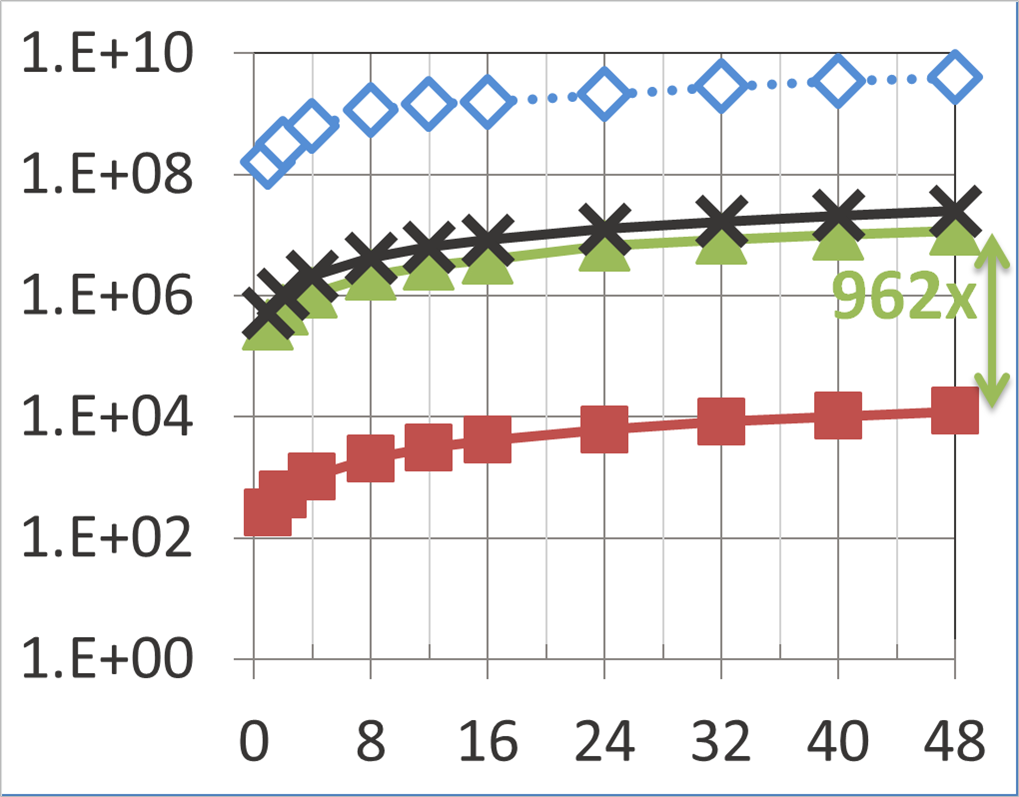}
		\\
	\end{tabular}
	\includegraphics[width=0.5\linewidth]{figures/graphs/legend-small.png}
	\caption{Memory usage for the \textbf{$k$-CAS microbenchmark}.
		The x-axis represents the number of concurrent threads. \textit{Note the logarithmic scale.}} 
	\label{fig-exp-memory}
\end{figure}

\subsubsection{Memory usage in the $k$-CAS benchmark}

We studied memory usage for all algorithms, in all workloads, on both systems, but we only show results for array size $2^{26}$ and $k=16$, because the other graphs are very similar.
These results appear in Figure~\ref{fig-exp-memory}.
%
%
In particular, we are interested in the descriptor footprint, i.e., the maximum amount of memory ever occupied by descriptors in an execution.
Unfortunately, computing the descriptor footprint exactly would require excessive synchronization between threads.
Thus, we approximate the descriptor footprint by computing the descriptor footprint \textit{for each thread}, and then summing those individual footprints.
(This is only an approximation, since different threads may hit their peak memory usage for descriptors at different times.)
The graph in Figure~\ref{fig-exp-memory} contains the results of this approximation.

These results were obtained as follows.
Each thread used three private variables: \textit{totalFree}, \textit{totalMalloc} and \textit{maxFootprint}.
Each time a thread invoked \texttt{free}, it incremented \textit{totalFree} by the size of the descriptor being freed.
Each time a thread invoked \texttt{malloc}, it incremented \textit{totalMalloc} by the size of the descriptor being allocated, and then set $\mbox{\textit{maxFootprint}} = max\{\mbox{\textit{maxFootprint}}, \mbox{\textit{totalMalloc}} - \mbox{\textit{totalFree}}\}$.
The per-thread \textit{maxFootprint}s are then summed to obtain the data points in the graph.

Note that the $y$-axis is a logarithmic scale.
The results show that \textit{DEBRA} and \textit{HPs} use almost \textbf{three orders of magnitude} more memory than \textit{Reuse} at their peaks, and \textit{RCU} uses nearly three orders of magnitude more memory than \textit{DEBRA} and \textit{HPs}.
\textit{RCU}'s memory usage is significantly higher because reclamation is delayed significantly longer than in the other algorithms. 

\subsection{BST microbenchmark}

Unlike in the $k$-CAS algorithm, where memory reclamation was only needed for descriptors, in the BST, memory reclamation is always needed for nodes.
To compare our technique with different memory reclamation options, we implemented four variants of the BST algorithm: \textit{DEBRA/DEBRA}, \textit{DEBRA/Reuse}, \textit{RCU/RCU} and \textit{RCU/Reuse}.
Here, an algorithm named \textit{X/Y} uses \textit{X} to reclaim nodes and \textit{Y} for descriptors.
For example, \textit{DEBRA/Reuse} uses DEBRA to reclaim nodes and has reusable descriptors.

\begin{thesisnot}
Unfortunately, we could not create a variant of the BST using hazard pointers.
As part of the \textit{finalizing} mechanism, this BST implementation \textit{marks} nodes before deleting them.
Furthermore searches are allowed to traverse marked nodes, regardless of whether they have been deleted, and subsequently succeed.
These algorithmic properties make it infeasible to use hazard pointers~\cite{Brown:2015}.
\end{thesisnot}

\medskip

\fakeparagraph{Methodology}
We compared our BST variants using a simple randomized microbenchmark.
For each algorithm $A \in \{$\textit{DEBRA/DEBRA}, \textit{DEBRA/Reuse}, \textit{RCU/RCU}, \textit{RCU/Reuse}$\}$, key range size $K \in \{10^5, 10^6\}$ and update rate $U \in \{100, 0\}$, we run ten timed \textit{trials} for several thread counts $n$.
Each trial proceeds in two phases: \textit{prefilling} and \textit{measuring}.
In the prefilling phase, $n$ concurrent threads perform 50\% \textit{Insert} and 50\% \textit{Delete} operations on keys drawn uniformly randomly from $[0, K)$ until the size of the tree converges to a steady state (containing approximately $K/2$ keys).
Next, the trial enters the measuring phase, during which threads begin counting how many operations they perform.
(These counts are eventually summed over all threads and reported in our graphs.)
In this phase, each thread instead performs $(U/2)$\% \textit{Insert}, $(U/2)$\% \textit{Delete} and $(100-U)$\% \textit{Find} operations on keys drawn uniformly from $[0,K)$ for one second.

As a way of validating correctness in each trial, each thread maintains a \textit{checksum}.
Each time a thread inserts a new key, it adds the key to its checksum.
Each time a thread deletes a key, it subtracts the key from its checksum.
At the end of the trial, the sum of all thread checksums must be equal to the sum of keys in the tree.

\medskip

\fakeparagraph{Results}
\begin{shortver}
\begin{figure}[t]
    \centering
    \setlength\tabcolsep{0pt}
    \begin{minipage}{0.49\linewidth}
    \begin{tabular}{m{0.05\linewidth}m{0.465\linewidth}m{0.465\linewidth}}
        &
        \multicolumn{2}{c}{
            \fcolorbox{black!80}{black!40}{\parbox{\dimexpr 0.93\linewidth-2\fboxsep-2\fboxrule}{\centering\textbf{2x 24-thread Intel E7-4830 v3}}}
        }
        \\
        &
        \fcolorbox{black!50}{black!20}{\parbox{\dimexpr \linewidth-2\fboxsep-2\fboxrule}{\centering {\footnotesize range $[0, 10^5)$}}} &
        \fcolorbox{black!50}{black!20}{\parbox{\dimexpr \linewidth-2\fboxsep-2\fboxrule}{\centering {\footnotesize range $[0, 10^6)$}}}
        \\
        \rotatebox{90}{100\% updates} &
        \includegraphics[width=\linewidth]{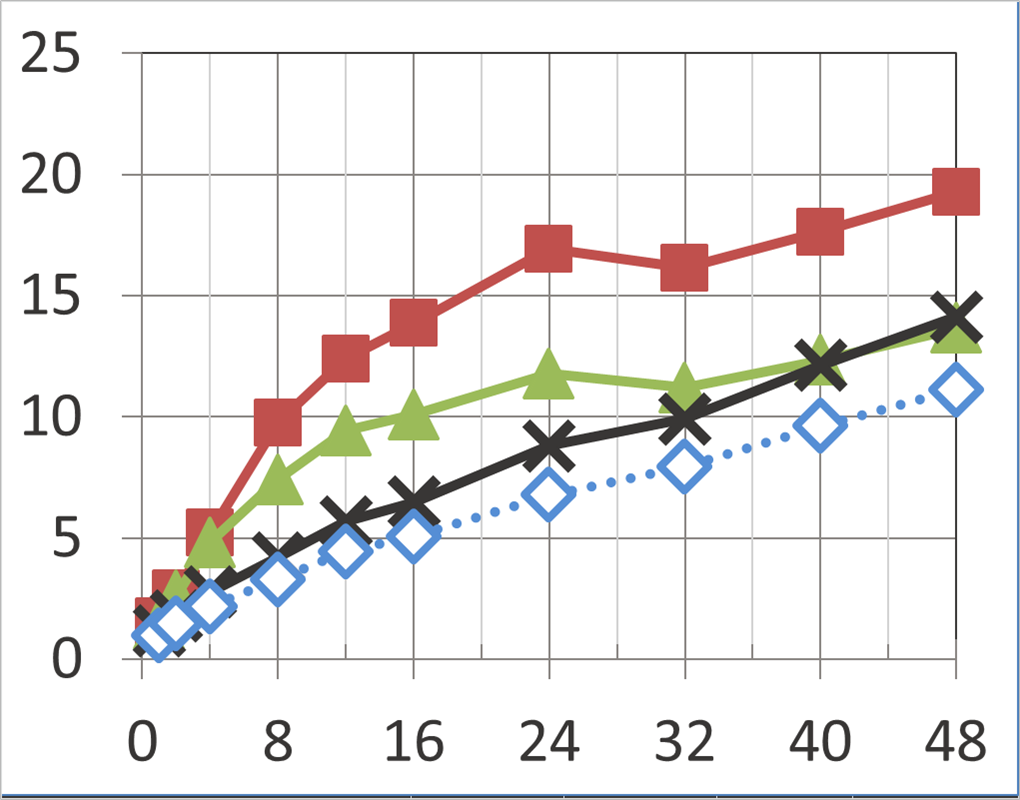} &
        \includegraphics[width=\linewidth]{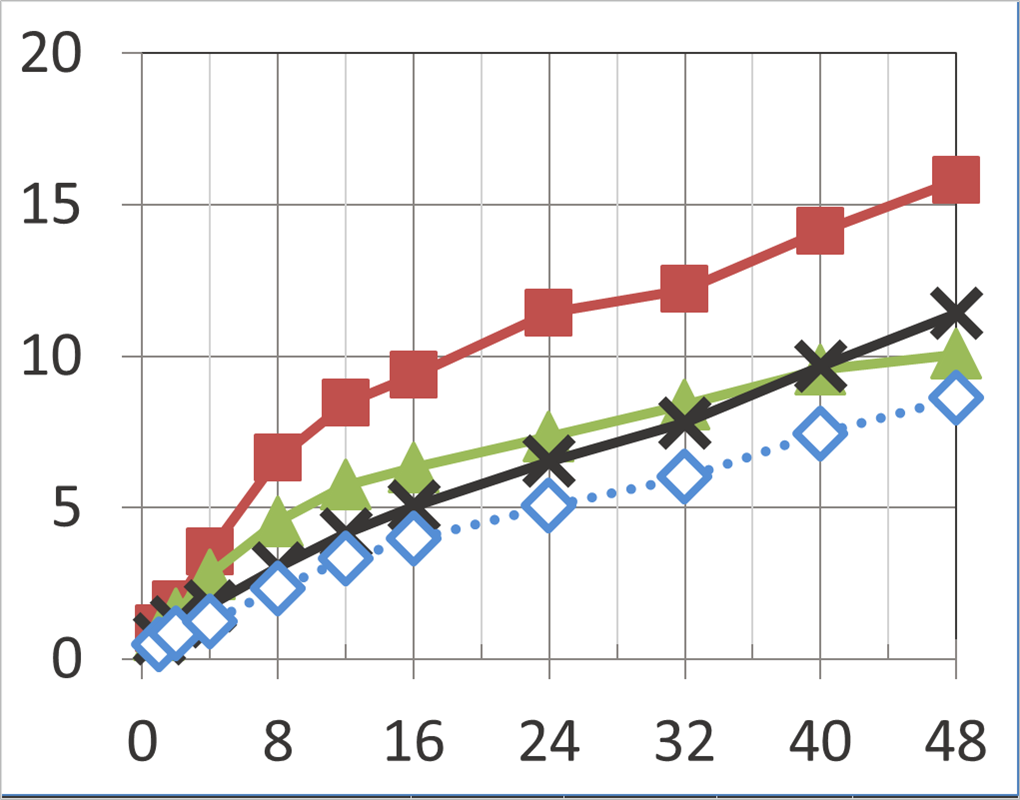}
        \\
        \rotatebox{90}{0\% updates} &
        \includegraphics[width=\linewidth]{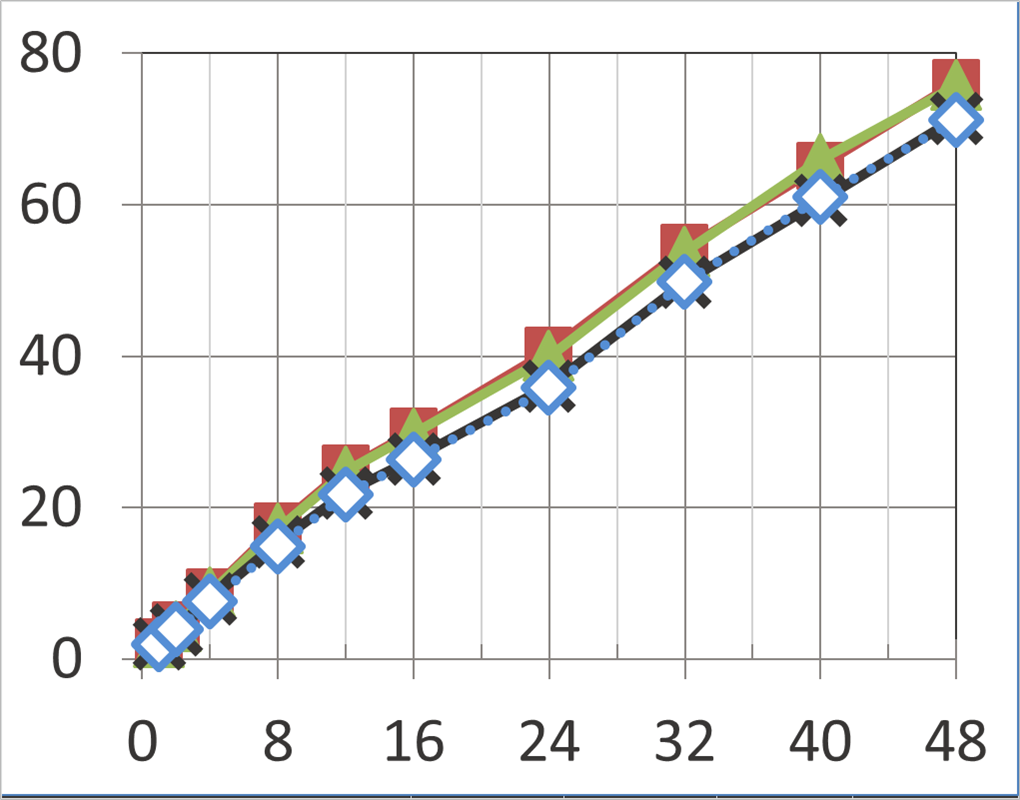} &
        \includegraphics[width=\linewidth]{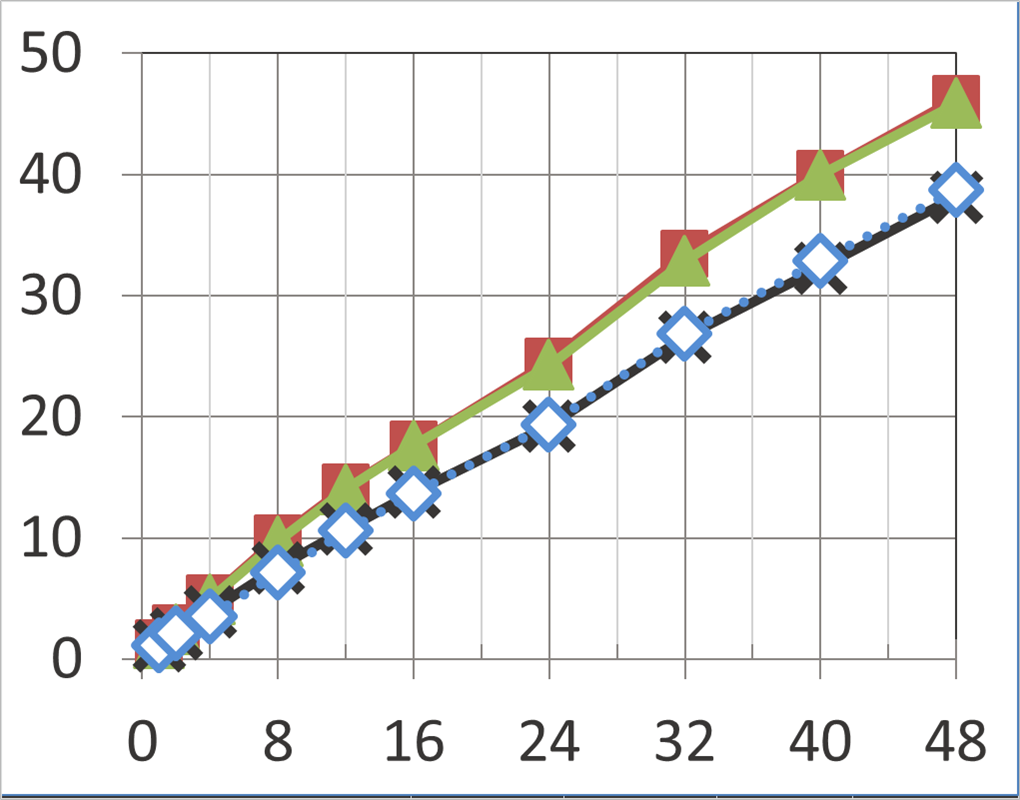}
        \\
    \end{tabular}
    \end{minipage}
    \begin{minipage}{0.49\linewidth}
    \begin{tabular}{m{0.05\linewidth}m{0.465\linewidth}m{0.465\linewidth}}
        &
        \multicolumn{2}{c}{
            \fcolorbox{black!80}{black!40}{\parbox{\dimexpr 0.93\linewidth-2\fboxsep-2\fboxrule}{\centering\textbf{4x 16-thread AMD Opteron 6380}}}
        }
        \\
        &
        \fcolorbox{black!50}{black!20}{\parbox{\dimexpr \linewidth-2\fboxsep-2\fboxrule}{\centering {\footnotesize range $[0, 10^5)$}}} &
        \fcolorbox{black!50}{black!20}{\parbox{\dimexpr \linewidth-2\fboxsep-2\fboxrule}{\centering {\footnotesize range $[0, 10^6)$}}}
        \\
        \rotatebox{90}{100\% updates} &
        \includegraphics[width=\linewidth]{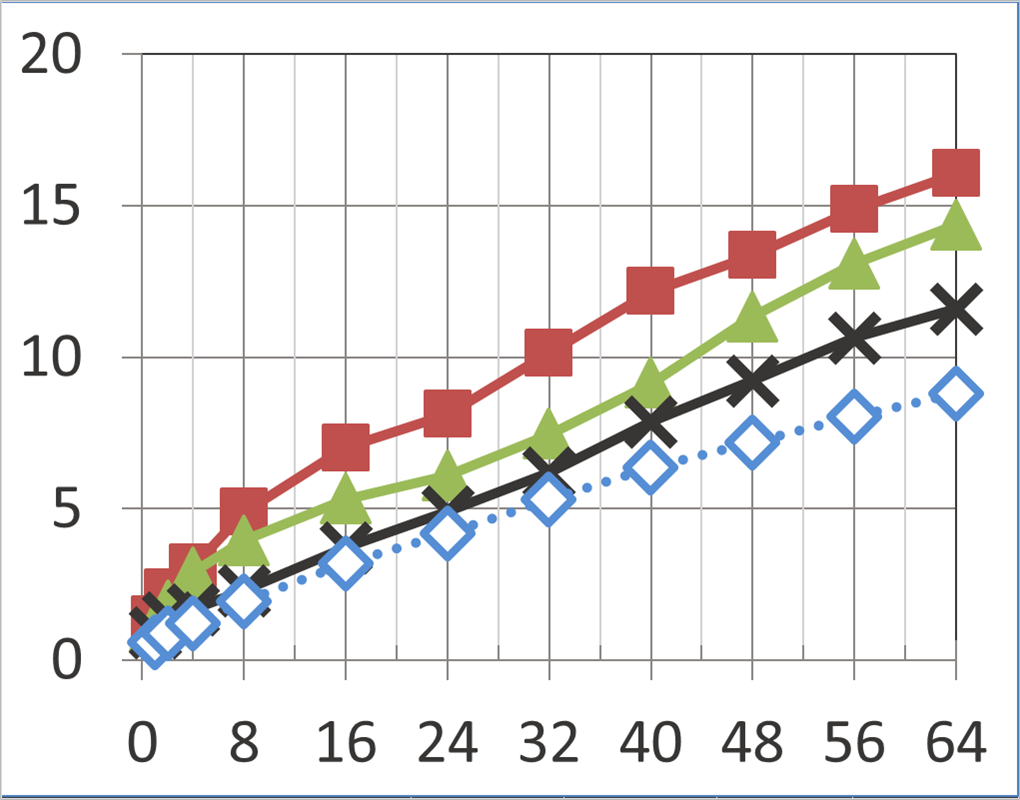} &
        \includegraphics[width=\linewidth]{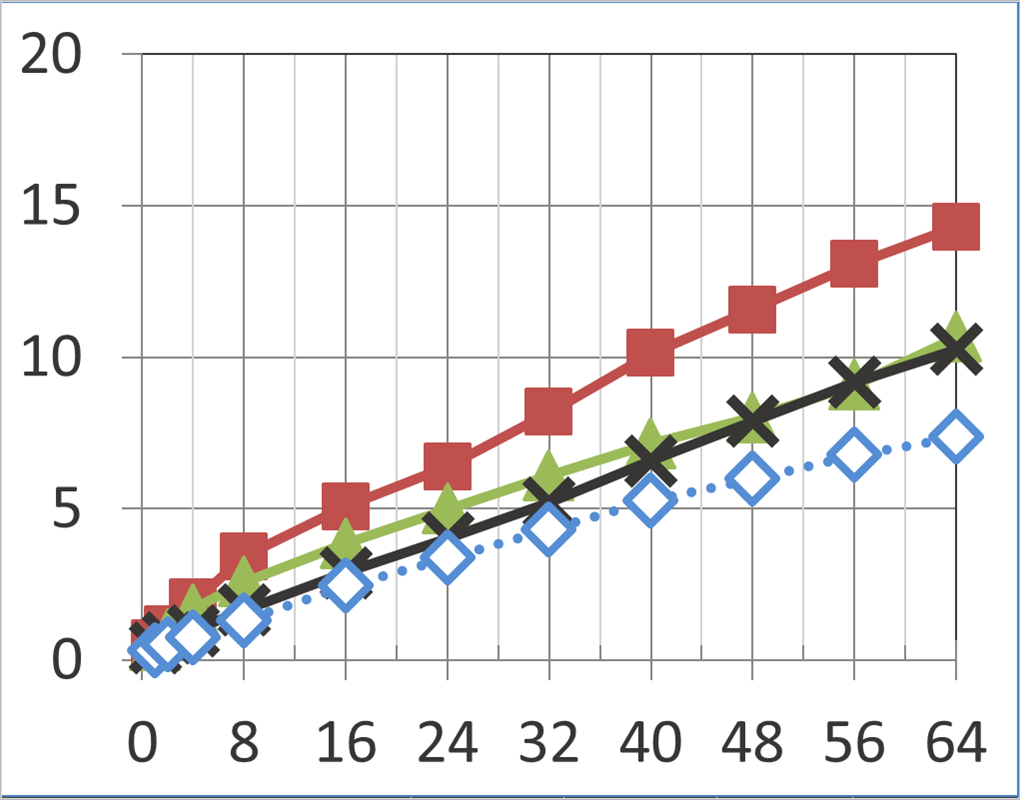}
        \\
        \rotatebox{90}{0\% updates} &
        \includegraphics[width=\linewidth]{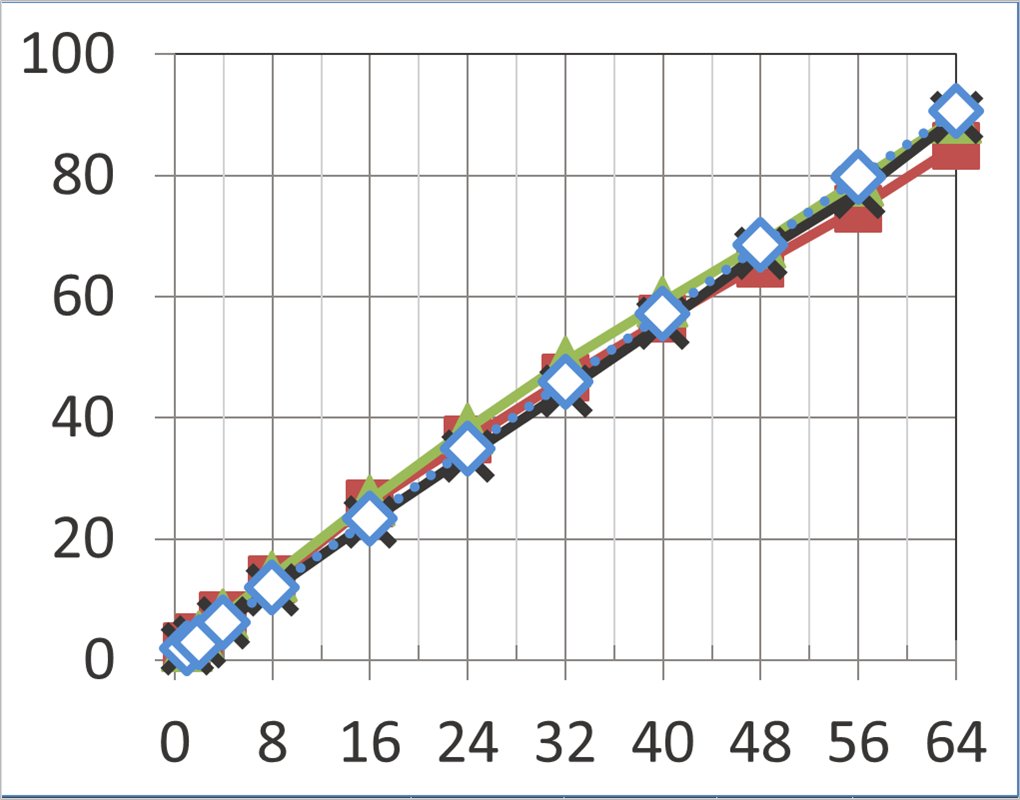} &
        \includegraphics[width=\linewidth]{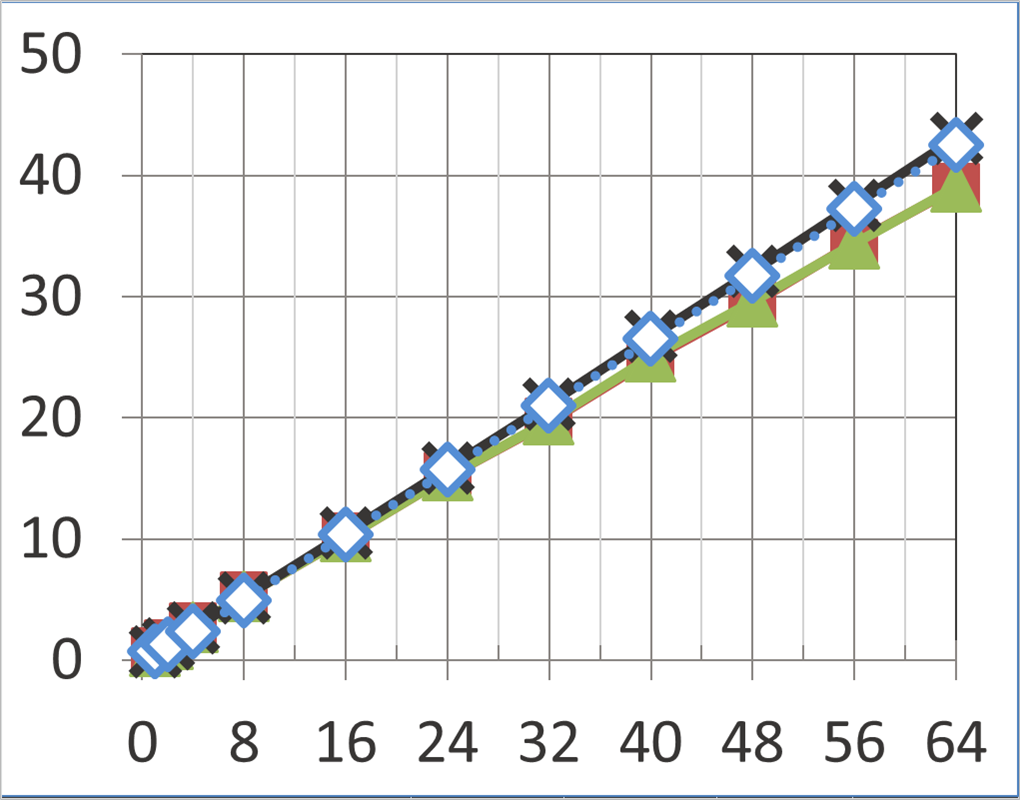}
        \\
    \end{tabular}
    \end{minipage}
	\includegraphics[width=\linewidth]{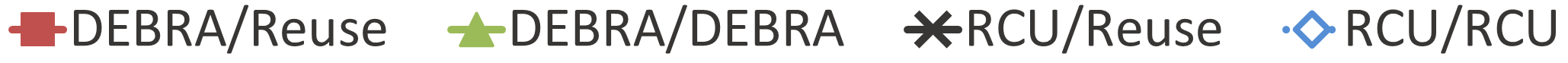}
\caption{Results for a \textbf{BST microbenchmark}.
The x-axis represents the number of concurrent threads.
The y-axis represents operations per microsecond.}
\label{fig-exp-bst}
\end{figure}
\end{shortver}
\begin{fullver}
\begin{figure}[th]
    \centering
    \setlength\tabcolsep{0pt}
    \begin{tabular}{m{0.05\linewidth}m{0.31\linewidth}m{0.31\linewidth}m{0.31\linewidth}}
        &
        \multicolumn{3}{c}{
            \fcolorbox{black!80}{black!40}{\parbox{\dimexpr 0.93\linewidth-2\fboxsep-2\fboxrule}{\centering\textbf{2x 24-thread Intel E7-4830 v3}}}
        }
        \\
        &
        \fcolorbox{black!50}{black!20}{\parbox{\dimexpr \linewidth-2\fboxsep-2\fboxrule}{\centering {\footnotesize 100\% updates}}} &
        \fcolorbox{black!50}{black!20}{\parbox{\dimexpr \linewidth-2\fboxsep-2\fboxrule}{\centering {\footnotesize 10\% updates}}} &
        \fcolorbox{black!50}{black!20}{\parbox{\dimexpr \linewidth-2\fboxsep-2\fboxrule}{\centering {\footnotesize 0\% updates}}}
        \\
        \rotatebox{90}{\small range $[0, 10^6)$} &
        \includegraphics[width=\linewidth]{figures/graphs/reuse-vs-throw8_48_tapuz40_lgchunk21_1000000k-50i-50d.png} &
        \includegraphics[width=\linewidth]{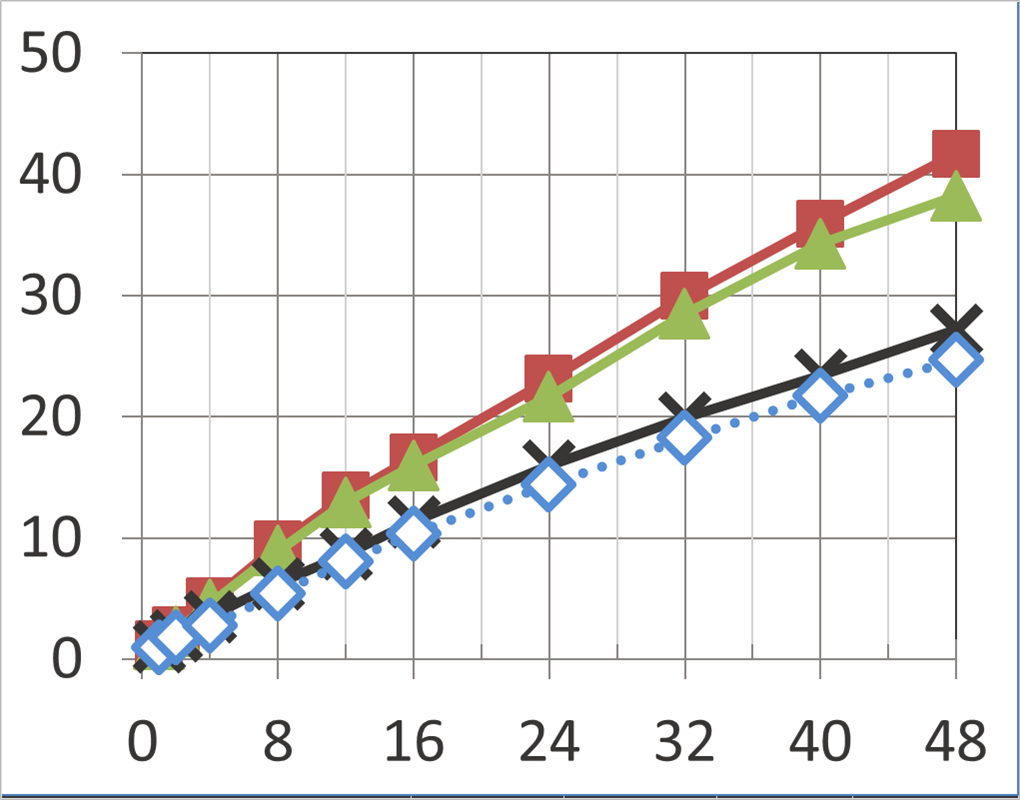} &
        \includegraphics[width=\linewidth]{figures/graphs/reuse-vs-throw8_48_tapuz40_lgchunk21_1000000k-0i-0d.png}
        \\
        \rotatebox{90}{\small range $[0, 10^5)$} &
        \includegraphics[width=\linewidth]{figures/graphs/reuse-vs-throw8_48_tapuz40_lgchunk21_100000k-50i-50d.png} &
        \includegraphics[width=\linewidth]{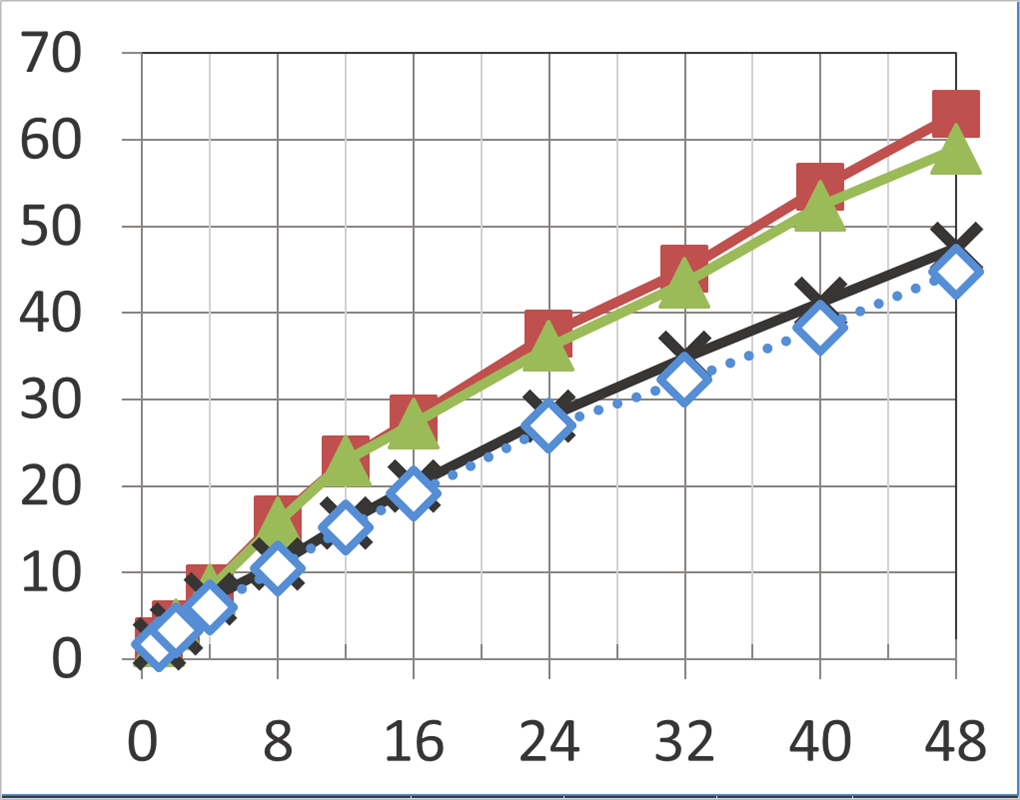} &
        \includegraphics[width=\linewidth]{figures/graphs/reuse-vs-throw8_48_tapuz40_lgchunk21_100000k-0i-0d.png}
        \\
    \end{tabular}

    \begin{tabular}{m{0.05\linewidth}m{0.31\linewidth}m{0.31\linewidth}m{0.31\linewidth}}
        &
        \multicolumn{3}{c}{
            \fcolorbox{black!80}{black!40}{\parbox{\dimexpr 0.93\linewidth-2\fboxsep-2\fboxrule}{\centering\textbf{4x 16-thread AMD Opteron 6380}}}
        }
        \\
        &
        \fcolorbox{black!50}{black!20}{\parbox{\dimexpr \linewidth-2\fboxsep-2\fboxrule}{\centering {\footnotesize 100\% updates}}} &
        \fcolorbox{black!50}{black!20}{\parbox{\dimexpr \linewidth-2\fboxsep-2\fboxrule}{\centering {\footnotesize 10\% updates}}} &
        \fcolorbox{black!50}{black!20}{\parbox{\dimexpr \linewidth-2\fboxsep-2\fboxrule}{\centering {\footnotesize 0\% updates}}}
        \\
        \rotatebox{90}{\small range $[0, 10^6)$} &
        \includegraphics[width=\linewidth]{figures/graphs/reuse-vs-throw8_64_csl-pomela6_lgchunk19_1000000k-50i-50d.png} &
        \includegraphics[width=\linewidth]{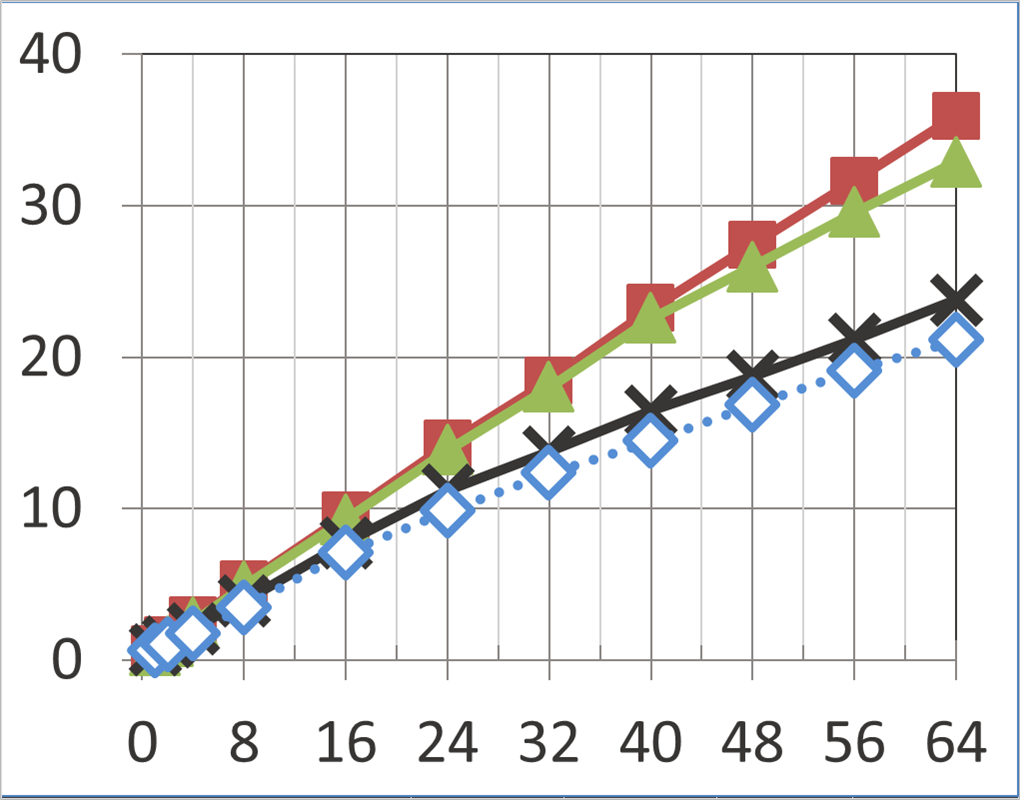} &
        \includegraphics[width=\linewidth]{figures/graphs/reuse-vs-throw8_64_csl-pomela6_lgchunk19_1000000k-0i-0d.png}
        \\
        \rotatebox{90}{\small range $[0, 10^5)$} &
        \includegraphics[width=\linewidth]{figures/graphs/reuse-vs-throw8_64_csl-pomela6_lgchunk19_100000k-50i-50d.png} &
        \includegraphics[width=\linewidth]{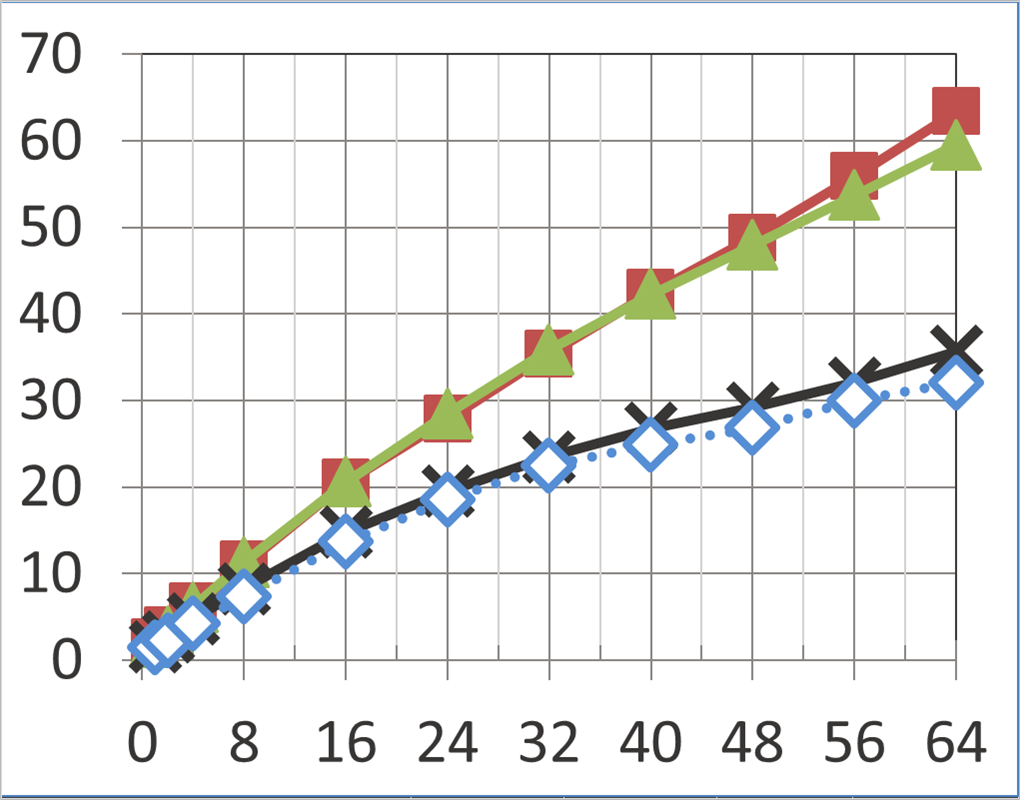} &
        \includegraphics[width=\linewidth]{figures/graphs/reuse-vs-throw8_64_csl-pomela6_lgchunk19_100000k-0i-0d.png}
        \\
    \end{tabular}
	\includegraphics[width=\linewidth]{figures/graphs/legend-bst.png}
\caption{Results for a \textbf{BST microbenchmark}.
The x-axis represents the number of concurrent threads.
The y-axis represents operations per microsecond.}
\label{fig-exp-bst}
\end{figure}
\end{fullver}
The results for this benchmark appear in Figure~\ref{fig-exp-bst}.
The \textit{Reuse} variants perform at least as well as the pure reclamation variants in every case, and significantly outperform the reclamation variants in the 100\% update workload.
Most notably, on the Intel machine with key range $[0, 10^6]$ and 48 threads, \textit{DEBRA/Reuse} outperforms \textit{DEBRA/DEBRA} by 57\%, and \textit{RCU/Reuse} outperforms \textit{RCU/RCU} by 33\%.
As expected, \textit{Reuse} does not perform significantly faster than the reclamation variants in the workloads with no updates.
This is because searches do not create descriptors.
However, crucially, our transformation does not impose any overhead on searches, either. 

\subsection{Studying sequence number wraparound} \label{sec-exp-wraparound}

We performed experiments on the larger AMD machine to study how frequently errors occur when sequence numbers of varying bit-widths experience wraparound.
For each bit-width $B \in \{2,3,4,...,48\}$, we performed 200 trials in which 64 threads run for 100 milliseconds before terminating.
Each trial was the same as a trial in our BST experiments with 100\% updates and key range $[0, 10^5)$.

We identified three different types of errors in these trials.
First, at the end of a trial, the sum of the checksums maintained by all threads would fail to match the sum of keys in the tree.
Second, threads would enter infinite loops due to the tree structure being corrupted, e.g., because a cycle was introduced.
(We identified this type of error by waiting until some thread had run twice as long as it should have.)
Third, an invalid memory access would cause immediate program failure (e.g., due to segmentation fault or bus error).

\begin{figure}[t]
\centering
    \small
    \begin{tabular}{m{0.6\linewidth}m{0.3\linewidth}}
        \includegraphics[width=\linewidth]{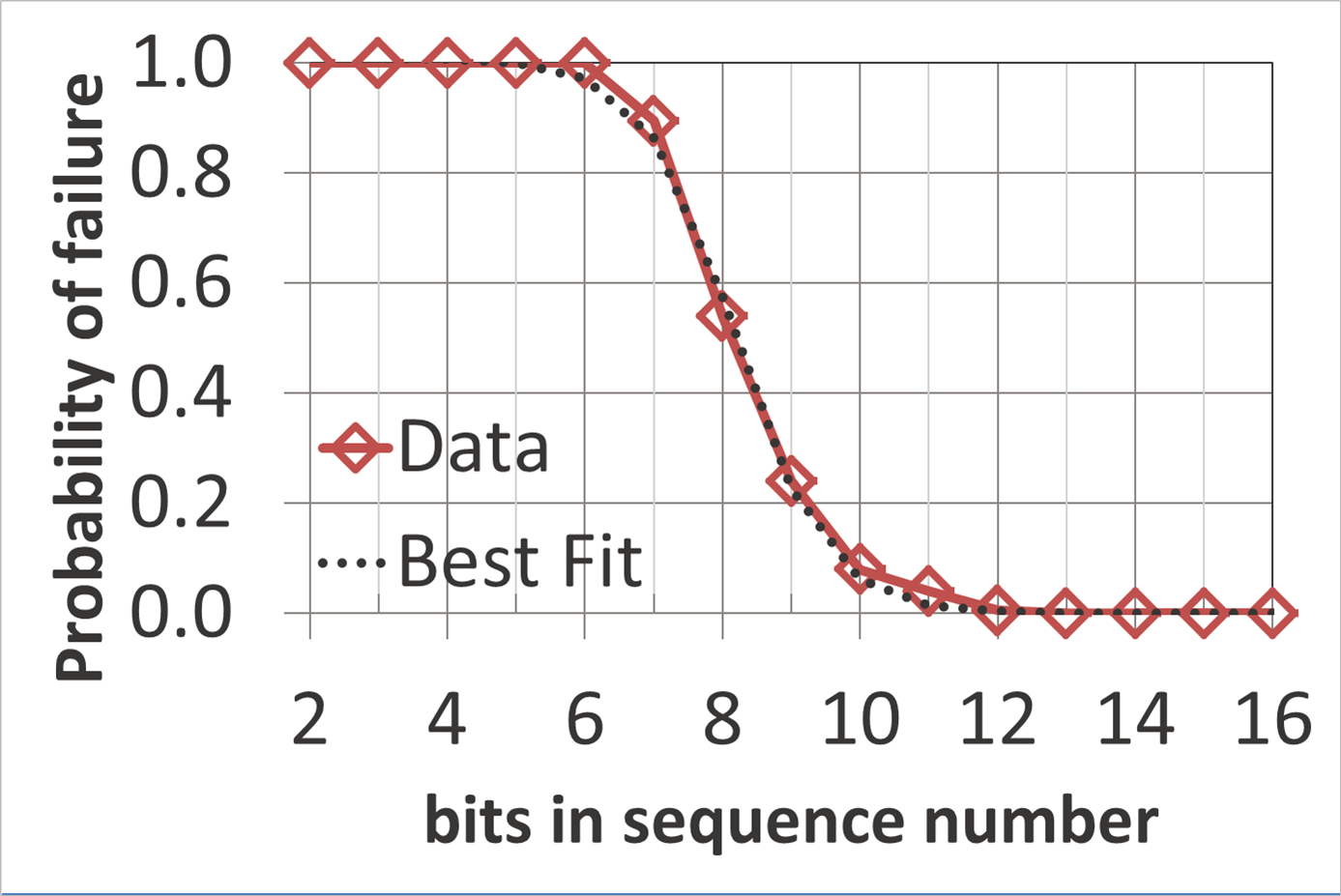} &
        \begin{tabular}{|r|l|}
            \hline
            Bits & E[time until error] \\
            \hline
            16 & 4.5 hours \\
            24 & 116 years \\
            32 & 26078192 years \\
            48 & $> 10^{18}$ years \\
            \hline
        \end{tabular}
    \end{tabular}
    \caption{
Experiment studying sequence number wraparound. 
}
    \label{fig-exp-wraparound}
\vspace{-4mm}
\end{figure}

For each $B$ value, we divided the number of failed runs by 200 to estimate the probability of a trial failing.
A graph showing the resulting estimated probability distribution appears in Figure~\ref{fig-exp-wraparound}.
For small $B$ values, trials frequently experienced errors.
However, for $B \ge 13$, we did not observe a single error in 200 trials (despite the fact that wraparound consistently occurred in every trial).
For $B \ge 16$, trials were not sufficiently long for wraparound to consistently occur.
The results appear in Figure~\ref{fig-exp-wraparound}.

As is common in physics when studying unknown functions, we make an educated guess that the distribution is sigmoidal, 
which means it is of the form $f(x) = a/(1+e^{-b(x-c)})$ for constants $a, b$ and $c$.
We determined a sigmoidal curve of best fit from the data, obtaining the function $f(x) = 1/(1+e^{1.53969(x-8.199181)})$, which
is plotted as the \textit{Best Fit} curve on the graph in Figure~\ref{fig-exp-wraparound}.
As the graph shows, the error between the best fit curve and the measured data is extremely small.
Although we do not have a justification for the shape of this distribution, we think it is worthwhile to put forth a hypothesis and study its consequences. 

We used $f(x)$ to extrapolate on the data to estimate the expected time until an error occurs in this workload for several bit-widths that would be impractical to test experimentally.
These extrapolations appear in the table on the right of Figure~\ref{fig-exp-wraparound}.
They should be taken with a grain of salt, since the error in our estimation likely grows quickly with $B$.
However, the extrapolations suggest that even $B = 32$ would be quite safe for this workload.
To our knowledge, this kind of experimental exploration of the practicality of unbounded sequence numbers has not previously been done.


\section{Related Work} \label{sec-related}

Several papers have presented universal constructions or strong primitives for non-blocking algorithms in which operations create descriptors~\cite{Israeli:1994, Anderson:1995, Afek:1995, Moir:1997, Harris:2002, LMS09, JP05:opodis, Marathe:2008, Attiya:2011, Brown:2013}. 
A subset of these algorithms employ ad-hoc techniques for reusing descriptors~\cite{Israeli:1994, Anderson:1995, Afek:1995, Moir:1997, Marathe:2008, LMS09, JP05:opodis}.
The rest assume descriptors will be allocated for each operation and eventually reclaimed.

Most of the ad-hoc techniques for reusing descriptors have significant downsides.
Some are complex and tightly integrated into the underlying algorithm, or rely on highly specific algorithmic properties (e.g., that descriptors contain only a single word). 
Others use synchronization primitives that atomically operate on large words, which are not available on modern systems, and are inefficient when implemented in software.
Yet others introduce high space overhead (e.g., by attaching a sequence number to \textit{every} memory word).
Some techniques also incur significant runtime overhead (e.g., by invoking expensive synchronization primitives just to \textit{read fields} of a descriptor).
Furthermore, these techniques give, at best, a vague idea of how one might reuse descriptors for arbitrary algorithms, and it would be difficult 
to determine how to use them in practice.
Our work avoids all of these downsides, and provides a concrete approach for transforming a large class of algorithms.


Barnes~\cite{Barnes:1993} introduced a technique for producing non-blocking algorithms that can be more efficient (and sometimes simpler) than the universal constructions described above.
With Barnes' technique, each operation creates a new descriptor.
Creating a new descriptor for each operation allows his technique to avoid the ABA problem while remaining conceptually simple.
Each operation conceptually locks each location it will modify by installing a pointer to its descriptor, and then performs it modifications and unlocks each location.
Barnes' technique is the inspiration for the class WCA. 
Many algorithms have since been introduced using variants of this technique~\cite{Harris:2002, Ellen:2010, Attiya:2011, Howley:2012, Shafiei:2013, Brown:2013, Brown:2014}.
Several of these algorithms are quite efficient in practice despite the overhead of creating and reclaiming descriptors.
Our technique can significantly improve the space and time overhead of such algorithms.

Recent work has identified ways to use hardware transactional memory (HTM) to reduce descriptor allocation~\cite{Brown:2017:TIF,Liu2015}.
Currently, HTM is supported only on recent Intel and IBM processors.
Other architectures, such as AMD, SPARC and ARM have not yet developed HTM support.
Thus, it is important to provide solutions for systems with no HTM support.
Additionally, even with HTM support, our approach is useful.
Current (and likely future) implementations of HTM offer no progress guarantees, so one must provide a lock-free fallback path to guarantee lock-free progress.
The techniques in~\cite{Brown:2017:TIF,Liu2015} accelerate the HTM-based code path(s), but do nothing to reduce descriptor allocations on the fallback path.
In some workloads, many operations run on the fallback path, so it is important for it to be efficient.
Our work provides a way to accelerate the fallback path, and is orthogonal to work that optimizes the fast path.

The \textit{long-lived renaming} (LLR) problem is related to our work (see~\cite{Brodsky2011} for a survey), but its solutions do not solve our problem.
LLR provides processes with operations to \textit{acquire} one unique resource from a pool of resources, and subsequently \textit{release} it.
One could imagine a scheme in which processes use LLR to reuse a small set of descriptors by invoking \textit{acquire} instead of allocating a new descriptor, and eventually invoking \textit{release}.
Note, however, that a descriptor can safely be released only once it can no longer be accessed by any other process.
Determining when it is safe to release a descriptor is as hard as performing general memory reclamation, and would also require delaying the release (and subsequent acquisition) of a descriptor (which would increase the number of descriptors needed).
In contrast, our weak descriptors eliminate the need for memory reclamation, and allow immediate reuse.

\begin{thesisonly}
\section{Summary}
\end{thesisonly}
\begin{thesisnot}
\section{Conclusion} 
\end{thesisnot}
\label{sec-desc-conclusion}

We presented a novel technique for transforming algorithms that throw away descriptors into algorithms that reuse descriptors.
Our experiments show that our transformation yields significant performance improvements for a lock-free $k$-CAS algorithm. 
Furthermore, our transformation reduces peak memory usage 
by nearly three orders of magnitude over the next best implementation.

We also applied our transformation to a lock-free implementation of \llt\ and \sct, and studied its performance by doing rigorous experiments on a lock-free binary search tree that uses \llt\ and \sct. 
These experiments demonstrated a significant performance advantage for our transformed algorithm in workloads that perform many updates.
Our transformed \llt\ and \sct\ algorithm has the potential to improve the performance of many algorithms that use \llt\ and \sct.

We believe our transformation can be used to improve the performance and memory usage of many other algorithms that throw away descriptors.
Moreover, we hope that our extended weak descriptor ADT will aid in the design of more efficient, complex algorithms, by allowing algorithm designers to benefit from the conceptual simplicity of throwing away descriptors without paying the practical costs of doing so.

\medskip

\section*{Acknowledgments}

We thank Faith Ellen for her gracious help in proving correctness for our transformations, and her insightful comments.
Some of this work was done while Trevor was a student at the University of Toronto, and while Maya was visiting him there.
This work was supported by the Israel Science Foundation (grant 1749/14), the Yad-HaNadiv foundation, the Natural Sciences and Engineering Research Council of Canada, and Global Affairs Canada.
Maya is supported in part by the Technion Hasso Platner Institute Research School.

%
\bibliography{bibliography}  
%

\end{document}